\documentclass[11pt]{article}

\usepackage[margin=1.25in]{geometry}
\usepackage{setspace}

\usepackage{etoolbox}
\makeatletter
\patchcmd{\abstract}{\small}{\normalsize\singlespacing}
  {}{\PackageError{ETformat}{Abstract formatting patch failed}{}}
\makeatother

\usepackage{amsmath}     
\usepackage{amssymb}     
\usepackage{amsthm}      
\usepackage{mathtools}   

\usepackage{booktabs}    
\usepackage{graphicx}
\usepackage{caption}
\usepackage{float}
\usepackage{placeins}
\usepackage{xcolor}

\definecolor{revisionblue}{RGB}{0,76,153}

\colorlet{revisioncolor}{black}

\newcommand{\rev}[1]{{\color{revisioncolor}#1}}
\newenvironment{revision}{\color{revisioncolor}}{}

\colorlet{revisioncolorTwo}{black}

\newcommand{\revTwo}[1]{{\color{revisioncolorTwo}#1}}

\usepackage[
    colorlinks=true,
    linkcolor=blue!45!black,
    citecolor=blue!45!black,
    urlcolor=blue!45!black
]{hyperref}

\usepackage[natbibapa]{apacite}
\AtBeginDocument{
    \renewcommand{\doi}[1]{\url{https://doi.org/#1}}
}

\usepackage{multirow}        
\usepackage{threeparttable}  
\usepackage{siunitx}         
\usepackage{rotating}

\theoremstyle{plain}                        
\newtheorem{theorem}{Theorem}[section]
\newtheorem{lemma}[theorem]{Lemma}
\newtheorem{appendixlemma}{Lemma}[section]
\newtheorem{proposition}[theorem]{Proposition}
\newtheorem{corollary}[theorem]{Corollary}

\theoremstyle{definition}                   

\newtheorem{assumption}[theorem]{Assumption}

\theoremstyle{remark}                       
\newtheorem{remark}[theorem]{Remark}

\newcommand{\wto}{\Rightarrow}                 

\newcommand{\eps}{\varepsilon}
\newcommand{\cbar}{\bar{c}}
\newcommand{\abar}{\bar{\alpha}}
\newcommand{\ty}{\tilde{y}}                    

\newcommand{\E}{\mathbb{E}}
\newcommand{\Op}{O_{\IP}}
\newcommand{\op}{o_{\IP}}

\newcommand{\IP}{\mathbb{P}}                   
\newcommand{\MZa}{M\!Z_{\alpha}}
\newcommand{\MZt}{M\!Z_{t}}
\newcommand{\MSB}{M\!S\!B}
\newcommand{\MP}{M\!P_{T}}

\begin{document}
\singlespacing

\title{A Characterization of the $M$-tests Under Nearly Integrated Nearly White Noise}
\author{Aidan Wardak\footnote{Email: wardakaidan@ufl.edu.}  and Sayar Karmakar\footnote{ Corresponding author. Email:sayarkarmakar@ufl.edu. The authors declare no competing interests.}}
\date{September 24, 2026}

\maketitle

\begin{abstract}
\normalsize\singlespacing
\noindent
We derive the limiting distributions of the $M$-test family of unit root statistics in the nearly integrated nearly white noise (NINW) framework introduced by \citet{NP1994} in the case of an unknown linear time trend. In the case of known long run variance (LRV), the limiting distributions of the $M^{GLS}$ tests are contaminated by additional noise terms as a result of quasi differencing whereas these terms are less present in the $M^{OLS}$ limiting distributions, both of which display conservative properties under conventional critical values. Furthermore, we prove the Gaussian power envelope in the NINW model is asymptotically equivalent to the standard envelope of \citet{ERS1996}, and that the oracle $M$-tests have inefficient power relative to this benchmark. We then derive the limiting distributions of the feasible statistics and show that the autoregressive estimate of the LRV commonly used overestimates the LRV, creating altered limiting distributions. Finally, finite sample simulations illustrate that, of the procedures considered, no uniformly satisfactory solution exists for handling a series with a large negative moving average coefficient.
\end{abstract}

\noindent\textbf{Keywords: } $M$-tests; GLS detrending; long-run variance 

\section{Introduction}
Testing for the presence of a unit root is common practice, but the reliability of many unit root tests drastically deteriorates under data-generating processes with a moving average coefficient near $-1$. The original Dickey-Fuller test \citep{DickeyFuller1979} was extended to hopefully accommodate an error process such as this by \citet{SaidDickey1984}. \citet{PP1988} developed a nonparametric correction to these tests based on estimation of the LRV (the spectral density at frequency zero). Yet, these tests display severe size distortions when the moving average coefficient is near $-1$, as documented in \citet{Schwert1989}, \citet{Agiakloglou1992EMPIRICALEO}, \citet{Leybourne1999}, and more. The $M$-test family of statistics originating with \citet{Stock1990} and further developed by \citet{NgPerron1996}, were specifically designed to improve robustness in precisely this type of data generating process (DGP). \citet{NP1998} subsequently analyzed the autoregressive LRV estimator used to implement these tests and found substantial size improvement over kernel-based methods like \citet{NeweyWest1987}.

\citet{NgPerron2001} combined the typical $M$-test construction with the GLS detrending developed by \citet{DuFourKing1991} and \citet{ERS1996} to further improve the power of the $M$-tests. Under the standard local-to-unity framework, the resulting $M^{GLS}$ tests display local asymptotic power functions close to that of the Gaussian power envelope while retaining more favorable size properties than conventional tests under a large negative moving average coefficient. Ng and Perron also emphasize that implementation of unit root tests depends greatly on the lag truncation parameter $k$ \citep{Agiakloglou1996,NgPerron1995,LOPEZ1997}. They find when the moving average coefficient is large and negative, a relatively large autoregression is necessary to control size, while conventional criteria like the AIC and BIC tend to select too few lags. The modified information criterion of \citet{NgPerron2001} was designed to address this problem. Furthermore, \citet{PQ2007} suggest using OLS detrended data to construct the modified information criterion to solve the problem of power reversal; that is, the power of GLS-based tests decrease when the autoregressive coefficient strays further from unity as documented in \citet{seo2005improving}.

The present paper revisits these results under the nearly integrated nearly white noise local asymptotic framework of \citet{NP1994} and \citet{NgPerron1996}. In this framework, the autoregressive coefficient approaches unity and the moving average coefficient approaches $-1$. As a result, the LRV collapses to 0. This degeneracy is the defining feature of this framework; quantities that are otherwise asymptotically negligible under the standard local-to-unity framework can become first order when measured relative to the vanishing LRV. \citet{NgPerron1996}, \citet{NP1998}, and \citet{NgPerron2001} analyze aspects of the $M$-tests and lag selection under this framework, but do not fully characterize the behavior of the $M$-tests with an unknown linear time trend, the associated Gaussian power bound, and the exact relative behavior of the feasible LRV estimator jointly. 

The first contribution of this paper is to derive the known LRV (oracle) distributions of the OLS and GLS based $M$-tests within the NINW framework. For OLS detrending, the resulting limiting distributions are a direct extension of the no-deterministics limiting distributions already derived by \citet{NgPerron1996}. However, the $M^{GLS}$ tests have limiting distributions that gain extra nuisance terms that remain asymptotically non-negligible as a result of quasi-differencing. The GLS limiting distributions therefore cannot be obtained by merely replacing the undetrended Ornstein-Uhlenbeck process with its standard GLS-detrended counterpart. The resulting OLS and GLS statistics display $\delta$ dependent support restrictions; for sufficiently small $\delta$, standard critical values lie outside these support bounds and the asymptotic rejection probability is hence exactly zero. Thus, the NINW problem can produce severe under-rejection even when the LRV is known.

The second contribution of this paper is to characterize the Gaussian power envelope under the NINW DGP. \citet{ERS1996} derive the standard Gaussian envelope under conditions that include a positive spectral density at frequency zero which fails under our framework. We find that the power envelope remains unchanged from the standard case and is hence not dependent on $\delta$. For every fixed $\delta >0$, exact whitening removes the nearly noninvertible MA component and the nuisance projection terms generated by the transformed intercept and trend either cancel between the null and point alternative or are offset by their normalization. Hence, the poor asymptotic power of the $M$-tests relative to this benchmark is a direct result of the inefficiency of the statistics, not an intrinsically harder testing problem. 

The third contribution concerns the feasible implementation of the tests in which the LRV must be estimated. \citet{NP1998} prove that $s^2_{AR} \stackrel{\IP}{\to} 0$ under NINW. \citet{NgPerron2001} subsequently show that, under NINW, the conditions $k^2 s^{2}_{AR} = O_\IP(1)$ and $Ts^2_{AR} = O_\IP(1)$ can both hold only at the lag rate $k/\sqrt{T} \to \kappa \in (0,\infty)$. Neither result, however, determines whether $s^2_{AR}$ is ratio consistent with the LRV, which is necessary for oracle and feasible limiting distribution equivalence as the LRV collapses to zero. We show that at this lag rate, the ratio of $s^2_{AR}$ and the true LRV approaches $\coth^2(\kappa\delta/2)$ asymptotically, and hence, $s^2_{AR}$ is not ratio consistent. This sharpens the earlier NINW consistency results by identifying the multiplicative distortion that remains at the rate required to keep the feasible tests bounded.

The asymptotic analysis is complemented by finite sample Monte Carlo experiments designed to examine the interaction between detrending, lag selection, and the admissible upper bound on $k$. Increasing the conventional lag cap to allow orders proportional to $\sqrt{T}$ can materially reduce the most extreme NINW oversizing, but this improvement comes with substantial losses in size-adjusted power. Furthermore, GLS based construction of the modified information criterion results in favorable size properties relative to an OLS based construction, but then the tests are subject to the power reversal problem. Finally, the $M^{GLS}$  and $M^{OLS}$ tests display similar size-adjusted power. These results illustrate a persistent size-power tradeoff rather than any uniformly satisfactory implementation.

The remainder of the paper is organized as follows. Section \ref{sec:model} introduces the NINW model, GLS detrending, and the statistics. Section \ref{sec:oracle} derives the oracle limiting distributions of the $M$-tests under NINW. Section \ref{sec:env} characterizes the Gaussian power envelope under NINW and analyzes the asymptotic power of the oracle $M$-tests. Section \ref{sec:feasible_ld} derives the feasible limiting distributions of the $M$-tests under NINW. Section \ref{sec:finite_sample} reports finite sample results. Section \ref{sec:conclusion} concludes. Proofs are relegated to the appendices.

\section{Model, GLS detrending, and statistics}\label{sec:model}

\subsection{Nearly integrated nearly white noise}
\begin{revision}
The data-generating process used throughout the paper is
\begin{equation}\label{eq:model}
\begin{aligned}
 y_t&=\beta_0+\beta_1t+u_t,\\
 u_t&=\rho_T u_{t-1}+v_t,\qquad \rho_T=1+\frac{c}{T},\\
 v_t&=e_t+\theta_Te_{t-1},\qquad \theta_T=-1+\frac{\delta}{\sqrt T},
 \qquad t=1,\ldots,T.
\end{aligned}
\end{equation}
The stochastic conditions used to derive the asymptotic results are collected next.

\begin{assumption}\label{ass:innovations}
The initialization is $u_0=e_0=0$. The innovations $\{e_t\}_{t=1}^T$ are i.i.d. with
\[
 \E e_t=0,\qquad \E e_t^2=\sigma_e^2\in(0,\infty),\qquad \E|e_t|^4<\infty.
\]
The local parameters satisfy fixed $c\leq0$ and fixed $\delta>0$ as $T\to\infty$. Whenever GLS detrending is used, $\cbar<0$ is fixed. Note that Gaussianity is not imposed here.
\end{assumption}

Observe the autoregressive coefficient approaches one at rate $T^{-1}$, while the moving-average coefficient approaches $-1$ at rate $T^{-1/2}$ (equivalently, the corresponding MA root approaches $+1$). This is the classical NINW model of \citet{NP1994} and \citet{NgPerron1996}, augmented here by an intercept and linear trend. Its defining feature is
\begin{equation}
    \omega^2_T = \sigma^2_e(1+\theta_T)^2 = \frac{\sigma^2_e \delta^2}{T}, \quad\text{so} \quad T\omega^2_T = \sigma^2_e\delta^2. \label{eq:lrv}
\end{equation}
Hence the LRV of $v_t$ tends to zero at rate $T^{-1}$.
\end{revision}

\subsection{GLS detrending}
Let $z_t = (1,t)'$ denote the deterministic regressors. Fix $\cbar<0$ and define $\abar = 1+\cbar/T$. For any series $y_t$, define its quasi-differenced counterpart by
\begin{align}
    y_1^{\abar} = y_1, \qquad y_t^{\abar} = y_t - \abar y_{t-1}, \quad t \geq 2, \label{eq:quasi_w}\\
    z_1^{\abar} = z_1, \qquad z_t^{\abar} = z_t - \abar z_{t-1}, \quad t \geq 2. \label{eq:quasi_z}
\end{align}
The GLS estimator of the deterministic coefficients is then
\begin{equation}
    \hat{\psi} = \arg \min_\psi \sum_{t=1}^T (y^{\abar}_t - \psi'z_t^{\abar})^2, \label{eq:gls_minimizer}
\end{equation}
so that the GLS detrended series is $\ty_t = y_t - \hat{\psi}'z_t.$ \citet{ERS1996} recommended $\cbar = -13.5$ for an unknown linear time trend. Observe that the first observation is left undifferenced; this fact is important for the limiting distributions of the statistics.

\subsection{The statistics}\label{sec:stats}
The $M^{GLS}$ family of statistics is defined as 
\begin{align}
    &\MZa^{GLS} = \frac{T^{-1}\ty^2_T - s^2_{AR}}{2T^{-2}\sum_{t=2}^T\ty^2_{t-1}}, \qquad \MSB^{GLS} = \left(\frac{T^{-2}\sum_{t=2}^T\ty^2_{t-1}}{s^2_{AR}} \right)^{1/2}, \label{eq:mza_msb}\\
    &\MZt^{GLS} = \MZa^{GLS} \cdot \MSB^{GLS}, \qquad \MP^{GLS} = \frac{\cbar^2 T^{-2} \sum_{t=2}^T\ty^2_{t-1} + (1-\cbar)T^{-1}\ty^2_T}{s^2_{AR}}, \label{eq:mzt_mpt}
\end{align}
where $s^2_{AR}$ is an autoregressive estimate of the LRV to be handled in more detail later. For their OLS counterparts, replace $\ty$ in the level moments by OLS detrended $y$; the common estimator $s^2_{AR}$ is still constructed from GLS residuals. For the following section, we let $s^2_{AR}$ be replaced by the true LRV $\omega_T^2$ to understand the behavior of the statistics when they are gifted the true LRV.

\section{Oracle limiting distributions}\label{sec:oracle}

\subsection{Limit objects}
Before expressing the limiting distributions, it is important to define some terms that will appear in the distributions. Let $W$ denote standard Brownian motion on [0,1] and, for $c \in \mathbb{R}$, let 
\begin{align}
    J_c(r) = \int_0^r e^{(r-s)c} dW(s) \label{eq:jcr}\\
    J_c^\tau(r) = J_c(r) - \hat{a} - \hat{b}r, \label{eq:jcr_tau}
\end{align}
where $\hat{b} = 12 \int_0^1(s-\frac{1}{2})J_c(s)ds$, and $\hat{a} = \int_0^1J_c(s)ds - \frac{1}{2}\hat{b}$. 

Let $\varepsilon_1$ and $\varepsilon_{\infty}$ denote the limiting variables associated with the standardized first and last sample innovations $e_1/\sigma_e$ and $e_T/\sigma_e$, respectively, as constructed in Lemma \ref{A.0}. They are independent draws from the distribution of $e_t/\sigma_e$ and are independent of the Brownian motion generating $J_c$. Note that our $\eps_\infty $ is equivalent to $e_\infty$ in \citet[Thm.~3.1]{NgPerron1996}.

The terms generated by $\cbar$ are
\begin{equation}
    \lambda = \frac{1-\cbar}{1-\cbar + \cbar^2/3}, \qquad q_{\cbar} = \lambda\eps_{\infty} - \frac{3-\lambda}{2}\eps_1. \label{eq:limit_constants}
\end{equation}
Finally, the typical GLS functional
\begin{equation}
    V_{c,\cbar}(r) = J_c(r) - r \left( \lambda J_c(1) + 3(1-\lambda) \int_0^1 sJ_c(s)ds \right). \label{gls_func}
\end{equation}

\subsection{Limiting distributions of sample moments}\label{sec:lim_of_sm}

We first derive the limiting distributions of the two sample moments fundamental to the $M$-test family. Note that $\wto$ denotes convergence in distribution throughout the paper.
\begin{proposition}[Sample moment limits under OLS]\label{prop:sm_ols} Let $\{y_t\}$ be defined by \eqref{eq:model} and Assumption~\ref{ass:innovations} hold. For OLS-detrended data,
\begin{align}
    &T^{-1}\sum_{t=2}^T \ty^2_{t-1} \wto \sigma^2_e\left(1+\delta^2 \int_0^1 J_c^{\tau}(r)^2 dr \right) \label{eq:sm1_ols} \\
    &\ty_T \wto \sigma_e(\eps_{\infty} + \delta J_c^{\tau}(1)). \label{eq:sm2_ols}
\end{align}

\end{proposition}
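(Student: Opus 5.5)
The plan is to decompose $u_t$ so that the nearly noninvertible MA structure becomes explicit. Since $v_t = e_t - e_{t-1} + (\delta/\sqrt T)e_{t-1}$, we have
\[
u_t = \sum_{j=1}^t \rho_T^{t-j} v_j .
\]
Summation by parts, using $e_0=0$, gives
\[
u_t = e_t + (\rho_T - 1)\sum_{j=1}^{t-1}\rho_T^{t-1-j}e_j + \frac{\delta}{\sqrt T}\sum_{j=1}^{t-1}\rho_T^{t-1-j}e_j .
\]
The term carrying $(\rho_T-1)=c/T$ is $O_{\IP}(T^{-1/2})$ uniformly in $t$, so it is negligible. Writing $x_t=\sum_{j\le t-1}\rho_T^{t-1-j}e_j$, the standard near-integrated FCLT (which holds under the i.i.d. finite-variance conditions of Assumption~\ref{ass:innovations}) gives $T^{-1/2}x_{\lfloor rT\rfloor}\wto \sigma_e J_c(r)$. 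Hence
\[
u_t = e_t + \frac{\delta}{\sqrt T}\,x_t + R_t, \qquad \max_t |R_t| = o_{\IP}(1),
\]
where $\delta T^{-1/2}x_t$ is of exact order one and behaves like $\sigma_e\delta J_c(t/T)$.

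OLS detrending is invariant to $\beta_0,\beta_1$, so $\ty_t = u_t - \hat a_T - \hat b_T t$, where $(\hat a_T,\hat b_T)$ are the OLS coefficients of $u_t$ on $(1,t)$. By linearity, these coefficients split into three pieces:
\begin{itemize}
\item the coefficients from regressing $e_t$ on $(1,t)$, which are $O_{\IP}(T^{-1/2})$ and $O_{\IP}(T^{-3/2})$ respectively, so their fitted values are uniformly $o_{\IP}(1)$;
\item the coefficients from regressing $\delta T^{-1/2}x_t$ on $(1,t)$, which by the continuous mapping theorem converge jointly to $\sigma_e\delta(\hat a + \hat b\, t/T)$ with $\hat a,\hat b$ the functionals defined in \eqref{eq:jcr_tau};
\item the negligible contribution from $R_t$.
\end{itemize}
Therefore
\[
\ty_t = e_t + \sigma_e\delta\,\tfrac{1}{\sqrt T}\,\tilde x_t + o_{\IP}(1) \quad\text{uniformly in } t,
\]
where $T^{-1/2}\tilde x_{\lfloor rT\rfloor}\wto J_c^\tau(r)$ after the scaling by $\sigma_e$ is absorbed.

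For \eqref{eq:sm1_ols}, square and average:
\[
T^{-1}\sum \ty_{t-1}^2 = T^{-1}\sum e_{t-1}^2 + 2T^{-1}\sum e_{t-1}\,\delta T^{-1/2}\tilde x_{t-1} + \delta^2 T^{-1}\sum (T^{-1/2}\tilde x_{t-1})^2 + o_{\IP}(1).
\]
Each term is handled separately:
\begin{itemize}
\item The first term tends to $\sigma_e^2$ by the weak law of large numbers.
\item The last term converges to $\sigma_e^2\delta^2\int_0^1 J_c^\tau(r)^2\,dr$ by the continuous mapping theorem.
\item The cross term is $T^{-3/2}\sum e_{t-1}\tilde x_{t-1}$. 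Before detrending, $x_{t-1}$ depends only on $e_1,\dots,e_{t-2}$, so $\sum e_{t-1}x_{t-1}=O_{\IP}(T)$ by the martingale structure and this contribution is $O_{\IP}(T^{-1/2})$. The detrending adjustments contribute $O_{\IP}(T^{1/2})\,T^{-1}\sum e_t\,(1,t/T)=O_{\IP}(T^{-1/2})$ after normalization.
\end{itemize}
The cross terms involving the $o_{\IP}(1)$ remainders need care: a uniform $o_{\IP}(1)$ bound times $e_t$ averages to $o_{\IP}(1)$ by Cauchy--Schwarz, since $T^{-1}\sum e_t^2 = O_{\IP}(1)$.

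For \eqref{eq:sm2_ols}, evaluate at $t=T$:
\[
\ty_T = e_T + \sigma_e\delta J_c^\tau(1) + o_{\IP}(1)\quad\text{in distribution}.
\]
The required joint convergence of $\bigl(e_T/\sigma_e,\ T^{-1/2}x_{\lfloor\cdot T\rfloor}\bigr)$ to $\bigl(\eps_\infty, J_c\bigr)$ with independent components is exactly what Lemma~\ref{A.0} supplies: $e_T$ is independent of $e_1,\dots,e_{T-1}$, and $x_T$ depends only on those, so asymptotic independence follows. The main obstacle I expect is bookkeeping that the remainder terms are uniformly negligible. In particular, I need to confirm that $e_t$, which is $O_{\IP}(1)$ but not uniformly small ($\max|e_t|$ grows), never enters through a supremum but only through averages or single fixed indices. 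I would therefore write every remainder as either $\max_t|R_t|$ with $R_t$ excluding $e_t$, or as an averaged cross product.
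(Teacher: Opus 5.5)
Your proposal is correct and follows essentially the paper's route. You decompose $u_t$ into an innovation piece plus a $T^{-1/2}$-scaled near-integrated piece, use linearity of OLS detrending, and show the line fitted to the innovations is uniformly $O_{\IP}(T^{-1/2})$. You then apply the FCLT and continuous mapping to the detrended near-integrated piece, bound the cross term via the martingale structure plus the $O_{\IP}(T^{1/2})$ and $O_{\IP}(T^{-1/2})$ fitted coefficients, and invoke Lemma~\ref{A.0} at the endpoint.

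The differences are cosmetic. You write $u_t=e_t+(\rho_T+\theta_T)X_{t-1}$, the form the paper itself uses in Appendix~C, and peel off the $c/T$ part as a uniform remainder. This makes $e_T$ exactly independent of the near-integrated component and the cross term an exact martingale sum. The paper instead uses $u_t=a_Te_t+b_TX_t$ and handles $\sum e_tX_t$ through $\rho_T\sum e_tX_{t-1}+\sum e_t^2$. One small fix: keep your $\sigma_e$ normalization of $\tilde x_t$ consistent between your displays.
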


\begin{proposition}[Sample moment limits under GLS]\label{prop:sm_gls} Let $\{y_t\}$ be defined by \eqref{eq:model} and Assumption~\ref{ass:innovations} hold, with fixed $\cbar<0$. For GLS-detrended data,

\begin{align}
    &T^{-1}\sum_{t=2}^T \ty^2_{t-1} \wto \sigma^2_e \left[1+\int_0^1(\delta V_{c,\cbar}(r) - \eps_1 - q_{\cbar}r)^2dr \right] \label{eq:sm1_gls} \\
    &\ty_T \wto \sigma_e \Big[(1-\lambda)\eps_{\infty} + \frac{1-\lambda}{2}\eps_1 + \delta V_{c, \cbar}(1)  \Big]. \label{eq:sm2_gls}
\end{align}    
\end{proposition}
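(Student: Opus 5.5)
The plan is to reduce everything to an explicit decomposition of $u_t$ into a white-noise part and a slowly varying part, and then track how GLS detrending acts on each. Since $v_t=(e_t-e_{t-1})+(\delta/\sqrt T)e_{t-1}$ and $u_0=e_0=0$, summation by parts of $u_t=\sum_{j=1}^t\rho_T^{t-j}v_j$ gives
\begin{equation*}
u_t = e_t + \frac{c}{T}\sum_{j=1}^{t-1}\rho_T^{t-1-j}e_j+\frac{\delta}{\sqrt T}\sum_{j=1}^{t-1}\rho_T^{t-1-j}e_j .
\end{equation*}
The middle term is $O_{\IP}(T^{-1/2})$ uniformly in $t$. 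The last term converges to $\sigma_e\delta J_c(r)$ for $t=\lfloor rT\rfloor$. Hence $u_t = e_t + \sigma_e\delta J_c(t/T) + o_{\IP}(1)$ uniformly, so $u_t$ is $O_{\IP}(1)$ and has no $\sqrt T$ scaling.

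I will use Lemma~\ref{A.0} to obtain the joint convergence of $(e_1/\sigma_e,\,e_T/\sigma_e,\,T^{-1/2}\sum_{j\le \lfloor rT\rfloor}e_j/\sigma_e)$ to $(\eps_1,\eps_\infty,W(r))$, with the three components independent. The point is that $e_1$ and $e_T$ are single innovations, asymptotically independent of the partial-sum process.

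Next I analyse $\hat\psi-\psi=\bigl(\sum z_t^{\abar}z_t^{\abar\prime}\bigr)^{-1}\sum z_t^{\abar}u_t^{\abar}$ with $u_t^{\abar}=v_t+\frac{c-\cbar}{T}u_{t-1}$ for $t\ge2$ and $u_1^{\abar}=u_1=e_1$. Since $z_1^{\abar}=(1,1)'$ and $z_t^{\abar}=(-\cbar/T,\,1-\cbar(t-1)/T)'$, I use the standard ERS scaling $\mathrm{diag}(1,T^{-1})$.

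For the intercept, the first observation dominates, so $\hat\beta_0-\beta_0=e_1+o_{\IP}(1)$, which is the source of the $-\eps_1$ term.

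For the slope, $T(\hat\beta_1-\beta_1)$ is asymptotically $(1-\cbar+\cbar^2/3)^{-1}$ times $\sum_{t\ge2}(1-\cbar t/T)u_t^{\abar}$, minus a correction from the intercept/first-observation cross term. The delicate piece is $\sum_t(1-\cbar t/T)v_t$. Summation by parts of $\sum w_t(e_t-e_{t-1})$ produces boundary terms $w_Te_T-w_2e_1$ of order one, plus $\frac{\cbar}{T}\sum e_t=O_{\IP}(T^{-1/2})$. Together with $(\delta/\sqrt T)\sum w_te_{t-1}$ and $\frac{c-\cbar}{T}\sum w_tu_{t-1}$, these converge to integrals of $\delta J_c$. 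In that limit the $J_c$ pieces reassemble into $\lambda J_c(1)+3(1-\lambda)\int sJ_c$, after the usual integration by parts $\int(1-\cbar s)dJ_c=\dots$. The boundary terms give $\lambda\eps_\infty$ and a multiple of $\eps_1$, which after adding the intercept cross-term correction should give exactly $q_{\cbar}$.

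I expect this bookkeeping of the $O(1)$ boundary innovations to be the main obstacle, in particular matching the coefficient $-(3-\lambda)/2$ on $\eps_1$. I would verify it by writing the $2\times2$ limiting design matrix $\begin{pmatrix}1&1\\1&1-\cbar+\cbar^2/3\end{pmatrix}$ explicitly and inverting it.

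With these in hand, $\ty_t=u_t-(\hat\beta_0-\beta_0)-(\hat\beta_1-\beta_1)t$, so uniformly in $t=\lfloor rT\rfloor$
\begin{equation*}
\ty_t = e_t + \sigma_e\bigl(\delta V_{c,\cbar}(r)-\eps_1-q_{\cbar}r\bigr)+o_{\IP}(1).
\end{equation*}
The proof then splits into the two claims.
\begin{itemize}
\item \textbf{Last observation.} At $t=T$, $e_T$ contributes $\sigma_e\eps_\infty$. Combined with $-\eps_1-q_{\cbar}$, this gives $(1-\lambda)\eps_\infty+\frac{1-\lambda}{2}\eps_1$, which is \eqref{eq:sm2_gls}.
\item \textbf{Sum of squares.} Expanding the square in $T^{-1}\sum\ty_{t-1}^2$:
\begin{itemize}
\item $T^{-1}\sum e_{t-1}^2\to\sigma_e^2$ by the LLN, using the finite fourth moment.
\item The cross terms $T^{-1}\sum e_{t-1}\,g(t/T)$ with $g$ a continuous random function vanish. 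The part of $g$ built from $J_c$ is handled by the martingale-difference structure, with a one-step lag adjustment that is $O_{\IP}(T^{-1/2})$. The part built from $\eps_1,q_{\cbar}$ is handled since $T^{-1}\sum e_t(t/T)^k\to0$.
\item The remaining term converges to $\sigma_e^2\int_0^1(\delta V_{c,\cbar}-\eps_1-q_{\cbar}r)^2dr$ by the continuous mapping theorem.
\end{itemize}
This yields \eqref{eq:sm1_gls}.
\end{itemize}
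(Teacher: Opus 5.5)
Your overall route is the paper's. The decomposition $u_t=e_t+(\rho_T+\theta_T)X_{t-1}$ is Lemma~\ref{lem:A1} in the form used in Appendix~C. The boundary innovations enter through summation by parts in the GLS scores, as in \eqref{eq:A.19}, and the assembly $\ty_t=e_t+\zeta_{T,t}$ with vanishing cross terms is the paper's proof. Your intercept limit, the $J_c$ part of the slope, and the endpoint algebra are correct. The cross-term bounds are also correct, except that the fitted-slope part of $V_{c,\cbar}$ is not adapted, so it needs your $\Op(1)$-coefficient argument rather than the martingale one.

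The gap is at the step you flag as the crux, the $\eps_1$-weight in $q_{\cbar}$: the device you propose there does not work. The matrix $\begin{pmatrix}1&1\\1&1-\cbar+\cbar^2/3\end{pmatrix}$ is not the limit of the GLS design under any diagonal rescaling. The exact entries satisfy $m_{11}\to1$, $m_{12}\to1-\cbar+\cbar^2/2$ and $m_{22}/T\to1-\cbar+\cbar^2/3$. The rows $t\ge2$ contribute $\sum_{t\ge2}(-\cbar/T)\{1-\cbar(t-1)/T\}\to-\cbar+\cbar^2/2$ to $m_{12}$, which is not negligible. The normalization that makes the design converge, $\mathrm{diag}(1,T^{-1/2})$, gives the diagonal limit $\mathrm{diag}(1,1-\cbar+\cbar^2/3)$, and that discards exactly the correction you need. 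Inverting your matrix literally would not even reproduce your own correct intercept limit $\hat\beta_0-\beta_0=e_1+\op(1)$. Any computation that keeps only the first-observation part of $m_{12}$ (the $1$) gives an $\eps_1$-weight of $-1/(1-\cbar+\cbar^2/3)$ instead of $-(3-\lambda)/2$.

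The reason is specific to NINW. In ERS, $u_t=\Op(\sqrt T)$, so the trend score is $\Op(\sqrt T)$ and the off-diagonal correction is negligible. Here $u_t=\Op(1)$, so $N_1=\Op(1)$ and the slope error is $\Op(T^{-1})$. In
\[
T(\hat\beta_1-\beta_1)=\frac{m_{11}N_1-m_{12}N_0}{\Delta_T/T}
\]
the term $m_{12}N_0$ is therefore of the same order as $N_1$. You must keep this exact finite-$T$ inverse, as Lemmas~\ref{lem:A.3}--\ref{lem:A.4} do.

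With it the bookkeeping closes:
\begin{itemize}
\item $N_0=e_1+\op(1)$.
\item In $N_1=e_1+\sum_{t\ge2}h_tu_t^{\abar}$, the first-observation term cancels against the summation-by-parts boundary term $-h_2e_1$. This leaves $(1-\cbar)e_T$ plus the $J_c$ part.
\item So the whole $\eps_1$-weight comes from $-m_{12}N_0$ and equals $-(1-\cbar+\cbar^2/2)/(1-\cbar+\cbar^2/3)=-(3-\lambda)/2$.
\item The $\eps_\infty$-weight is $(1-\cbar)/(1-\cbar+\cbar^2/3)=\lambda$.
\item $\hat\beta_0-\beta_0=(m_{22}N_0-m_{12}N_1)/\Delta_T=e_1+\op(1)$.
\end{itemize}
With this repair, the rest of your argument goes through unchanged.
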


\begin{remark}
As can be seen in the previous propositions, we can simply replace the undetrended process $J_c$, with the corresponding detrended process $J_c^{\tau}$ in the no-deterministics sample moment limits of \citet{NgPerron1996} to obtain the limits under an unknown linear time trend. This works as the OLS fitted line to the innovations is asymptotically negligible, so only the fitted line to the near-integrated component survives in addition to $\varepsilon_\infty$. On the other hand, under GLS detrending, the fitted line to the innovations survives asymptotically, introducing additional nuisance terms to the sample moment limits. Consequently, the GLS limits cannot be obtained from the OLS limits by simply replacing $J_c^{\tau}$ with its GLS detrended counterpart $V_{c,\cbar}.$
\end{remark}

\subsection{Limiting distributions of statistics}\label{sec:limit_distributions}

\begin{theorem}[Oracle $M^{OLS}$ limits under NINW]\label{thm:oracle1} Let $\{y_t\}$ be defined by \eqref{eq:model} and Assumption~\ref{ass:innovations} hold. For OLS-detrended data and known LRV,
\begin{align}
    \MZa^{OLS} &\wto \frac{(\eps_{\infty}+\delta J_c^\tau(1))^2 - \delta^2}{2 \left(1+\delta^2 \int_0^1J_c^{\tau}(r)^2dr\right)}, \notag\\
    \MSB^{OLS} &\wto \frac{1}{\delta}\left(1+\delta^2 \int_0^1 J_c^{\tau}(r)^2dr\right)^{1/2} ,\label{eq:mza_msb_limt_ols} \\
    \MZt^{OLS} &\wto \frac{(\eps_{\infty}+\delta J_c^{\tau}(1))^2 - \delta^2}{2\delta \left(1 + \delta^2 \int_0^1 J_c^{\tau}(r)^2dr\right)^{1/2}} \label{eq:mzt_limit_ols}
\end{align}
These limiting distributions are precisely the limiting distributions derived by \citet{NgPerron1996} in the no-deterministics case, except we replace $J_c(r)$ with $J_c^{\tau}(r)$ whenever the former appears.
\end{theorem}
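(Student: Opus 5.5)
The plan is to reduce Theorem~\ref{thm:oracle1} to Proposition~\ref{prop:sm_ols} plus the continuous mapping theorem, once the oracle statistics are rewritten in terms of the normalizations used there. With $s^2_{AR}$ replaced by $\omega_T^2=\sigma_e^2\delta^2/T$ from \eqref{eq:lrv}, multiply numerator and denominator of each statistic by $T$:
\[
\MZa^{OLS}=\frac{\ty_T^2 - \sigma_e^2\delta^2}{2T^{-1}\sum_{t=2}^T\ty_{t-1}^2},\qquad
\MSB^{OLS}=\left(\frac{T^{-1}\sum_{t=2}^T\ty_{t-1}^2}{\sigma_e^2\delta^2}\right)^{1/2}.
\]
The key observation is that the factor $T$ produced by the vanishing LRV exactly matches the normalizations in \eqref{eq:sm1_ols}--\eqref{eq:sm2_ols}. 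There, $\ty_T$ is $O_{\IP}(1)$ without rescaling and the sum of squares is scaled by $T^{-1}$ rather than $T^{-2}$.

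Next I will upgrade Proposition~\ref{prop:sm_ols} to a joint statement: $\bigl(\ty_T,\ T^{-1}\sum_{t=2}^T\ty_{t-1}^2\bigr)$ converges jointly to the pair of limits stated there, built from the same $(W,\eps_\infty)$. This is the only genuinely delicate point. Marginal convergence does not suffice for the continuous mapping theorem, and both limits involve $J_c^\tau$, so the dependence between them must be tracked.

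To obtain joint convergence I will rely on the construction in Lemma~\ref{A.0}, which gives the decomposition used in proving Proposition~\ref{prop:sm_ols}. Writing $v_t=\Delta e_t+(\delta/\sqrt T)e_{t-1}$, the undetrended $u_t$ is $e_t$ plus $(\delta/\sqrt T)$ times a near-integrated partial sum, up to negligible terms. Hence
\[
\ty_t=e_t+\sigma_e\delta J_c^\tau(t/T)+o_{\IP}(1),
\]
with the error uniform in the sense needed for the moments. The pair $(e_T/\sigma_e,\ T^{-1/2}\sum e_s/\sigma_e)$ converges jointly to $(\eps_\infty,W)$ with $\eps_\infty$ independent of $W$, because a single innovation is asymptotically independent of the partial-sum process. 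In the sum of squares, $T^{-1}\sum e_{t-1}^2\to\sigma_e^2$ by the LLN. The cross term $T^{-1}\sum e_{t-1}J_c^\tau((t-1)/T)$ is $o_{\IP}(1)$ because it is a martingale-type average; for the trend-fitted part, I will invoke the remark that the OLS fitted line to the innovations is negligible. Both moments are therefore continuous functionals of the same joint limit, and joint convergence follows.

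Finally I apply the continuous mapping theorem to
\[
g(a,b)=\frac{a^2-\sigma_e^2\delta^2}{2b},\qquad h(b)=\frac{b^{1/2}}{\sigma_e\delta},\qquad g\cdot h.
\]
These maps are continuous on $b>0$, and the limit of the sum of squares is at least $\sigma_e^2>0$ almost surely, so no boundary issue arises. Cancelling $\sigma_e^2$ gives the $\MZa$ limit, and the $\MSB$ limit gives the factor $\delta^{-1}(1+\delta^2\int_0^1 J_c^\tau(r)^2dr)^{1/2}$. For $\MZt$, the product $g\cdot h$ simplifies: one power $(1+\delta^2\int_0^1 J_c^\tau(r)^2dr)^{1/2}$ cancels, leaving $2\delta$ times its square root in the denominator, which is \eqref{eq:mzt_limit_ols}. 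Comparing with \citet[Thm.~3.1]{NgPerron1996}, where $e_\infty=\eps_\infty$, confirms that the expressions coincide after substituting $J_c^\tau$ for $J_c$. The main obstacle is the joint-convergence step; everything else is algebra.
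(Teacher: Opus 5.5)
Your proposal is correct and follows the paper's proof. Using $T\omega_T^2=\sigma_e^2\delta^2$ to rescale turns each statistic into a continuous function of $\bigl(\ty_T,\,T^{-1}\sum_{t=2}^T\ty_{t-1}^2\bigr)$, and Proposition~\ref{prop:sm_ols} plus continuous mapping, with $\MZt^{OLS}=\MZa^{OLS}\MSB^{OLS}$, gives the limits. The paper leaves joint convergence implicit, since both moment limits in the proof of Proposition~\ref{prop:sm_ols} come from the same jointly convergent objects in Lemmas~\ref{A.0} and~\ref{lem:A2}. Your extra step therefore spells out what the paper uses tacitly rather than adding a new ingredient.
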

\begin{revision}
\begin{proof}
Let $D_T^{OLS}=T^{-1}\sum_{t=2}^T\ty_{t-1}^2$. Since $T\omega_T^2=\sigma_e^2\delta^2$, the definitions in \eqref{eq:mza_msb} imply
\[
 \MZa^{OLS}=\frac{\ty_T^2-T\omega_T^2}{2D_T^{OLS}},\qquad
 \MSB^{OLS}=\left(\frac{D_T^{OLS}}{T\omega_T^2}\right)^{1/2}.
\]
Apply Proposition~\ref{prop:sm_ols} and the continuous-mapping theorem; the limit for $\MZt^{OLS}$ follows from $\MZt^{OLS}=\MZa^{OLS}\MSB^{OLS}$.
\end{proof}
\end{revision}

\begin{theorem}[Oracle $M^{GLS}$ limits under NINW]\label{thm:oracle2} Let $\{y_t\}$ be defined by \eqref{eq:model} and Assumption~\ref{ass:innovations} hold, with fixed $\cbar<0$. For GLS-detrended data and known LRV,
\begin{align}
    \MZa^{GLS} &\wto \frac{[(1-\lambda)\eps_{\infty} + \frac{1-\lambda}{2}\eps_1+ \delta V_{c,\cbar}(1)]^2 - \delta^2}{2\left(1+ \int_0^1[\delta V_{c,\cbar}(r) - \eps_1 - q_{\cbar}r]^2dr\right)}, \label{eq:mza_limit_gls} \\
    \MSB^{GLS} &\wto \frac{1}{\delta}\left(1 + \int_0^1 [\delta V_{c,\cbar}(r) - \eps_1 - q_{\cbar}r]^2dr \right)^{1/2} \label{eq:msb_limit_gls} \\
    \MZt^{GLS} &\wto \frac{[(1-\lambda)\eps_{\infty} + \frac{1-\lambda}{2}\eps_1+ \delta V_{c,\cbar}(1)]^2 - \delta^2}{2\delta \left( 1 + \int_0^1[\delta V_{c,\cbar}(r) - \eps_1 - q_{\cbar}r]^2dr\right)^{1/2}}, \label{eq:mzt_limit_gls} \\
    \MP^{GLS} &\wto \frac{1}{\delta^2} \Bigg\{\cbar^2 \left(1 + \int_0^1[\delta V_{c,\cbar}(r) - \eps_1 - q_{\cbar}r]^2dr\right) \notag\\
    &\qquad{} + (1-\cbar) \left( (1-\lambda)\eps_{\infty} + \frac{1-\lambda}{2}\eps_1 + \delta V_{c,\cbar}(1) \right)^2 \Bigg\} \label{eq:mpt_limit_gls}
\end{align}
    
\end{theorem}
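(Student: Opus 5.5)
The proof mirrors that of Theorem~\ref{thm:oracle1}: every statistic is a continuous function of the two GLS sample moments in Proposition~\ref{prop:sm_gls}, once the oracle LRV has been normalized. Let
\[
D_T^{GLS}=T^{-1}\sum_{t=2}^T\ty_{t-1}^2 .
\]
Replacing $s^2_{AR}$ by $\omega_T^2$ in \eqref{eq:mza_msb}--\eqref{eq:mzt_mpt} and multiplying numerator and denominator by $T$, we use $T\omega_T^2=\sigma_e^2\delta^2$ from \eqref{eq:lrv} to write
\[
\MZa^{GLS}=\frac{\ty_T^2-\sigma_e^2\delta^2}{2D_T^{GLS}},\qquad
\MSB^{GLS}=\left(\frac{D_T^{GLS}}{\sigma_e^2\delta^2}\right)^{1/2},\qquad
\MP^{GLS}=\frac{\cbar^2D_T^{GLS}+(1-\cbar)\ty_T^2}{\sigma_e^2\delta^2}.
\]
No additional rescaling is needed. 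Both $D_T^{GLS}$ and $\ty_T$ are $O_\IP(1)$ under NINW, rather than of order $T$ and $T^{1/2}$, and $T\omega_T^2$ is exactly constant.

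The one substantive input is that the convergences \eqref{eq:sm1_gls} and \eqref{eq:sm2_gls} hold \emph{jointly}, so that $(D_T^{GLS},\ty_T)$ converges to the pair of limits built on the same $(W,\eps_1,\eps_\infty)$. Proposition~\ref{prop:sm_gls} is stated marginally, so I would go back to its proof. There both moments are expressed as continuous functionals of the same triple: the partial-sum process of $e_t$, together with $e_1/\sigma_e$ and $e_T/\sigma_e$. Lemma~\ref{A.0} supplies the joint convergence of this triple, with $\eps_1$ and $\eps_\infty$ independent of $W$. The GLS estimator $\hat\psi$ (suitably scaled) is itself a continuous function of these same quantities. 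Hence the continuous mapping theorem delivers joint convergence of the vector $(D_T^{GLS},\ty_T)$, and I expect this to be the only step requiring care. In particular, the limits of the intercept and slope components of $\hat\psi$ that generate $\eps_1$ and $q_{\cbar}r$ must be tracked on the same probability space as $V_{c,\cbar}$.

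Given joint convergence, apply the continuous mapping theorem to the maps
\[
(x,y)\mapsto \frac{y^2-\delta^2}{2x},\qquad (x,y)\mapsto \frac{x^{1/2}}{\delta},\qquad (x,y)\mapsto \frac{\cbar^2x+(1-\cbar)y^2}{\delta^2},
\]
evaluated at $(D_T^{GLS}/\sigma_e^2,\ \ty_T/\sigma_e)$. These maps are continuous on $\{x>0\}$. The limit of $D_T^{GLS}/\sigma_e^2$ is at least $1$ almost surely, so the discontinuity set has limit probability zero. This yields \eqref{eq:mza_limit_gls}, \eqref{eq:msb_limit_gls} and \eqref{eq:mpt_limit_gls}, with the $\sigma_e$ factors cancelling. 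Finally, $\MZt^{GLS}=\MZa^{GLS}\cdot\MSB^{GLS}$ is a further continuous function of the same pair. Its limit is the product of the two limits, and simplifying gives \eqref{eq:mzt_limit_gls}.
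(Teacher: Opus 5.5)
Your proposal is correct and is essentially the paper's argument. Like the paper, you use $T\omega_T^2=\sigma_e^2\delta^2$ to write each statistic as a continuous function of $(D_T^{GLS},\ty_T)$, then apply Proposition~\ref{prop:sm_gls} with the continuous mapping theorem. Your extra care fills in details the paper leaves implicit. One point is the joint convergence of the two moments, which comes from Lemma~\ref{lem:A.4} and the decomposition in the proof of Proposition~\ref{prop:sm_gls}; note that the $T^{-1}\sum e_t^2$ and cross-product pieces enter through Slutsky rather than as functionals of the partial-sum process. The other is that the limit of $D_T^{GLS}/\sigma_e^2$ is at least one almost surely, so the maps are continuous at the limit.
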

\begin{revision}
\begin{proof}
Set $D_T^{GLS}=T^{-1}\sum_{t=2}^T\ty_{t-1}^2$. Multiplying the numerator and denominator of each statistic in \eqref{eq:mza_msb}--\eqref{eq:mzt_mpt} by the appropriate power of $T$, and using $T\omega_T^2=\sigma_e^2\delta^2$, reduces every statistic to a continuous function of $(D_T^{GLS},\ty_T)$. Proposition~\ref{prop:sm_gls} and continuous mapping then give the four displayed limits.
\end{proof}
\end{revision}

\begin{remark}
    Clearly, the limiting distributions of the tests, particularly the GLS based tests, carry extra nuisance terms not present in the standard limiting distributions of the tests. Yet,
    observe that as $\delta \rightarrow \infty$, we recover the standard limiting distributions of \citet{NgPerron2001} and \citet{Stock1990}. To see this, define
    \[
    D = 1+\int_0^1\left[ \delta V_{c,\cbar}(r) - \eps_1 - q_{\cbar}r \right]^2dr, \qquad E = (1-\lambda)\eps_{\infty} + \frac{1-\lambda}{2}\eps_1 + \delta V_{c,\cbar}(1).
    \]
    Then
    \[
    \frac{D}{\delta^2} \xrightarrow{\mathrm{a.s.}} \int_0^1 V_{c,\cbar}(r)^2dr, \qquad \frac{E}{\delta} \xrightarrow{\mathrm{a.s.}} V_{c,\cbar}(1).
    \]
    Consequently,
    \[
    \frac{E^2-\delta^2}{2D} \xrightarrow[\delta\to\infty]{\mathrm{a.s.}} \frac{V_{c,\cbar}(1)^2 - 1}{2\int_0^1 V_{c,\cbar}(r)^2dr},
    \]
    the standard limiting distribution of $\MZa^{GLS}$ in the case of an unknown linear time trend. Similar arguments yield the same results for the other $M$-tests.

    The opposite regime is nonuniform. As $\delta \rightarrow 0$, all dependence on the true local-to-unity parameter $c$ enters through the vanishing term $\delta V_{c, \cbar}$ or $\delta J_c^{\tau}$. The surviving leading terms are instead determined by the boundary terms $\eps_1$ and/or $\eps_{\infty}$ and the fixed GLS design parameter $\cbar$. Thus, the leading null and local alternative laws become indistinguishable as $\delta \rightarrow 0$. This is the partly the mechanism behind the deterioration in size-adjusted power examined later.
\end{remark}

\begin{remark}\label{rem:innovation_dep}
    Since the $M$-test limiting distributions contain the terms $\varepsilon_1$ and $\varepsilon_{\infty}$, they depend on the full distribution of the standardized innovations, $e_t/\sigma_e$, beyond their first two moments. Hence Gaussian NINW critical values need not be valid for other innovation laws. Note that $\MSB^{OLS}$ is the exception to this.
    
\end{remark}

\begin{corollary}[Support bounds]\label{cor:support_bounds}
Under the conditions of Theorems~\ref{thm:oracle1}--\ref{thm:oracle2}, for every fixed $c\leq0$ the oracle NINW limits satisfy
\begin{equation}
    \MZa \geq -\frac{\delta^2}{2}, \qquad
    \MSB \geq \frac{1}{\delta}, \qquad
    \MZt \geq -\frac{\delta}{2}, \qquad
    \MP^{GLS} \geq \frac{\cbar^2}{\delta^2} \quad\text{a.s.} \label{eq:oracle_bounds}
\end{equation}
Consequently, for a critical value $cv$ the asymptotic rejection probability is exactly zero whenever
\[
 \delta<\sqrt{2|cv|},\qquad
 \delta<\frac1{cv},\qquad
 \delta<2|cv|,\qquad
 \delta<\frac{|\cbar|}{\sqrt{cv}},
\]
respectively. Moreover, the finite sample rejection probability converges to $0$ as $T \to \infty.$
\end{corollary}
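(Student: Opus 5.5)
The plan is to read the bounds directly off the limit expressions in Theorems~\ref{thm:oracle1}--\ref{thm:oracle2}. Each limit is built from a squared numerator term and a positive denominator term, and elementary inequalities on these pieces give the support bounds. Write the generic form as
\[
\frac{E^2-\delta^2}{2D}, \qquad \frac{1}{\delta}D^{1/2}, \qquad \frac{E^2-\delta^2}{2\delta D^{1/2}}, \qquad \frac{\cbar^2 D + (1-\cbar)E^2}{\delta^2}.
\]
The pieces are as follows.
\begin{itemize}
\item Under OLS, $D=1+\delta^2\int_0^1 J_c^\tau(r)^2dr$ and $E=\eps_\infty+\delta J_c^\tau(1)$.
\item Under GLS, $D$ and $E$ are as in the remark following Theorem~\ref{thm:oracle2}.
\end{itemize}
In both cases $D\geq 1$ surely, since it is one plus the integral of a square, and $E^2\geq 0$.

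The bounds then follow quickly.
\begin{itemize}
\item For $\MZa$: $E^2-\delta^2\geq -\delta^2$. When the numerator is negative, dividing by $2D\geq 2$ gives at least $-\delta^2/2$. When it is nonnegative, the bound is trivial.
\item For $\MSB$: $D^{1/2}/\delta\geq 1/\delta$.
\item For $\MZt$: in the negative case, $(E^2-\delta^2)/(2\delta D^{1/2})\geq -\delta^2/(2\delta)=-\delta/2$, using $D^{1/2}\geq 1$.
\item For $\MP^{GLS}$: $1-\cbar>0$ because $\cbar<0$, so the second term is nonnegative and $\cbar^2D/\delta^2\geq \cbar^2/\delta^2$.
\end{itemize}
These hold surely on the limiting probability space, hence a.s.

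For the rejection probabilities, I will use the rejection regions: the left tail for $\MZa$, $\MZt$, $\MSB$ and $\MP$, with $cv<0$ for the first two and $cv>0$ for the last two. Under the stated conditions the bound strictly exceeds the critical value. For example, $\delta<\sqrt{2|cv|}$ gives $-\delta^2/2>cv$, and $\delta<1/cv$ gives $1/\delta>cv$. The limit variable $L$ then satisfies $\IP(L\leq cv)=0$.

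For the final claim about finite samples, I will apply the portmanteau theorem. The set $(-\infty,cv]$ is closed, so $\limsup_T \IP(S_T\leq cv)\leq \IP(L\leq cv)=0$, which gives convergence to zero. This step also avoids any need for continuity of the limit law at $cv$. The main care point is this step, together with ensuring strictness of the inequality: the portmanteau bound with a closed set handles weak versus strict rejection. I also need to check that $D$ is bounded below by $1$ in the limiting representation, and not merely in probability. This is immediate from the explicit forms.
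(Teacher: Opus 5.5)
Your proposal is correct and takes the route the paper intends. The paper states the corollary without a separate proof because the bounds are read directly off the limits in Theorems~\ref{thm:oracle1}--\ref{thm:oracle2}, using $D\geq 1$, $E^2\geq 0$ and $1-\cbar>0$, and the left-tailed rejection regions then give the stated $\delta$ thresholds. Your portmanteau step with the closed set $(-\infty,cv]$ is the right way to get the finite-sample claim, since the finite-sample statistics themselves need not satisfy the bounds exactly.
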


\begin{remark}
For the linear time trend case, \citet{NgPerron2001} report 5\% critical values of -17.3, 0.168, -2.91, and 5.48 for $\MZa^{GLS}, \MSB^{GLS}, \MZt^{GLS}$ and $\MP^{GLS}$, respectively. Corollary~\ref{cor:support_bounds} therefore gives
\[
\delta^*_{\MZa} \approx 5.88, \quad \delta^*_{\MSB} \approx 5.95, \quad \delta^*_{\MZt} \approx  5.82, \quad \delta^*_{\MP} \approx 5.77.
\]
 Hence, all four GLS-based 5\% size tests enter their zero asymptotic rejection probability region at approximately $\delta^* =$ 5.8--6.0. The magnitude of this region can be made more concrete by relating it to the negative moving average coefficient itself. Since
 \[
 \theta_T = -1 + \frac{\delta}{\sqrt{T}},
 \]
 at $T=100$, the thresholds above correspond to 
 \[
 \theta_T  \approx -0.412, \quad -0.405, \quad -0.418, \quad -0.423
 \]
 for $\MZa^{GLS}, \MSB^{GLS}, \MZt^{GLS}$, and $\MP^{GLS}$, respectively. Thus, for example, along a NINW sequence with $\delta<5.88$, we have that for
 \[
 \theta_T<-1+\frac{5.88}{\sqrt{T}},
 \]
 the size and local asymptotic power of $\MZa^{GLS}$ computed using the 5\% asymptotic critical values converge to zero. For the $M^{OLS}$ tests, $\delta^* \approx 6.40$; an even greater region of zero asymptotic rejection probability.
\end{remark}

\section{Asymptotic power of the tests}\label{sec:env}

\subsection{The asymptotic Gaussian power envelope}

The limiting distributions of Section~\ref{sec:oracle} do not by themselves
determine whether the behavior of the $M$-tests reflects a harder testing
problem or inefficiency of the statistics. To distinguish the two, we
characterize the maximal asymptotic power attainable in the NINW
experiment under an unknown linear time trend, following the construction of \citet{ERS1996}. Their envelope
is derived under the condition that $\{v_t\}$ has a strictly positive
spectral density at frequency zero (their Condition~A). By \eqref{eq:lrv}
the LRV $\omega_T^2=\sigma_e^2\delta^2/T$ vanishes in the
limit, so that condition fails and equality of the two power bounds cannot
be assumed.

Assume, just in this section, that in addition to Assumption \ref{ass:innovations}$,\{e_t\}$ is Gaussian and that
$\Sigma_T=\operatorname{Cov}(v_1,\dots,v_T)$, is available to the
researcher. Granting this knowledge can only raise attainable power, so the
resulting bound is valid for any feasible trend invariant procedure. Let
$y^{\abar},Z^{\abar}$ denote the data and trend regressors quasi-differenced
at $\abar$ as in \eqref{eq:quasi_w}--\eqref{eq:quasi_z}, and let
$y^{1},Z^{1}$ denote the same construction at $\abar=1$, that is, first
differences with the initial observation retained. Define the Gaussian
criteria
\begin{equation}
L_T(\abar,\psi)=\bigl(y^{\abar}-Z^{\abar}\psi\bigr)'\,\Sigma_T^{-1}\,
\bigl(y^{\abar}-Z^{\abar}\psi\bigr),
\qquad
L_T(1,\psi)=\bigl(y^{1}-Z^{1}\psi\bigr)'\,\Sigma_T^{-1}\,
\bigl(y^{1}-Z^{1}\psi\bigr),
\label{eq:ersobjective}
\end{equation}
each of which is, up to an additive constant, minus two times the Gaussian
log-likelihood under the alternative $c=\cbar$ and under the null $c=0$
(where $\abar=1$), respectively. The most powerful test of $H_0\colon c=0$
against the point alternative $c=\cbar$ that is invariant to the trend
coefficients rejects for small values of the likelihood-ratio statistic
\begin{equation}
L^{*,\tau}_T(\cbar)
=\min_{\psi}L_T(\abar,\psi)-\min_{\psi}L_T(1,\psi),
\label{eq:Lstar}
\end{equation}
the difference in weighted sums of squared residuals from two constrained
GLS regressions, one imposing the alternative and one imposing the null
\citep[eqn.~(6)]{ERS1996}. Because the asymptotically sufficient statistic for
$c$ is two-dimensional, no uniformly most powerful invariant test exists:
\eqref{eq:Lstar} defines an infinite family of point-optimal tests indexed
by $\cbar$, none dominating the others at every alternative.

Under standard errors satisfying Condition~A ($\{v_t\}$ has a positive spectral density at frequency zero), the rejection region of the
test indexed by $\cbar$ admits a continuous-time representation. With
$V_{c,\cbar}$ as in \eqref{gls_func}, the asymptotic power function of the
size-$\alpha$ test indexed by $\cbar$, evaluated at the true local
parameter $c$, is
\begin{equation}
\pi^{\tau}(c,\cbar;\alpha)
=\Pr\!\left[\;\cbar^{2}\!\int_0^1 V_{c,\cbar}(r)^2\,dr
+(1-\cbar)\,V_{c,\cbar}(1)^2 \,<\, b^{\tau}_{\alpha}(\cbar)\right],
\label{eq:pitau}
\end{equation}
where $b^{\tau}_{\alpha}(\cbar)$ is defined by the same probability
statement at $c=0$ equaling $\alpha$ \citep[eqn.~(8)]{ERS1996}. Note $c$ indexes the true local
alternative at which power is evaluated, while $\cbar$ indexes which member
of the point-optimal family is used. Since the test indexed by $\cbar$ is
most powerful invariant against $c=\cbar$, no invariant size-$\alpha$ test
can exceed $\pi^{\tau}(\cbar,\cbar;\alpha)$ at that alternative. The
pointwise upper bound on asymptotic power is therefore the diagonal,
\begin{equation}
\Pi^{\tau}_{\mathrm{ERS}}(c;\alpha)\equiv\pi^{\tau}(c,c;\alpha),
\label{eq:ers_envelope}
\end{equation}
the ERS linear-trend Gaussian power envelope.

Let $\Pi^{\tau}_{\mathrm{NINW}}(c,\delta;\alpha)$ denote the corresponding
object under \eqref{eq:model} and Assumption~\ref{ass:innovations}: the maximal
asymptotic power attainable at the fixed alternative $(c,\delta)$ by
size-$\alpha$ tests invariant to the intercept and linear trend. The
results of Section~\ref{sec:oracle} might suggest that this envelope is
degraded relative to \eqref{eq:ers_envelope} since the statistics' limits depend
on $\delta$, are contaminated by the boundary variates, and lose their
dependence on $c$ as $\delta\to0$. The following theorem shows that it is
not.

\begin{theorem}[Gaussian power envelope under NINW]\label{thm:power_envelope}
Let $\Pi^{\tau}_{NINW}(c,\delta;\alpha)$ be the Gaussian power envelope under \eqref{eq:model}, Assumption~\ref{ass:innovations}, and the Gaussian oracle information set above. Then, for every fixed $c<0$, $\delta>0$, and $0<\alpha<1$, we have
\begin{equation}
    \Pi^{\tau}_{NINW}(c, \delta;\alpha) = \Pi^{\tau}_{ERS}(c;\alpha). \label{eq:envelope_equivalence}
\end{equation}
Hence the asymptotic Gaussian power envelope under an unknown linear time trend is identical to the standard ERS power envelope under an unknown linear time trend, and moreover, it does not depend on $\delta$.
\end{theorem}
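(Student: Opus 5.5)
The plan is to reduce the NINW experiment, by exact whitening, to a near-integrated AR(1) experiment with i.i.d.\ Gaussian errors, and then to track the transformed deterministic regressors. Write $\Theta_T$ for the $T\times T$ lower-bidiagonal matrix with ones on the diagonal and $\theta_T$ on the first subdiagonal. Because $e_0=0$, we have $v=\Theta_T e$ and $\Sigma_T=\sigma_e^2\Theta_T\Theta_T'$, so
\[
L_T(\abar,\psi)=\sigma_e^{-2}\bigl\|\Theta_T^{-1}(y^{\abar}-Z^{\abar}\psi)\bigr\|^2 .
\]
Let $R_\rho$ be the lower-triangular Toeplitz matrix representing $1-\rho L$ with zero initial condition. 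Then $u=R_{\rho_T}^{-1}\Theta_T e$. Lower-triangular Toeplitz matrices commute, so
\[
\Theta_T^{-1}u^{\abar}=\Theta_T^{-1}R_{\abar}R_{\rho_T}^{-1}\Theta_T e=R_{\abar}x ,
\]
where $x_t=\rho_T x_{t-1}+e_t$ and $x_0=0$. Hence $\Theta_T^{-1}u^{\abar}=e+T^{-1}(c-\cbar)x_{-1}$, and the same identity holds at $\abar=1$. The stochastic part of the whitened problem is therefore exactly the ERS near-integrated AR(1) with white Gaussian noise, and $\delta$ disappears from it entirely. All $\delta$-dependence is pushed into the whitened regressors $G^{\abar}=\Theta_T^{-1}Z^{\abar}$ and $G^{1}=\Theta_T^{-1}Z^{1}$.

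Next I would write each minimized criterion as
\[
\min_\psi L_T(\abar,\psi)=\sigma_e^{-2}\bigl(\|\eta^{\abar}\|^2-\|P_{G^{\abar}}\eta^{\abar}\|^2\bigr),\qquad \eta^{\abar}=e+T^{-1}(c-\cbar)x_{-1},
\]
and similarly at $\abar=1$. The difference of the unprojected sums of squares is identical to the ERS no-deterministics case. By the FCLT $T^{-1/2}x_{[rT]}\wto\sigma_eJ_c(r)$, it converges to $\cbar^2\int J_c^2-2\cbar\int J_c\,dW$ (plus $c$-dependent terms), which is the familiar $\cbar^2\int J_c^2+(1-\cbar)J_c(1)^2$-type expression after Itô's formula. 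It remains to show that
\[
\|P_{G^{\abar}}\eta^{\abar}\|^2-\|P_{G^1}\eta^1\|^2
\]
has exactly the same limit as in ERS, namely the term that converts $J_c$ into $V_{c,\cbar}$ in \eqref{eq:pitau}.

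For this I would compute $G$ explicitly. Since $\Theta_T^{-1}$ has entries $(-\theta_T)^{t-s}$ for $s\le t$, the whitened intercept column at $\abar=1$ is $g_t=(-\theta_T)^{t-1}$. This is a geometrically decaying sequence with effective length $\sqrt T/\delta$, and it has norm of order $T^{1/4}$. At $\abar$ the column picks up an extra $-\cbar/T$ times partial sums of order $\sqrt T$. The whitened trend column equals $\sum_{j<t}(-\theta_T)^{j}$, up to $O(T^{-1})$ corrections. This is of order $\sqrt T/\delta$ except near $t=1$, and it is asymptotically $\sqrt T\,\delta^{-1}$ times a constant plus a boundary layer of width $\sqrt T$.

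I would then rotate the regressor space into an orthogonal pair: a localized column supported near the start of the sample, and an (essentially) level-plus-linear column. For each column I would derive the joint limits of the Gram matrix and of $G'\eta$, after normalizing by $T^{1/4}$ and $\sqrt T$ respectively. The localized column picks up only $e_1,\dots,e_{O(\sqrt T)}$ and $x$ near zero, where $T^{-1/2}x$ is negligible. Its projection contribution should therefore be the same random variable, up to $\op(1)$, under $\abar$ and under $1$, and so it cancels in the difference. The remaining column, suitably normalized, should behave like the ERS transformed trend $(1,\dots)$ with the ERS weights $(1-\cbar)$ and $\cbar^2/3$. Its normalization offsets the $\delta^{-1}$ scale, so its projection difference converges to the ERS trend term and produces $V_{c,\cbar}$ with the same $\lambda$.

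This step is the main obstacle. One has to verify carefully that the cross terms between the two columns, and the $O(\cbar/T)$ perturbations of the intercept column, do not leave a $\delta$-dependent $O_\IP(1)$ residue. I would first try a direct computation with the exact geometric sums, keeping $(-\theta_T)^{\sqrt T\,r}\to e^{-\delta r}$.

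Having shown $L^{*,\tau}_T(\cbar)\wto\cbar^{2}\int_0^1V_{c,\cbar}^2+(1-\cbar)V_{c,\cbar}(1)^2$ (up to a constant) under every fixed $c$, the result follows from the Neyman--Pearson lemma applied to the invariant experiment. The Gaussian likelihood of the maximal invariant is a monotone function of $L^{*,\tau}_T$, so the point-optimal invariant test against $c=\cbar$ is the one rejecting for small $L^{*,\tau}_T$. Its limiting power at $c$ is therefore $\pi^\tau(c,\cbar;\alpha)$, and its critical value converges to $b^\tau_\alpha(\cbar)$ because the limit law is continuous. This gives the limiting power of the most powerful test at each alternative. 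Evaluating on the diagonal $\cbar=c$ yields $\Pi^\tau_{NINW}(c,\delta;\alpha)=\pi^\tau(c,c;\alpha)=\Pi^\tau_{ERS}(c;\alpha)$, which is free of $\delta$.
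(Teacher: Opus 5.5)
Your proposal is correct and follows essentially the same route as the paper. The shared steps are exact whitening to the AR(1) $x_t=\rho_T x_{t-1}+e_t$ and the split $L^{*,\tau}_T(\cbar)=A_T(\cbar)+Q_T(0)-Q_T(\cbar)$. The normalized Gram matrix of the whitened intercept and trend columns is asymptotically diagonal, so the localized intercept contribution $2\delta\zeta_T^2$ is common to both candidates and cancels. The trend score is of order $\delta^{-1}$, and its square is offset by the $\delta^{-2}$ Gram entry. Neyman--Pearson on the diagonal $\cbar=c$ then finishes the argument.

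Two small slips to fix when writing it out. First, the whitened trend column has norm of order $T$, so the paper scales it by $T^{-1}$, not $T^{-1/2}$. Second, at $\abar$ the weights $1-\cbar(j-1)/T$ are not an $O(T^{-1})$ correction; they produce an $O(1)$ relative linear component $\sqrt T(1-\cbar r)/\delta$. That linear component is exactly what generates the ERS weights you invoke later.
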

\begin{remark}
This theorem has a consequential interpretation. Although $\delta$ enters the limiting distributions of the NINW $M$-tests, it does not enter the Gaussian power envelope. Thus the NINW process does not offer unit root tests intrinsically less information than in the standard case. Hence any loss in power at lower $\delta$'s can be interpreted as a fault of the oracle $M$-tests, not the DGP. Note that attainability of the same envelope by a fully feasible procedure is a separate question.
\end{remark}

\subsection{Asymptotic power of the \texorpdfstring{$M$}{M}-tests}
Now that we have characterized the Gaussian power envelope under NINW, we examine the asymptotic power of the $M$-tests. Figure~\ref{fig:asymptotic_power_mpt} graphs the asymptotic power function of the $MP_T^{GLS}$ test at various $\delta$'s and the Gaussian power envelope. Power is simulated from the asymptotic limiting functionals using 100,000 null and 100,000 alternative Monte Carlo replications and a $50,000$ point discretization grid. The $\MP$ curves use $\cbar=-13.5$ and $\delta$ specific critical values; the Gaussian power envelope is simulated point-wise, $\cbar=c$, at each local alternative. Size is set to 0.05.

\FloatBarrier

\begin{figure}[t]
    \centering
    \includegraphics[width=\textwidth]{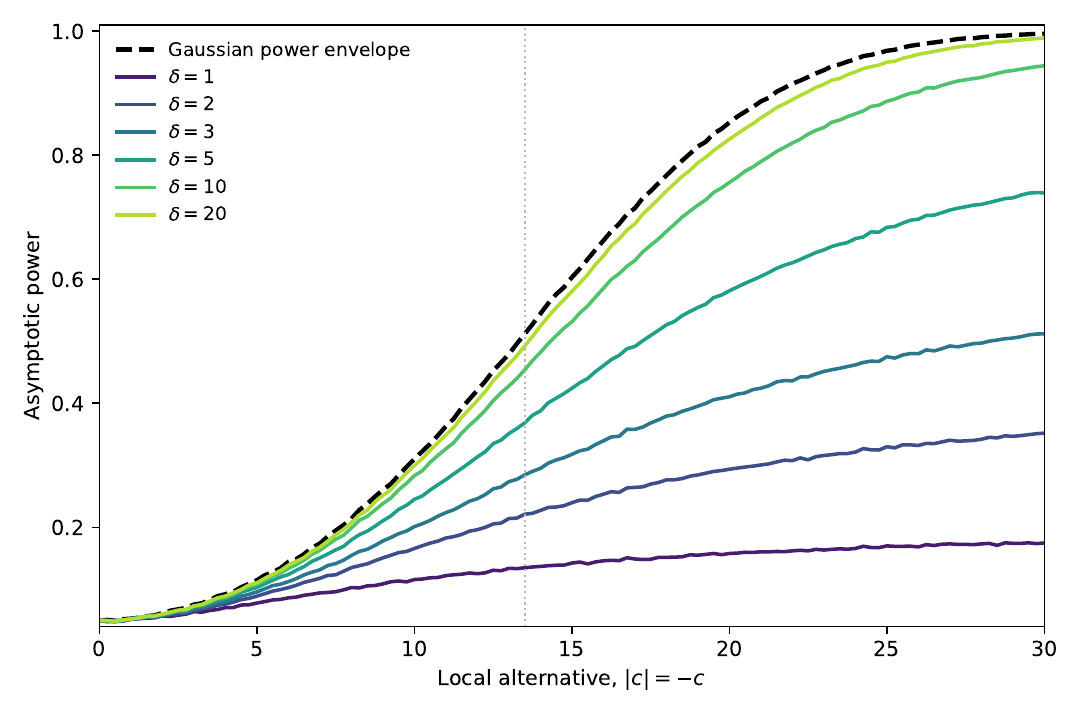}
    \caption{Asymptotic power of $\MP$ at various deltas compared to the Gaussian power envelope. The vertical line is $ c =-13.5.$}
    \label{fig:asymptotic_power_mpt}
\end{figure}

Figure~\ref{fig:asymptotic_power_mpt} illustrates that the power loss becomes increasingly severe as $\delta$ decreases. For $\delta=20$, the power curve is close to the envelope, consistent with the recovery of the standard limiting distribution as $\delta$ becomes large. At moderate to low values of $\delta$, however, a substantial loss of power exists. The inefficiency relative to the envelope is especially clear at $\delta=1$. Power remains below 0.2 as $c=-30$, whereas the envelope is almost at one.  

It is important to emphasize that these power losses occur when the LRV is treated as known. Therefore, these results precede any difficulties with LRV estimation under NINW. In fact, a feasible $M^{GLS}$ test that shares the same limiting distributions as the oracle case would only inherit the intrinsic power inefficiency problem. As a further robustness check, we reoptimized $\cbar$ at $\delta=1$ for the various tests and found that power was still well below the envelope, despite this completely infeasible advantage. Thus, we can conclude that the $M^{GLS}$ tests are intrinsically inefficient in terms of asymptotic power under NINW.

Note that similar results hold for the OLS variants of the $M$-tests, with one key exception, $\MSB^{OLS}$, which actually has a power function invariant to $\delta$. This was observed by \citet{NgPerron1996} in the no-deterministics case. The fact that the test's asymptotic power is invariant to $\delta$ can be derived from its limiting distribution. Let $A^{\tau}_c = \int_0^1J_c^{\tau}(r)^2dr$, then the limiting distribution from \eqref{eq:mza_msb_limt_ols} can be rewritten as $\MSB^{OLS} \wto (A^{\tau}_c + \delta^{-2})^{1/2}$. Then, the corresponding lower tail null critical value is $(a^{\tau}_{\alpha}+\delta^{-2})^{1/2}$, where $a^{\tau}_{\alpha}$ is the lower $\alpha$-quantile of $A^{\tau}_0.$ Hence, the size corrected rejection event is $A^{\tau}_c < a^{\tau}_{\alpha}$ because the common term $\delta^{-2}$ cancels. Thus, the asymptotic power of $\MSB^{OLS}$ is independent of $\delta$ and coincides with its standard local-to-unity power function. This property is unique to the $\MSB^{OLS}$ test.

\FloatBarrier

\section{Feasible limiting distributions}\label{sec:feasible_ld}
The entire analysis so far treated the LRV $\omega^2_T=\sigma^2_e\delta^2/T$ as known. Obviously, the practitioner does not implement these oracle tests, but rather a feasible test which must use an estimator for the LRV. Since $\omega^2_T \to 0$ as $T \to \infty$, the condition for feasible and oracle equivalence (of limiting distributions) is stronger than just absolute consistency.

To make this clear, let $\hat{\omega}^2_T$ be an estimator of $\omega^2_T$. Absolute consistency implies
\begin{equation}
    \hat{\omega}^2_T - \omega^2_T \stackrel{\IP}{\to} 0. \label{eq:abs_con}
\end{equation}
When $\omega^2_T$ is a fixed positive constant, \eqref{eq:abs_con} implies ratio consistency, but under NINW it does not. For example, suppose $\hat{\omega}^2_T=2\omega^2_T$. Then \eqref{eq:abs_con} is satisfied even though it remains twice the true LRV at all $T$. The relevant requirement is therefore
\begin{equation}
    \frac{\hat{\omega}^2_T}{\omega^2_T} \stackrel{\IP}{\to} 1, \quad\text{equivalently,} \quad T\hat{\omega}^2_T \stackrel{\IP}{\to} \sigma^2_e \delta^2. \label{eq:rel_con}
\end{equation}

We refer to \eqref{eq:rel_con} as ratio consistency. If it holds, then $\omega^2_T$ can be replaced by $\hat{\omega}^2_T$, leaving the oracle limiting distributions unchanged.

\subsection{The autoregressive long-run variance estimator}
We consider the autoregressive estimator used by \citet{NgPerron1996,NP1998} and \citet{NgPerron2001}. Throughout this subsection, $\ty_t$ denotes the GLS detrended series. The resulting $s^2_{AR}$ is used for both the OLS and GLS versions of the $M$-statistics; the distinction between them concerns the level moments, not the construction of the LRV estimator. For a truncation order $k$, estimate
\begin{equation}
    \Delta\ty_t = \hat{b}_0\ty_{t-1} + \sum_{j=1}^k \hat{b}_j\Delta\ty_{t-j} + \hat{e}_{t,k} \label{eq:ADF}
\end{equation}
by least squares over $t=k+2,\ldots,T$, and define
\begin{equation}
    \hat{b}(1)= \sum_{j=1}^k \hat{b}_j, \quad \hat{\sigma}^2_{ek}= \frac{1}{T}\sum_{t=k+2}^T \hat{e}^2_{t,k}, \quad s^2_{AR} = \frac{\hat{\sigma}^2_{ek}}{(1-\hat{b}(1))^2}. \label{eq:s^2_ar}
\end{equation}
\begin{revision}
The existing literature establishes important but distinct parts of the feasible problem. In the no-deterministics case, \citet{NgPerron1996} state that the autoregressive estimator yields the same NINW limits as the oracle case under $k\to\infty$ and $k/T\to0$. \citet{NP1998} subsequently establish that $s^2_{AR} \stackrel{\IP}{\to} 0$ when the spectral density at frequency zero collapses. That is, they establish absolute consistency for the limiting value zero, but again, that by itself does not determine the ratio in \eqref{eq:rel_con}. Finally, \citet{NgPerron2001} establish that, under NINW, $k^2s^2_{AR}=O_\IP(1)$ and $Ts^2_{AR} = O_\IP(1)$ can both hold only if
\begin{equation}
    \frac{k}{\sqrt{T}} \to \kappa, \quad 0<\kappa<\infty. \label{eq:rate_con}
\end{equation}
Thus, $k \propto \sqrt{T}$ is the rate required to keep the feasible $M$-tests bounded. It remains to determine whether this rate also delivers ratio consistency. 
\end{revision}

\begin{proposition}[$s^2_{AR}$ under NINW]\label{prop:feasible_s2ar}
Let $\{y_t\}$ be defined by \eqref{eq:model}, Assumption~\ref{ass:innovations} hold, and construct $s^2_{AR}$ from GLS-detrended data as in \eqref{eq:ADF}--\eqref{eq:s^2_ar}. If the deterministic integer sequence $k=k_T$ satisfies \eqref{eq:rate_con}, then
\begin{equation}
    \frac{s^2_{AR}}{\omega^2_T} \stackrel{\IP}{\to} \coth^2\left(\frac{\kappa\delta}{2}\right), \label{eq:s^2_rel_con}
\end{equation}
equivalently, $Ts^2_{AR} \stackrel{\IP}{\to}\sigma^2_e \delta^2\coth^2(\kappa\delta/2).$ It follows for finite $\kappa$ and $\delta$ that $\coth^2(\kappa\delta/2)>1$ so the oracle and feasible $M$-test limiting distributions differ.
\end{proposition}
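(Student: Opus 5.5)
The plan is to reduce $s^2_{AR}$ to a population object: the best linear AR($k$) predictor of the MA(1) process $v_t=e_t+\theta_Te_{t-1}$. I would then evaluate that object in closed form along the NINW sequence, and finally show that replacing population moments by sample moments of the GLS-detrended data leaves the leading order unchanged. Write $A_k(L)=1-\sum_{j=1}^k b_{j,k}L^j$ for the population AR($k$) projection filter of $v_t$, and let $\sigma^2_{k}$ be its innovation variance. The target is
\[
T s^2_{AR}=\frac{T\hat\sigma^2_{ek}}{(1-\hat b(1))^2}\approx \frac{T\sigma_k^2}{A_k(1)^2},
\]
so two things must be shown: $\sigma^2_k\to\sigma_e^2$, and $A_k(1)/\sqrt T\to \tanh(\kappa\delta/2)/\delta$.

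\emph{Step 1 (population computation).} For an MA(1) the finite-order projection coefficients are explicit: solving the tridiagonal Toeplitz Yule--Walker system gives coefficients of the form
\[
b_{j,k}=-(-\theta_T)^j\,\frac{1-\theta_T^{2(k+1-j)}}{1-\theta_T^{2(k+1)}} .
\]
Summing,
\[
A_k(1)=\sum_{j=0}^k(-\theta_T)^j\,\frac{1-\theta_T^{2(k+1-j)}}{1-\theta_T^{2(k+1)}} .
\]
Substitute $-\theta_T=1-\delta/\sqrt T$ and $k/\sqrt T\to\kappa$, so that $(-\theta_T)^{\lfloor x\sqrt T\rfloor}\to e^{-\delta x}$ uniformly on $[0,\kappa]$. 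A Riemann-sum argument then gives
\[
\frac{A_k(1)}{\sqrt T}\to\int_0^\kappa e^{-\delta x}\frac{1-e^{-2\delta(\kappa-x)}}{1-e^{-2\delta\kappa}}dx=\frac{(1-e^{-\delta\kappa})^2}{\delta(1-e^{-2\delta\kappa})}=\frac{\tanh(\kappa\delta/2)}{\delta}.
\]
Similarly, the one-step prediction error variance equals $\sigma_e^2(1+\theta_T^2)$ times a ratio of determinants. For the MA(1) this is $\sigma^2_e(1-\theta_T^{2(k+2)})/(1-\theta_T^{2(k+1)})$, which tends to $\sigma_e^2$. Combining the two limits gives
\[
T\sigma_k^2/A_k(1)^2\to\sigma_e^2\delta^2\coth^2(\kappa\delta/2).
\]
Dividing by $T\omega_T^2=\sigma_e^2\delta^2$ yields the claimed ratio. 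The final claim, that $\coth^2(\kappa\delta/2)>1$ so the oracle and feasible limits differ, then follows from $\tanh<1$ on $(0,\infty)$ together with Theorems~\ref{thm:oracle1}--\ref{thm:oracle2}.

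\emph{Step 2 (from $v_t$ to $\Delta\ty_t$).} Next I would write
\[
\Delta\ty_t=v_t+\tfrac{c}{T}u_{t-1}-(\hat\psi-\psi)'\Delta z_t .
\]
The term $(\hat\psi-\psi)'\Delta z_t=\hat\beta_1-\beta_1$ is a constant shift of order $O_\IP(T^{-1/2})$ by Proposition~\ref{prop:sm_gls}'s derivation. The regressor $\ty_{t-1}$ in \eqref{eq:ADF} absorbs the $c u_{t-1}/T$ component, and by Frisch--Waugh I would partial it out. I would then show that these perturbations change the $k\times k$ sample second-moment matrix of the lagged differences, and its cross-product with $\Delta\ty_t$, by amounts whose effect on $1-\hat b(1)$ is $o_\IP(\sqrt T)$ and on $\hat\sigma^2_{ek}$ is $o_\IP(1)$. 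This is essentially the bookkeeping in \citet{NgPerron2001}, and I would follow their decomposition.

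\emph{Step 3 (sampling error; the main obstacle).} The hard part is showing
\[
\hat b(1)-b_k(1)=o_\IP(\sqrt T)\quad\text{and}\quad \hat\sigma^2_{ek}-\sigma_k^2=o_\IP(1)
\]
when $k\asymp\sqrt T$. The covariance matrix $\Gamma_k$ of $(v_{t-1},\dots,v_{t-k})$ is nearly singular: its smallest eigenvalue is of order $\sigma_e^2(1+\theta_T)^2+\sigma_e^2\pi^2/k^2$, which is of order $1/T$. So I would avoid operator-norm bounds on $\Gamma_k^{-1}$ and instead work with the sum directly. Since $A_k(1)=\iota'\Gamma_k^{-1}\gamma$-type quadratic forms can be written through the explicit Cholesky/innovations representation of the MA(1), I would express $\hat b(1)-b_k(1)=\iota'\hat\Gamma_k^{-1}(\hat\gamma-\hat\Gamma_k b_k)$. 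The vector $\hat\gamma-\hat\Gamma_kb_k$ equals the sample covariances of the lags with the projection residual, each of order $O_\IP(T^{-1/2})$. The key estimate is then $\|\Gamma_k^{-1}\iota\|$, which for the tridiagonal-inverse is computable and of order $k^{3/2}\cdot$(scale). I expect this to give total error $O_\IP(k^2/\sqrt T)$ relative to the scale $\sqrt T$, that is, $o_\IP$ only with some room to spare. Verifying that margin, using the fourth-moment condition in Assumption~\ref{ass:innovations} to control $\hat\Gamma_k-\Gamma_k$ uniformly, is where I expect the real work to lie.
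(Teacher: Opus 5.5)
Your Step 1 is correct. With $\phi=-\theta_T$, the Yule--Walker solution gives $A_k(1)=(1-\phi^{k+2})/\{(1-\phi)(1+\phi^{k+1})\}$ and $\sigma_k^2=\sigma_e^2(1-\phi^{2k+4})/(1-\phi^{2k+2})$. These combine to exactly the paper's population benchmark $R^{pop}_k$ (simulation footnote), whose limit is $\coth^2(\kappa\delta/2)$. But the proposition concerns the sample autoregression, and the population-to-sample transfer you leave open in Step 3 is essentially the whole proof. The route you sketch there does not close.

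\emph{Your own estimate fails.} With $k\asymp\sqrt T$ we have $k^2/\sqrt T\asymp\sqrt T$. This is the same order as $1-b_k(1)\sim\sqrt T\tanh(\kappa\delta/2)/\delta$ itself, so it yields no negligibility.

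\emph{The inverse cannot be handled by absolute perturbation.} The expression $\iota'\hat\Gamma_k^{-1}(\hat\gamma-\hat\Gamma_kb_k)$ still requires control of $\hat\Gamma_k^{-1}$, and no absolute perturbation bound can replace it by $\Gamma_k^{-1}$. Indeed $\|\Gamma_k^{-1}\|\asymp T$, while the $(1,1)$ entry of $\hat\Gamma_k-\Gamma_k$ (a centred sample variance) is of exact order $T^{-1/2}$. Hence $\|\Gamma_k^{-1}\|\,\|\hat\Gamma_k-\Gamma_k\|\to\infty$.

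\emph{Step 2 has the same problem.} The $cu_{t-j-1}/T$ terms and the detrending shift are small entrywise but are magnified by $\Gamma_k^{-1}$, most severely along $\iota$, where $\iota'\Gamma_k^{-1}\iota\asymp T^{3/2}$. The relevant size of a common shift $s$ in the lagged differences is $s^2\,\iota'\Gamma_k^{-1}\iota\asymp s^2T^{3/2}$. So the bound $\Op(T^{-1/2})$ you quote for the shift gives nothing. You need the sharper GLS rate $\hat\psi_1=\Op(T^{-1})$, which follows from Lemma~\ref{lem:A.4} together with $b_T=O(T^{-1/2})$.

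The missing idea is a \emph{relative} (whitened) perturbation argument. It is exactly the Ng--Perron transformation you invoke for Step 2 and then abandon in Step 3. The paper filters the lags by the truncated inverse MA polynomial, $z_{t,j}=\sum_{m=0}^{k-j}(-\theta_T)^m\Delta\ty_{t-j-m}$. This telescopes the innovations to $e_{t-j}-\mu_je_{t-k-1}$ with $\mu_j=(-\theta_T)^{k-j+1}$. The Gram matrix is then a near-identity matrix plus a rank-one term along $\mu$. The rank-one inverse formula delivers $\gamma'\hat\beta\to-\tanh(\kappa\delta/2)$ directly, with $\gamma=\mathbf 1_k-\mu$, without ever computing $b_k$.

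In your own coordinates the same fact reads $\hat\Gamma_k=B\hat SB'$ and $\Gamma_k=\sigma_e^2BB'$, where $B$ is the $k\times(k+1)$ bidiagonal MA matrix and $\hat S$ the sample covariance of $(e_{t-1},\dots,e_{t-k-1})$. Hence $\|\Gamma_k^{-1/2}\hat\Gamma_k\Gamma_k^{-1/2}-I_k\|\le\|\hat S/\sigma_e^2-I_{k+1}\|$, and this is what could rescue Step 3.

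Two inputs would still be missing from your sketch.
\begin{enumerate}
\item The lagged-innovation Gram matrix must be close to the identity in \emph{operator} norm. The paper obtains the rate $\Op(T^{-1/8})$ through the Toeplitz symbol, using Burkholder's and Bernstein's inequalities. Uniform entrywise fourth-moment control, as you propose, only gives a Frobenius deviation of order $k/\sqrt T$, which does not vanish.
\item Partialling out $\ty_{t-1}$ is not innocuous. Because $\ty_{t-1}$ and the first transformed lag share $e_{t-1}$, the residual variance of $\ty_{t-1}$ after projection on the lags loses its leading $1$ and equals $x_T+\op(1)$. You must show $x_T\wto\int_0^1\{\delta V_{c,\cbar}(r)-\eps_1-q_{\cbar}r\}^2dr$, with this limit positive almost surely. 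You must also show that the corresponding partial covariance with the response is $\op(T^{-1/4})$, since the Frisch--Waugh correction enters $\gamma'\hat\beta$ multiplied by a factor bounded only by $\Op(T^{1/4})$.
\end{enumerate}
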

This proposition leads us directly to the following two theorems.

\begin{theorem}[Feasible $M^{OLS}$ limiting distributions under NINW]\label{thm:feasible_mols}
\rev{Let $\{y_t\}$ be defined by\eqref{eq:model} and Assumption~\ref{ass:innovations} hold. Use OLS-detrended data for the level moments and construct $s^2_{AR}$ from GLS-detrended data as in \eqref{eq:ADF}--\eqref{eq:s^2_ar}. If the deterministic lag sequence satisfies $k/\sqrt T\to\kappa\in(0,\infty)$, then}
\begin{align}
    \MZa^{OLS} &\wto \frac{(\eps_{\infty}+\delta J_c^\tau(1))^2 - \delta^2 \coth^2(\frac{\kappa\delta}{2})}{2 \left(1+\delta^2 \int_0^1J_c^{\tau}(r)^2dr\right)}, \notag\\
    \MSB^{OLS} &\wto \frac{\tanh(\frac{\kappa\delta}{2})}{\delta}\left(1+\delta^2 \int_0^1 J_c^{\tau}(r)^2dr\right)^{1/2} ,\label{eq:mza_msb_limt_ols_feasible} \\
    \MZt^{OLS} &\wto \frac{(\eps_{\infty}+\delta J_c^{\tau}(1))^2 - \delta^2 \coth^2(\frac{\kappa\delta}{2})}{2\delta \coth(\frac{\kappa\delta}{2})\left(1 + \delta^2 \int_0^1 J_c^{\tau}(r)^2dr\right)^{1/2}} \label{eq:mzt_limit_ols_feasible}
\end{align}

\end{theorem}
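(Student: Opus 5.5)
The proof follows the template of Theorem~\ref{thm:oracle1}, with the oracle normalization $T\omega_T^2=\sigma_e^2\delta^2$ replaced by the feasible normalization $Ts^2_{AR}$. Let $D_T^{OLS}=T^{-1}\sum_{t=2}^T\ty_{t-1}^2$, where $\ty$ now denotes the OLS-detrended series. Multiplying numerator and denominator of \eqref{eq:mza_msb} by $T$ gives
\[
 \MZa^{OLS}=\frac{\ty_T^2-Ts^2_{AR}}{2D_T^{OLS}},\qquad
 \MSB^{OLS}=\left(\frac{D_T^{OLS}}{Ts^2_{AR}}\right)^{1/2},\qquad
 \MZt^{OLS}=\MZa^{OLS}\,\MSB^{OLS}.
\]
Each statistic is therefore a continuous function of the triple $(D_T^{OLS},\ty_T,Ts^2_{AR})$ on the set where $D_T^{OLS}>0$ and $Ts^2_{AR}>0$.

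The first step is to obtain joint convergence of this triple. Proposition~\ref{prop:sm_ols} gives the marginal joint limit of $(D_T^{OLS},\ty_T)$; strictly it must be read as a joint statement, which it is, since both limits are built from the same $(W,\eps_\infty)$ in Lemma~\ref{A.0}. Proposition~\ref{prop:feasible_s2ar} gives $Ts^2_{AR}\stackrel{\IP}{\to}\sigma_e^2\delta^2\coth^2(\kappa\delta/2)$, which is a constant. Slutsky's lemma then yields joint convergence of the triple to
\[
\Bigl(\sigma_e^2\bigl(1+\delta^2\textstyle\int_0^1 J_c^\tau(r)^2dr\bigr),\ \sigma_e(\eps_\infty+\delta J_c^\tau(1)),\ \sigma_e^2\delta^2\coth^2(\kappa\delta/2)\Bigr).
\]
The fact that $s^2_{AR}$ is built from GLS residuals while the level moments use OLS residuals causes no difficulty, because the limit of $Ts^2_{AR}$ is deterministic.

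The second step is to apply the continuous mapping theorem. The limiting denominators are a.s. strictly positive: $1+\delta^2\int(J_c^\tau)^2\ge1$, and $\coth^2(\kappa\delta/2)>0$ for $\kappa,\delta>0$. The factors $\sigma_e^2$ cancel. For $\MZa^{OLS}$ this gives the stated limit with $\delta^2$ replaced by $\delta^2\coth^2(\kappa\delta/2)$. For $\MSB^{OLS}$, the square root of the ratio is $\bigl(1+\delta^2\int(J_c^\tau)^2\bigr)^{1/2}/(\delta\coth(\kappa\delta/2))$, and rewriting $1/\coth$ as $\tanh$ gives \eqref{eq:mza_msb_limt_ols_feasible}. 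Multiplying the two limits then gives \eqref{eq:mzt_limit_ols_feasible}: the factor $(1+\delta^2\int(J_c^\tau)^2)$ in the $\MZa^{OLS}$ denominator combines with the square root in the $\MSB^{OLS}$ numerator to leave a square root, and the $\tanh$ becomes the $\coth$ in the denominator.

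All of the substantive work sits in Proposition~\ref{prop:feasible_s2ar}, which the paper takes as given here, so the theorem itself is essentially bookkeeping. The only point that needs care is stating the joint-convergence step explicitly rather than combining separate marginal limits.
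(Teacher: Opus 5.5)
Your proposal is correct and follows the paper's proof: you rewrite each statistic as a continuous function of $(D_T^{OLS},\ty_T,Ts^2_{AR})$, combine Proposition~\ref{prop:sm_ols} with the constant limit $Ts^2_{AR}\stackrel{\IP}{\to}\sigma_e^2\delta^2\coth^2(\kappa\delta/2)$ from Proposition~\ref{prop:feasible_s2ar} via Slutsky's theorem and continuous mapping, and obtain $\MZt^{OLS}$ by multiplication. You also state explicitly that the two limits in Proposition~\ref{prop:sm_ols} hold jointly, since both are driven by the same $(W,\eps_\infty)$ from Lemma~\ref{A.0}; this step is genuinely needed for $\MZa^{OLS}$, and the paper leaves it implicit.
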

\begin{revision}
\begin{proof}
By Proposition~\ref{prop:feasible_s2ar}, $Ts^2_{AR}\stackrel{\IP}{\to}\sigma_e^2\delta^2\coth^2(\kappa\delta/2)$. Combine this with Proposition~\ref{prop:sm_ols} in the algebraic definitions of $\MZa^{OLS}$ and $\MSB^{OLS}$, and apply Slutsky's theorem. The $\MZt^{OLS}$ limit follows by multiplication.
\end{proof}
\end{revision}

\begin{theorem}[Feasible $M^{GLS}$ limiting distributions under NINW]\label{thm:feasible_mgls}
\rev{Let $\{y_t\}$ be defined by\eqref{eq:model} and Assumption~\ref{ass:innovations} hold. Use GLS-detrended data for the level moments and construct $s^2_{AR}$ from \eqref{eq:ADF}--\eqref{eq:s^2_ar}. If the deterministic lag sequence satisfies $k/\sqrt T\to\kappa\in(0,\infty)$, then}

\begin{align}
    \MZa^{GLS} &\wto \frac{[(1-\lambda)\eps_{\infty} + \frac{1-\lambda}{2}\eps_1+ \delta V_{c,\cbar}(1)]^2 - \delta^2\coth^2(\frac{\kappa\delta}{2})}{2\left(1+ \int_0^1[\delta V_{c,\cbar}(r) - \eps_1 - q_{\cbar}r]^2dr\right)}, \label{eq:mza_limit_gls_feasible} \\
    \MSB^{GLS} &\wto \frac{\tanh(\frac{\kappa\delta}{2})}{\delta}\left(1 + \int_0^1 [\delta V_{c,\cbar}(r) - \eps_1 - q_{\cbar}r]^2dr \right)^{1/2} \label{eq:msb_limit_gls_feasible} \\
    \MZt^{GLS} &\wto \frac{[(1-\lambda)\eps_{\infty} + \frac{1-\lambda}{2}\eps_1+ \delta V_{c,\cbar}(1)]^2 - \delta^2\coth^2(\frac{\kappa\delta}{2})}{2\delta\coth(\frac{\kappa\delta}{2}) \left( 1 + \int_0^1[\delta V_{c,\cbar}(r) - \eps_1 - q_{\cbar}r]^2dr\right)^{1/2}}, \label{eq:mzt_limit_gls_feasible} \\
    \MP^{GLS} &\wto \frac{\tanh^2(\frac{\kappa\delta}{2})}{\delta^2} \Bigg\{\cbar^2 \left(1 + \int_0^1[\delta V_{c,\cbar}(r) - \eps_1 - q_{\cbar}r]^2dr\right) \notag\\
    &\qquad{} + (1-\cbar) \left( (1-\lambda)\eps_{\infty} + \frac{1-\lambda}{2}\eps_1 + \delta V_{c,\cbar}(1) \right)^2 \Bigg\} \label{eq:mpt_limit_gls_feasible}
\end{align}
  
\end{theorem}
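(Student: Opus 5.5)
The plan mirrors the proof of Theorem~\ref{thm:feasible_mols}, with Proposition~\ref{prop:sm_gls} now supplying the level moments. Set $D_T^{GLS}=T^{-1}\sum_{t=2}^T\ty_{t-1}^2$ and $S_T=Ts^2_{AR}$. Multiplying the numerator and denominator of each statistic in \eqref{eq:mza_msb}--\eqref{eq:mzt_mpt} by the appropriate power of $T$ gives
\[
 \MZa^{GLS}=\frac{\ty_T^2-S_T}{2D_T^{GLS}},\qquad
 \MSB^{GLS}=\Bigl(\frac{D_T^{GLS}}{S_T}\Bigr)^{1/2},\qquad
 \MP^{GLS}=\frac{\cbar^2D_T^{GLS}+(1-\cbar)\ty_T^2}{S_T}.
\]
Also $\MZt^{GLS}=\MZa^{GLS}\MSB^{GLS}$. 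Every statistic is therefore a fixed function $g(D_T^{GLS},\ty_T,S_T)$.

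This $g$ is continuous on the set where $D>0$ and $S>0$. The limit of $D_T^{GLS}$ in \eqref{eq:sm1_gls} is bounded below by $\sigma_e^2>0$ almost surely. By Proposition~\ref{prop:feasible_s2ar}, $S_T\stackrel{\IP}{\to}\sigma_e^2\delta^2\coth^2(\kappa\delta/2)$, which is strictly positive for fixed $\kappa,\delta>0$. Hence the limit point lies in the continuity set.

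The only point needing care is joint convergence of the triple. Proposition~\ref{prop:sm_gls} establishes joint convergence of $(D_T^{GLS},\ty_T)$; I would confirm this from its proof, which builds both limits on the same probability space via Lemma~\ref{A.0}, and otherwise invoke that construction directly. Since $S_T$ converges in probability to a constant, Slutsky's theorem gives
\[
(D_T^{GLS},\ty_T,S_T)\wto\bigl(\sigma_e^2D,\ \sigma_eE,\ \sigma_e^2\delta^2\coth^2(\kappa\delta/2)\bigr),
\]
where $D$ and $E$ are the functionals defined in the remark after Theorem~\ref{thm:oracle2}. The continuous mapping theorem then yields the limits.

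The last step is algebra. The factors of $\sigma_e^2$ cancel in each ratio. Writing $\coth^{-1}(\kappa\delta/2)$ as $\tanh(\kappa\delta/2)$ gives \eqref{eq:msb_limit_gls_feasible} and \eqref{eq:mpt_limit_gls_feasible}. Multiplying the $\MZa$ and $\MSB$ limits gives \eqref{eq:mzt_limit_gls_feasible}, using $\tanh(\kappa\delta/2)/\delta=1/(\delta\coth(\kappa\delta/2))$. No step is substantive beyond checking this joint convergence, since the real work is in Propositions~\ref{prop:sm_gls} and~\ref{prop:feasible_s2ar}.
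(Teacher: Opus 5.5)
Your proposal is correct and follows the paper's proof. Both rescale each statistic into a continuous function of $(T^{-1}\sum_{t=2}^T\ty_{t-1}^2,\ \ty_T,\ Ts^2_{AR})$, take the level-moment limits from Proposition~\ref{prop:sm_gls} and the constant limit $Ts^2_{AR}\stackrel{\IP}{\to}\sigma_e^2\delta^2\coth^2(\kappa\delta/2)$ from Proposition~\ref{prop:feasible_s2ar}, and conclude by Slutsky and continuous mapping; your explicit checks of joint convergence (via the common construction of Lemma~\ref{A.0}) and of the continuity set simply make explicit what the paper leaves implicit.
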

\begin{revision}
\begin{proof}
Combine Proposition~\ref{prop:feasible_s2ar} with the GLS sample-moment limits in Proposition~\ref{prop:sm_gls}. Slutsky's theorem gives the limits for $\MZa^{GLS}$ and $\MSB^{GLS}$; multiplication gives $\MZt^{GLS}$, while direct substitution into \eqref{eq:mzt_mpt} gives the $\MP^{GLS}$ limit.
\end{proof}
\end{revision}

\begin{remark}\label{rem:feasible_interpretation}
At the proportional lag rate \(k/\sqrt T\to\kappa\), the effect of
estimating the LRV is summarized by $\coth^2\left(\frac{\kappa\delta}{2}\right).$
Terms involving the LRV itself acquire this inflation
factor, while terms involving its inverse square root acquire the factor
$\tanh\left(\frac{\kappa\delta}{2}\right).$

For finite \(\kappa\), these factors differ from one, so the feasible
limits do not coincide with the oracle limits. They approach the oracle
limits only as \(\kappa\delta\) becomes large. Hence, when the selected lag order is sufficiently large relative to the decay rate of the moving average component, the feasible tests begin to inherit the oracle behavior. In particular, for values of $\delta$ that lie in the zero size region identified in Corollary \ref{cor:support_bounds}, conventional critical values can lead to under-rejection. It is important to note, however, that the feasible zero asymptotic rejection probability region, when it exists, will always be contained within the corresponding oracle region as a result of LRV overestimation counteracting the oracle under-rejection.
\end{remark}

\begin{remark}
\label{rem:kappadelta}
The coefficients in the autoregressive representation of the
near-canceling moving average component decay approximately as
\[
\left(1-\frac{\delta}{\sqrt T}\right)^j
\approx
\exp\left(-\frac{\delta j}{\sqrt T}\right).
\]
When $k/\sqrt T\to\kappa$, the last included coefficient is therefore
of approximate order $e^{-\kappa\delta}$. The product
$\kappa\delta$ measures how far the lag truncation extends into this
decaying sequence, which explains why the additional distortion caused
by LRV estimation depends on $\kappa$ and $\delta$
through their product.
\end{remark}

\section{Finite sample simulations}\label{sec:finite_sample}
The preceding oracle and feasible results suggest two sources of nonstandard behavior under NINW: (1) when the LRV is known, the $M$-tests undersize under conventional critical values for sufficiently small $\delta$, and (2) at the proportional lag rate, $s^2_{AR}$ asymptotically overestimates the LRV, increasing rejection. Hence, the finite sample analysis explores how these two factors interact under data-dependent lag selection rules.

Proposition \ref{prop:feasible_s2ar} and Theorems \ref{thm:feasible_mols}--\ref{thm:feasible_mgls} concern deterministic lag sequences satisfying $k/\sqrt T\to\kappa\in(0,\infty)$. The simulations examine how the mechanisms identified by these results manifest under data-dependent lag selection by MAIC and MBIC. Establishing limiting distributions for these selected-lag procedures is beyond the scope of this paper.

\subsection{Lag selection rules}
In this paper, we consider only the modified criteria $MAIC^{GLS}, MBIC^{GLS}, MAIC^{OLS},$ and $MBIC^{OLS}$ which originate from \citet{NgPerron2001} and \citet{PQ2007}, respectively. We do not consider the standard criteria like AIC and BIC as can be seen in \citet{akaike_new_1974} and \citet{BIC1974}, respectively, as \citet{NgPerron2001} already show their unsuitability under DGPs featuring large negative moving average coefficients. 

The modified criteria choose a value $k_{MIC} = \arg\min_{k\in\{0,\ldots,k_{max}\}} MIC(k)$ where
\begin{equation}
    MIC(k) = \ln(\hat{\sigma}^2_k) + \frac{C_T(\tau_T(k)+k)}{T-k_{max}-1} \label{eq:mic},
\end{equation}
and
\begin{equation}
    \hat{\sigma}^2_k = \frac{1}{T-k_{max}-1}\sum_{t=k_{max}+2}^T \hat{e}^2_{t,k}, \quad \tau_T(k) = \frac{1}{\hat{\sigma}^2_k}\hat{b}^2_0\sum_{t=k_{max}+2}^T \ty^2_{t-1}. \label{eq:mic_components}
\end{equation}
$k_{max}$ is usually set to $k_{max}=[ 12(T/100)^{1/4}]$ in accordance with \citet{Schwert1989} and for MAIC $C_T=2$ whereas for MBIC $C_T=\ln(T-k_{max}-1)$. For the $MIC^{GLS}$ criteria, \eqref{eq:mic} is fed GLS detrended data while, for the $MIC^{OLS}$ criteria, it is fed OLS detrended data. 

\subsection{Size results}\label{sec:size}
We hold $\delta$ fixed across sample sizes and $e_0=u_0=0$ with $e_t\stackrel{\mathrm{i.i.d.}}{\sim}N(0,1)$.Unless otherwise noted, 30,000 Monte Carlo replications are used. $s^2_{AR}$ as defined in \eqref{eq:s^2_ar} is constructed from GLS detrended data for all procedures. To allow lag orders at the rate required by Proposition \ref{prop:feasible_s2ar}, we set $k_{max}=[ 2\sqrt{T} ] $, which yields $k_{max} = 20,31,44$, for $T=100,250, 500$, respectively\footnote{This does not necessarily imply that this extended cap is preferable to the conventional cap. But, under the conventional cap, oversizing is more extreme at low delta and exhibits similar undersizing elsewhere.}. This choice differs from that of \citet{NgPerron2001} and \citet{PQ2007} as we find, for large negative moving average coefficients, $k$ will often bind to $k_{max}$ when the traditional cap is used.

\begin{sidewaystable}[p]
\centering
\normalsize\singlespacing
\begin{threeparttable}
\caption{Empirical size of the $MZ_\alpha^{OLS}$ and $MZ_\alpha^{GLS}$ tests using conventional 5\% asymptotic critical values under extended lag cap}
\label{tab:mza-delta-asymptotic-size}
\setlength{\tabcolsep}{5.5pt}
\renewcommand{\arraystretch}{0.8}
\begin{tabular}{c S[table-format=1.1] S[table-format=-1.3] *{8}{S[table-format=1.3]}}
\toprule
& &
& \multicolumn{4}{c}{$MZ_\alpha^{OLS}$}
& \multicolumn{4}{c}{$MZ_\alpha^{GLS}$} \\
\cmidrule(lr){4-7}\cmidrule(lr){8-11}
{$T$} & {$\delta$} & {$\theta$}
& {$MAIC^{GLS}$}
& {$MAIC^{OLS}$}
& {$MBIC^{GLS}$}
& {$MBIC^{OLS}$}
& {$MAIC^{GLS}$}
& {$MAIC^{OLS}$}
& {$MBIC^{GLS}$}
& {$MBIC^{OLS}$}
\\
\midrule
\multirow{10}{*}{100}
& 0.5 & -0.950 & 0.183 & 0.485 & 0.198 & 0.503 & 0.194 & 0.499 & 0.210 & 0.516 \\
& 1.0 & -0.900 & 0.124 & 0.297 & 0.135 & 0.313 & 0.137 & 0.312 & 0.150 & 0.328 \\
& 1.5 & -0.850 & 0.075 & 0.166 & 0.083 & 0.177 & 0.086 & 0.179 & 0.097 & 0.192 \\
& 2.0 & -0.800 & 0.053 & 0.103 & 0.060 & 0.111 & 0.065 & 0.117 & 0.074 & 0.127 \\
& 2.5 & -0.750 & 0.037 & 0.066 & 0.042 & 0.073 & 0.048 & 0.078 & 0.055 & 0.085 \\
& 3.0 & -0.700 & 0.031 & 0.050 & 0.035 & 0.056 & 0.042 & 0.062 & 0.049 & 0.069 \\
& 3.5 & -0.650 & 0.026 & 0.042 & 0.030 & 0.046 & 0.038 & 0.054 & 0.044 & 0.059 \\
& 4.0 & -0.600 & 0.026 & 0.037 & 0.030 & 0.041 & 0.039 & 0.049 & 0.045 & 0.054 \\
& 4.5 & -0.550 & 0.026 & 0.036 & 0.030 & 0.040 & 0.040 & 0.050 & 0.046 & 0.054 \\
& 5.0 & -0.500 & 0.026 & 0.034 & 0.030 & 0.038 & 0.039 & 0.047 & 0.045 & 0.052 \\
\specialrule{1.05pt}{2pt}{2pt}
\multirow{10}{*}{250}
& 0.5 & -0.968 & 0.059 & 0.353 & 0.064 & 0.369 & 0.063 & 0.356 & 0.069 & 0.372 \\
& 1.0 & -0.937 & 0.027 & 0.131 & 0.030 & 0.140 & 0.030 & 0.135 & 0.034 & 0.144 \\
& 1.5 & -0.905 & 0.013 & 0.048 & 0.015 & 0.053 & 0.016 & 0.052 & 0.018 & 0.057 \\
& 2.0 & -0.874 & 0.009 & 0.024 & 0.010 & 0.027 & 0.011 & 0.027 & 0.014 & 0.030 \\
& 2.5 & -0.842 & 0.006 & 0.014 & 0.007 & 0.016 & 0.008 & 0.017 & 0.011 & 0.020 \\
& 3.0 & -0.810 & 0.006 & 0.012 & 0.008 & 0.014 & 0.010 & 0.016 & 0.013 & 0.019 \\
& 3.5 & -0.779 & 0.007 & 0.012 & 0.010 & 0.014 & 0.012 & 0.017 & 0.016 & 0.020 \\
& 4.0 & -0.747 & 0.007 & 0.011 & 0.010 & 0.014 & 0.013 & 0.018 & 0.018 & 0.022 \\
& 4.5 & -0.715 & 0.010 & 0.013 & 0.014 & 0.016 & 0.017 & 0.020 & 0.023 & 0.025 \\
& 5.0 & -0.684 & 0.012 & 0.015 & 0.017 & 0.019 & 0.019 & 0.022 & 0.025 & 0.027 \\
\specialrule{1.05pt}{2pt}{2pt}
\multirow{10}{*}{500}
& 0.5 & -0.978 & 0.022 & 0.215 & 0.024 & 0.227 & 0.024 & 0.217 & 0.026 & 0.229 \\
& 1.0 & -0.955 & 0.006 & 0.041 & 0.007 & 0.045 & 0.007 & 0.043 & 0.008 & 0.047 \\
& 1.5 & -0.933 & 0.002 & 0.010 & 0.002 & 0.011 & 0.003 & 0.012 & 0.003 & 0.013 \\
& 2.0 & -0.911 & 0.002 & 0.004 & 0.002 & 0.005 & 0.002 & 0.006 & 0.003 & 0.006 \\
& 2.5 & -0.888 & 0.001 & 0.002 & 0.001 & 0.003 & 0.002 & 0.004 & 0.002 & 0.004 \\
& 3.0 & -0.866 & 0.001 & 0.002 & 0.002 & 0.003 & 0.002 & 0.004 & 0.003 & 0.005 \\
& 3.5 & -0.843 & 0.002 & 0.003 & 0.002 & 0.003 & 0.003 & 0.004 & 0.004 & 0.005 \\
& 4.0 & -0.821 & 0.002 & 0.003 & 0.004 & 0.004 & 0.004 & 0.005 & 0.006 & 0.007 \\
& 4.5 & -0.799 & 0.003 & 0.004 & 0.006 & 0.005 & 0.006 & 0.007 & 0.010 & 0.010 \\
& 5.0 & -0.776 & 0.004 & 0.005 & 0.009 & 0.007 & 0.008 & 0.009 & 0.013 & 0.013 \\
\bottomrule
\end{tabular}
\begin{tablenotes}[flushleft]\footnotesize\singlespacing
\item \textit{Notes:} Critical values are -20.5 for $\MZa^{OLS}$ and -17.3 for $\MZa^{GLS}.$
\end{tablenotes}
\end{threeparttable}
\end{sidewaystable}

 Table \ref{tab:mza-delta-asymptotic-size} shows empirical size results for $\MZa$ for brevity as all feasible tests display qualitatively similar results. The main pattern as $T$ increases is a transition from oversizing to undersizing at each fixed $\delta$, or undersizing to more extreme undersizing. Hence, a certain cell reaching near nominal size is more coincidental to the exact parameter specification rather than a well-calibrated test. In addition, $\MZa^{OLS}$ is generally more undersized (or less oversized) than $\MZa^{GLS}$ across all procedures. Also, the $MIC^{OLS}$ criteria oversize relative to their GLS counterparts, which is especially strong at small $\delta$ as a result of selecting significantly smaller $k$. Table~\ref{tab:mza-maicgls-summary-sqrt2T} provides a more focused explanation of these results for the $MAIC^{GLS}$ rule applied to $\MZa^{GLS}$ (although similar results hold for all specifications).  

\begin{table}[t]
\centering
\normalsize\singlespacing
\begin{threeparttable}
\caption{Size breakdown for $\MZa^{GLS}$ with $MAIC^{GLS}$ under extended lag cap}
\label{tab:mza-maicgls-summary-sqrt2T}
\setlength{\tabcolsep}{8pt}
\renewcommand{\arraystretch}{1.15}
\begin{tabular}{c S[table-format=-1.3] S[table-format=1.1] S[table-format=1.3] S[table-format=1.3] S[table-format=1.3] S[table-format=2.1] S[table-format=1.3] S[table-format=2.2] S[table-format=1.3]}
\toprule
& & & \multicolumn{3}{c}{$MZ_\alpha^{GLS}$ size} & & & & \\
\cmidrule(lr){4-6}
{$T$} & {$\theta$} & {$\delta$} & {feasible} & {$R^{pop}_k\omega^2$} & {oracle} & {$k$} & {$k/\sqrt{T}$} & {$ s^2_{AR}/\omega^2$} & {$ H$} \\
\midrule
\multirow{4}{*}{100} & -0.950 & 0.5 & 0.194 & 0.129 & 0.000 & 10.0 & 1.000 & 16.80 & 1.152 \\
& -0.850 & 1.5 & 0.086 & 0.043 & 0.000 & 8.0 & 0.800 & 2.98 & 1.072 \\
& -0.750 & 2.5 & 0.048 & 0.018 & 0.000 & 6.0 & 0.600 & 1.93 & 1.001 \\
& -0.500 & 5.0 & 0.039 & 0.026 & 0.001 & 3.0 & 0.300 & 1.28 & 0.928 \\
\midrule
\multirow{4}{*}{250} & -0.968 & 0.5 & 0.063 & 0.032 & 0.000 & 21.0 & 1.328 & 11.29 & 1.187 \\
& -0.905 & 1.5 & 0.016 & 0.004 & 0.000 & 15.0 & 0.949 & 2.41 & 1.035 \\
& -0.842 & 2.5 & 0.008 & 0.001 & 0.000 & 11.0 & 0.696 & 1.77 & 0.996 \\
& -0.684 & 5.0 & 0.019 & 0.004 & 0.000 & 6.0 & 0.379 & 1.34 & 0.961 \\
\midrule
\multirow{4}{*}{500} & -0.978 & 0.5 & 0.024 & 0.010 & 0.000 & 32.0 & 1.431 & 10.25 & 1.134 \\
& -0.933 & 1.5 & 0.003 & 0.000 & 0.000 & 23.0 & 1.029 & 2.36 & 1.017 \\
& -0.888 & 2.5 & 0.002 & 0.000 & 0.000 & 16.0 & 0.716 & 1.77 & 0.988 \\
& -0.776 & 5.0 & 0.008 & 0.001 & 0.000 & 9.0 & 0.402 & 1.37 & 0.970 \\
\bottomrule
\end{tabular}
\begin{tablenotes}[flushleft]\footnotesize\singlespacing
\item \textit{Notes}: The feasible, \(R_k^{pop}\omega_T^2\), and oracle columns report rejection frequencies using \(s_{AR}^2\), the population AR(\(k\)) LRV evaluated at each replication’s selected lag, and the true LRV, respectively, in the same statistic with critical value \(-17.3\). The middle column retains population truncation distortion while removing LRV coefficient-estimation error. The columns \(k\), \(k/\sqrt T\), \(s_{AR}^2/\omega_T^2\), and \(H\) report medians across replications.
\end{tablenotes}
\end{threeparttable}
\end{table}

 Two important definitions are 
 \begin{equation}
     H = \frac{s^2_{AR}/\omega_T^2}{R^{pop}_k}, \qquad R^{pop}_k = \frac{s^{2,pop}_{AR}}{\omega_T^2} = \frac{(1+(-\theta_T)^{k+1})(1+(-\theta_T)^{k+2})}{(1-(-\theta_T)^{k+1})(1-(-\theta_T)^{k+2})}, \label{eq:H}
 \end{equation}
 so that $R^{pop}_k$ is the inflation of the LRV estimate as a result of truncating the infinite AR process at lag $k$ even if the population AR coefficients are known\footnote{The expression for $R^{pop}_k$ follows from solving the population Yule--Walker equations for the AR($k$) best linear approximation to the stationary MA(1) process and evaluating its implied LRV. Note that under the lag rate in \ref{eq:rate_con}, $R^{pop}_k \to \coth^2(\kappa\delta/2).$}. Hence, $H$ (calculated per replication) measures the remaining finite sample distortion in $s^2_{AR}/\omega_T^2$ after accounting for this population truncation effect. Hence, $H=1$ implies that the realized LRV inflation is exactly that implied by the truncation at the selected lag. The oracle rejection rate is essentially zero throughout the table, confirming the strong conservatism of the statistic under conventional critical values. Population truncation offsets part of this conservatism. For example, at $T=100, \delta=0.5$, replacing the true LRV by $R^{pop}_k \omega_T^2$ raises rejection from near zero to 0.129, compared with feasible size of 0.194. The remaining difference between the latter two can be explained by estimation error beyond the population truncation benchmark. Similarly, the near nominal feasible size at $T=100, \delta=2.5$ combines an oracle size of near zero and a truncation-only size of 0.018, and an additional finite sample estimation effect. Thus, its proximity to 0.05 does not indicate a well-calibrated test.

 Median $s^2_{AR}/\omega_T^2$ is consistently over one and is particularly large at smaller $\delta$. Median $H$ is close to one, indicating that population AR truncation explains much of the LRV overestimation. This alone does not imply that truncation alone explains feasible size since rejection depends on the upper tail of $s^2_{AR}$, whereas $H$ reports a median. Indeed, the consistently higher feasible than truncation-only sizes illustrate the residual estimation variability is consequential.

 As $T$ increases, feasible size declines toward the conservative oracle behavior. For small and moderate $\delta$, this is accompanied by a slowing decline in LRV overestimation (at $\delta=5$ the overestimation ratio modestly rises). Overall, the results support that feasible size is a result of an interaction between intrinsic oracle conservatism and truncation induced LRV overestimation. 

    The previous simulations hold $\delta$ fixed as $T$ increases, differing from the conventional practice of holding $\theta$ fixed. We repeated the size simulations, holding $\theta$ fixed at -0.9 and -0.8, for which $\delta_T = 0.1\sqrt{T}$ and $\delta_T = 0.2\sqrt{T}$, respectively. In this case, as $T$ increases, so does $\delta$ at each fixed $\theta$, so the DGP moves further away from the small $\delta$ NINW region as $T$ increases. We find that size exhibits a non-monotone pattern in this case: at $T=100$, the tests are oversized, and at $T=250,500,1000$, size is near 0, while by $T=10000$, size begins to increase towards the nominal level. This is consistent with the idea that at small sample sizes LRV overestimation dominates the intrinsic conservatism displayed by the statistics. In addition, large sample simulations were run with $\delta=0.5$ and $\delta=5$ for up to $T=20000$ and we find that feasible size decreases towards zero for all lag selection rules (albeit at different rates for each rule), consistent with Remark \ref{rem:feasible_interpretation}.

\subsection{Size-adjusted power results}\label{sec:power}
To assess the power of the tests, we set $c=-13.5$ for the local alternative. We now consider more specifically the trade-offs of using the extended cap or the conventional cap for $k$.

\begin{sidewaystable}[tp]
\centering
\normalsize\singlespacing
\begin{threeparttable}
\caption{Size-adjusted power of the $MZ_\alpha^{OLS}$ and $MZ_\alpha^{GLS}$ tests under extended lag cap}
\label{tab:mza-delta-sa-power-extended}
\setlength{\tabcolsep}{5.5pt}
\renewcommand{\arraystretch}{0.8}
\begin{tabular}{c S[table-format=1.1] S[table-format=-1.3] *{8}{S[table-format=1.3]}}
\toprule
& &
& \multicolumn{4}{c}{$MZ_\alpha^{OLS}$}
& \multicolumn{4}{c}{$MZ_\alpha^{GLS}$} \\
\cmidrule(lr){4-7}\cmidrule(lr){8-11}
{$T$} & {$\delta$} & {$\theta$}
& {$MAIC^{GLS}$}
& {$MAIC^{OLS}$}
& {$MBIC^{GLS}$}
& {$MBIC^{OLS}$}
& {$MAIC^{GLS}$}
& {$MAIC^{OLS}$}
& {$MBIC^{GLS}$}
& {$MBIC^{OLS}$}
\\
\midrule
\multirow{10}{*}{100}
& 0.5 & -0.950 & 0.105 & 0.123 & 0.104 & 0.119 & 0.112 & 0.111 & 0.111 & 0.110 \\
& 1.0 & -0.900 & 0.151 & 0.185 & 0.147 & 0.182 & 0.153 & 0.202 & 0.152 & 0.202 \\
& 1.5 & -0.850 & 0.174 & 0.259 & 0.179 & 0.254 & 0.174 & 0.279 & 0.179 & 0.281 \\
& 2.0 & -0.800 & 0.165 & 0.273 & 0.171 & 0.277 & 0.165 & 0.273 & 0.171 & 0.275 \\
& 2.5 & -0.750 & 0.171 & 0.254 & 0.174 & 0.256 & 0.172 & 0.254 & 0.175 & 0.255 \\
& 3.0 & -0.700 & 0.177 & 0.236 & 0.182 & 0.237 & 0.178 & 0.237 & 0.184 & 0.238 \\
& 3.5 & -0.650 & 0.178 & 0.226 & 0.183 & 0.229 & 0.181 & 0.228 & 0.186 & 0.233 \\
& 4.0 & -0.600 & 0.199 & 0.228 & 0.202 & 0.231 & 0.200 & 0.235 & 0.207 & 0.237 \\
& 4.5 & -0.550 & 0.207 & 0.231 & 0.212 & 0.235 & 0.212 & 0.235 & 0.221 & 0.240 \\
& 5.0 & -0.500 & 0.222 & 0.247 & 0.235 & 0.252 & 0.227 & 0.253 & 0.240 & 0.260 \\
\specialrule{1.05pt}{2pt}{2pt}
\multirow{10}{*}{250}
& 0.5 & -0.968 & 0.112 & 0.156 & 0.114 & 0.151 & 0.112 & 0.178 & 0.114 & 0.175 \\
& 1.0 & -0.937 & 0.110 & 0.270 & 0.112 & 0.269 & 0.110 & 0.265 & 0.112 & 0.264 \\
& 1.5 & -0.905 & 0.103 & 0.253 & 0.103 & 0.253 & 0.104 & 0.252 & 0.106 & 0.251 \\
& 2.0 & -0.874 & 0.112 & 0.218 & 0.109 & 0.220 & 0.113 & 0.214 & 0.115 & 0.215 \\
& 2.5 & -0.842 & 0.107 & 0.185 & 0.106 & 0.186 & 0.112 & 0.183 & 0.113 & 0.184 \\
& 3.0 & -0.810 & 0.124 & 0.184 & 0.127 & 0.183 & 0.132 & 0.183 & 0.136 & 0.181 \\
& 3.5 & -0.779 & 0.141 & 0.187 & 0.140 & 0.185 & 0.144 & 0.185 & 0.150 & 0.183 \\
& 4.0 & -0.747 & 0.162 & 0.198 & 0.166 & 0.203 & 0.168 & 0.199 & 0.178 & 0.204 \\
& 4.5 & -0.715 & 0.182 & 0.215 & 0.187 & 0.221 & 0.186 & 0.215 & 0.197 & 0.221 \\
& 5.0 & -0.684 & 0.197 & 0.224 & 0.198 & 0.227 & 0.207 & 0.229 & 0.216 & 0.233 \\
\specialrule{1.05pt}{2pt}{2pt}
\multirow{10}{*}{500}
& 0.5 & -0.978 & 0.088 & 0.209 & 0.088 & 0.206 & 0.088 & 0.217 & 0.088 & 0.215 \\
& 1.0 & -0.955 & 0.086 & 0.265 & 0.083 & 0.266 & 0.091 & 0.261 & 0.086 & 0.262 \\
& 1.5 & -0.933 & 0.080 & 0.196 & 0.067 & 0.195 & 0.086 & 0.191 & 0.078 & 0.190 \\
& 2.0 & -0.911 & 0.078 & 0.156 & 0.066 & 0.152 & 0.087 & 0.152 & 0.079 & 0.146 \\
& 2.5 & -0.888 & 0.094 & 0.151 & 0.081 & 0.145 & 0.102 & 0.149 & 0.098 & 0.142 \\
& 3.0 & -0.866 & 0.109 & 0.154 & 0.093 & 0.150 & 0.116 & 0.150 & 0.112 & 0.145 \\
& 3.5 & -0.843 & 0.134 & 0.172 & 0.119 & 0.168 & 0.144 & 0.172 & 0.141 & 0.165 \\
& 4.0 & -0.821 & 0.151 & 0.176 & 0.138 & 0.180 & 0.159 & 0.177 & 0.160 & 0.177 \\
& 4.5 & -0.799 & 0.177 & 0.199 & 0.160 & 0.202 & 0.187 & 0.201 & 0.188 & 0.203 \\
& 5.0 & -0.776 & 0.201 & 0.221 & 0.191 & 0.221 & 0.212 & 0.225 & 0.214 & 0.224 \\
\bottomrule
\end{tabular}
\begin{tablenotes}[flushleft]\footnotesize\singlespacing
\item \textit{Notes:} Size-adjusted power against the local alternative $\rho=1+c/T$ with $c=-13.5$ ($\rho=0.865, 0.946, 0.973$ for $T=100,\,250,\,500$, respectively). Lags are selected from $k\in[0,k_{max}]$ with $k_{max}=20, 31, 44$ for $T=100,\,250,\,500$, respectively.
\end{tablenotes}
\end{threeparttable}
\end{sidewaystable}
\begin{sidewaystable}[tp]
\centering
\normalsize\singlespacing
\begin{threeparttable}
\caption{Size-adjusted power of the $MZ_\alpha^{OLS}$ and $MZ_\alpha^{GLS}$ tests under conventional lag cap}
\label{tab:mza-delta-sa-power-conventional}
\setlength{\tabcolsep}{5.5pt}
\renewcommand{\arraystretch}{0.8}
\begin{tabular}{c S[table-format=1.1] S[table-format=-1.3] *{8}{S[table-format=1.3]}}
\toprule
& &
& \multicolumn{4}{c}{$MZ_\alpha^{OLS}$}
& \multicolumn{4}{c}{$MZ_\alpha^{GLS}$} \\
\cmidrule(lr){4-7}\cmidrule(lr){8-11}
{$T$} & {$\delta$} & {$\theta$}
& {$MAIC^{GLS}$}
& {$MAIC^{OLS}$}
& {$MBIC^{GLS}$}
& {$MBIC^{OLS}$}
& {$MAIC^{GLS}$}
& {$MAIC^{OLS}$}
& {$MBIC^{GLS}$}
& {$MBIC^{OLS}$}
\\
\midrule
\multirow{10}{*}{100}
& 0.5 & -0.950 & 0.120 & 0.115 & 0.119 & 0.113 & 0.119 & 0.107 & 0.118 & 0.105 \\
& 1.0 & -0.900 & 0.167 & 0.179 & 0.167 & 0.176 & 0.167 & 0.198 & 0.167 & 0.194 \\
& 1.5 & -0.850 & 0.182 & 0.252 & 0.184 & 0.248 & 0.182 & 0.277 & 0.185 & 0.274 \\
& 2.0 & -0.800 & 0.175 & 0.280 & 0.176 & 0.280 & 0.175 & 0.278 & 0.175 & 0.280 \\
& 2.5 & -0.750 & 0.175 & 0.262 & 0.179 & 0.264 & 0.175 & 0.262 & 0.179 & 0.263 \\
& 3.0 & -0.700 & 0.179 & 0.240 & 0.181 & 0.239 & 0.180 & 0.240 & 0.183 & 0.240 \\
& 3.5 & -0.650 & 0.184 & 0.232 & 0.186 & 0.234 & 0.188 & 0.234 & 0.189 & 0.238 \\
& 4.0 & -0.600 & 0.200 & 0.234 & 0.200 & 0.235 & 0.200 & 0.239 & 0.203 & 0.241 \\
& 4.5 & -0.550 & 0.212 & 0.237 & 0.213 & 0.240 & 0.216 & 0.244 & 0.219 & 0.247 \\
& 5.0 & -0.500 & 0.228 & 0.254 & 0.232 & 0.256 & 0.228 & 0.258 & 0.236 & 0.262 \\
\specialrule{1.05pt}{2pt}{2pt}
\multirow{10}{*}{250}
& 0.5 & -0.968 & 0.131 & 0.138 & 0.130 & 0.136 & 0.130 & 0.161 & 0.130 & 0.160 \\
& 1.0 & -0.937 & 0.145 & 0.259 & 0.145 & 0.254 & 0.144 & 0.255 & 0.145 & 0.252 \\
& 1.5 & -0.905 & 0.144 & 0.268 & 0.142 & 0.271 & 0.144 & 0.267 & 0.142 & 0.268 \\
& 2.0 & -0.874 & 0.161 & 0.240 & 0.154 & 0.240 & 0.161 & 0.236 & 0.159 & 0.237 \\
& 2.5 & -0.842 & 0.151 & 0.208 & 0.145 & 0.211 & 0.156 & 0.208 & 0.150 & 0.206 \\
& 3.0 & -0.810 & 0.161 & 0.205 & 0.151 & 0.200 & 0.167 & 0.204 & 0.161 & 0.199 \\
& 3.5 & -0.779 & 0.165 & 0.200 & 0.156 & 0.198 & 0.169 & 0.200 & 0.166 & 0.194 \\
& 4.0 & -0.747 & 0.186 & 0.218 & 0.177 & 0.213 & 0.193 & 0.220 & 0.191 & 0.216 \\
& 4.5 & -0.715 & 0.200 & 0.227 & 0.198 & 0.232 & 0.203 & 0.226 & 0.207 & 0.229 \\
& 5.0 & -0.684 & 0.211 & 0.234 & 0.209 & 0.237 & 0.220 & 0.239 & 0.224 & 0.241 \\
\specialrule{1.05pt}{2pt}{2pt}
\multirow{10}{*}{500}
& 0.5 & -0.978 & 0.117 & 0.171 & 0.118 & 0.168 & 0.117 & 0.186 & 0.118 & 0.184 \\
& 1.0 & -0.955 & 0.144 & 0.264 & 0.143 & 0.265 & 0.143 & 0.260 & 0.143 & 0.262 \\
& 1.5 & -0.933 & 0.166 & 0.227 & 0.163 & 0.229 & 0.165 & 0.220 & 0.165 & 0.222 \\
& 2.0 & -0.911 & 0.181 & 0.212 & 0.160 & 0.210 & 0.181 & 0.209 & 0.180 & 0.208 \\
& 2.5 & -0.888 & 0.186 & 0.208 & 0.155 & 0.203 & 0.188 & 0.204 & 0.177 & 0.200 \\
& 3.0 & -0.866 & 0.185 & 0.206 & 0.145 & 0.196 & 0.186 & 0.201 & 0.173 & 0.191 \\
& 3.5 & -0.843 & 0.196 & 0.218 & 0.159 & 0.207 & 0.204 & 0.217 & 0.184 & 0.201 \\
& 4.0 & -0.821 & 0.198 & 0.218 & 0.166 & 0.206 & 0.202 & 0.215 & 0.189 & 0.205 \\
& 4.5 & -0.799 & 0.210 & 0.230 & 0.182 & 0.218 & 0.225 & 0.234 & 0.212 & 0.222 \\
& 5.0 & -0.776 & 0.231 & 0.244 & 0.204 & 0.234 & 0.238 & 0.246 & 0.227 & 0.238 \\
\bottomrule
\end{tabular}
\begin{tablenotes}[flushleft]\footnotesize\singlespacing
\item \textit{Notes:} Size-adjusted power against the local alternative $\rho=1+c/T$ with $c=-13.5$ ($\rho=0.865, 0.946, 0.973$ for $T=100,\,250,\,500$, respectively). Lags are selected from $k\in[0,k_{max}]$ with $k_{max}=12, 15, 17$ for $T=100,\,250,\,500$, respectively.
\end{tablenotes}
\end{threeparttable}
\end{sidewaystable}

Tables~\ref{tab:mza-delta-sa-power-extended} and \ref{tab:mza-delta-sa-power-conventional} report size-adjusted power under the extended cap and conventional cap for $\MZa$, respectively\footnote{Note that $\MSB^{OLS}$ produced similar power results to that of $\MZa$ in spite of its favorable oracle power function, demonstrating that oracle properties need not pass on to the feasible tests.}. A clear result of the extended cap is that power frequently deteriorates as the sample size increases. For $\delta=1.5,\dots,4.5$, power at $T=500$ is below its $T=100$ value for every procedure, and all but two of these procedures have power decrease monotonically. Since each cell is size-adjusted by construction, this decline reflects weaker discriminatory ability between the null and local alternative, not increasing conservatism under conventional critical values. On the other hand, the conventional cap generally permits greater size-adjusted power but weakly and not uniformly. In addition, the systematic decline in power as the sample size increases is less systematic: many procedures lose power from $T=100$ to $T=250$, but recover by $T=500$, although continued declines and other patterns are also present.

For a fixed lag selection rule, $\MZa^{GLS}$ has a slight power advantage over $\MZa^{OLS}$, but this ranking is not uniformly true. By contrast, OLS lag selection rules generally produce greater size-adjusted power than their GLS counterparts under both caps. This advantage is particularly apparent under the extended cap, but remains pertinent under the conventional cap as well. Differences between MAIC and MBIC are usually secondary to those associated with the cap and GLS or OLS construction of the information criterion though not entirely negligible. 

A related pattern appears in Table~VI.B of \citet{NgPerron2001}. For $\MZa^{GLS}$ and $MAIC^{GLS}$, at $\theta=-0.8$, the table shows size-adjusted power declining as $T$ increases. Their simulations were fixed-$\theta$ while ours are fixed-$\delta$, so they are not directly comparable, but their results show that decreasing size-adjusted power with sample size is not unique to the simulation design present here.

We also ran simulations which replaced the Gaussian innovation distribution by standardized $t_5$ and Rademacher innovations. Rademacher innovations resulted in a substantial decrease in size-adjusted power while $t_5$ innovations slightly increased size-adjusted power in general. These results support the importance of the innovation distribution dependence observed in Remark \ref{rem:innovation_dep}.

Another complicating issue not shown in these simulations, is the problem of power reversal. \citet{PQ2007} recommend the OLS construction of the modified information criterion instead of GLS as a remedy, and our results show it to generally be favorable in terms of size adjusted power. However, the OLS criteria drastically oversize at low $\delta$ and severely undersize elsewhere, producing a severe size-power tradeoff.

\FloatBarrier

\section{Conclusion}\label{sec:conclusion}
We illustrate the limitations of the $M$-tests under nearly integrated nearly white noise and an unknown linear time trend. The tests' limiting distributions are contaminated by noise terms, with the $M^{GLS}$ tests being more so, causing dependence on the innovation distribution. When the LRV is known, conventional critical values result in a region of zero asymptotic rejection probability for sufficiently small $\delta$ and several tests exhibit substantial loss of asymptotic power. Yet, the Gaussian power envelope itself remains identical to the standard envelope in \citet{ERS1996}. Hence, power losses reflect inefficiency of the tests in utilizing the information available rather than a deterioration of the information itself. The oracle $\MSB^{OLS}$ test's power function provides an exception.

Feasible implementation introduces a separate distortion. For deterministic lag orders satisfying $k/\sqrt{T} \to \kappa$ $(0 < \kappa < \infty)$, the autoregressive LRV estimator converges, relative to the true LRV, to $\coth^2(\kappa\delta/2)$. Thus, oracle and feasible limiting distribution equivalence does not hold under the lag rate which keeps the $M$-tests bounded. The feasible tests face two forces which pull in different directions: LRV overestimation which encourages rejection and oracle undersizing. This implies, near nominal size, when obtained, is a result of these two forces balancing rather than a well calibrated test.

The finite sample simulations under data-dependent lag selection rules shows the practical consequences of this interaction. The choice of lag cap and detrending for the modified information criterion result in vast differences in size and size-adjusted power, and limitations exist for all that are not resolved as the sample size increases. Among the procedures examined, no single one is uniformly satisfactory. These results motivate further research into procedures designed explicitly to handle a vanishing LRV.

\newpage
\appendix
\numberwithin{equation}{section}
\renewcommand{\theHequation}{appendix.\thesection.\arabic{equation}}
\renewcommand{\theHappendixlemma}{appendix.\thesection.\arabic{appendixlemma}}
\let\lemma\appendixlemma
\let\endlemma\endappendixlemma
\makeatletter
\renewcommand{\@seccntformat}[1]{}
\makeatother
\section*{\LARGE Appendices}
Appendix \ref{sec:appendix_a} proves Propositions \ref{prop:sm_ols}--\ref{prop:sm_gls}. Appendix \ref{sec:appendix_b} proves Theorem \ref{thm:power_envelope}. Appendix \ref{sec:appendix_c} proves Proposition \ref{prop:feasible_s2ar}
\section{Appendix A}\label{sec:appendix_a}
\begin{revision}
This appendix works under model~\eqref{eq:model} and Assumption~\ref{ass:innovations}. Define
\[
W_T(r) = \frac{1}{\sigma_e \sqrt{T}}\sum_{t=1}^{[Tr]} e_t, \quad 0 \leq r \leq 1.
\]
Throughout, $\wto$ denotes convergence in distribution and $\stackrel{\IP}{\to}$ denotes convergence in probability. Process convergence is in $D[0,1]$; joint convergence with scalar or vector quantities is in the corresponding product space.
\end{revision}
\begin{lemma}\label{A.0}
    \rev{Grant Assumption~\ref{ass:innovations} and let $F$ denote the distribution of $e_t/\sigma_e$.} Then there exist independent random variables $\varepsilon_1, \varepsilon_\infty \sim F$ and a standard Brownian motion $W$, independent of $(\varepsilon_1, \varepsilon_\infty)$, such that
    \[
    \left(\frac{e_1}{\sigma_e}, \frac{e_T}{\sigma_e} , W_T\right) \wto (\varepsilon_1, \varepsilon_{\infty}, W)
    \]
    in $\mathbb{R}^2 \times D[0,1].$
\end{lemma}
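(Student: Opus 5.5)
The approach is to split the sample into the two boundary innovations and the bulk. Define the modified partial-sum process
\[
W_T^*(r)=\frac{1}{\sigma_e\sqrt{T}}\sum_{t=2}^{\min([Tr],T-1)} e_t ,
\]
which uses only $e_2,\dots,e_{T-1}$. Then
\[
\sup_{r\in[0,1]}|W_T(r)-W_T^*(r)|\le \frac{|e_1|+|e_T|}{\sigma_e\sqrt{T}}=\op(1),
\]
because $e_1$ and $e_T$ are each $\Op(1)$ with distribution $F$.

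By the i.i.d. assumption, the triple $(e_1/\sigma_e,\ e_T/\sigma_e,\ W_T^*)$ has independent components for each $T\ge 3$. The first two coordinates have the exact law $F$ for every $T$. Donsker's invariance principle, which needs only finite variance and is available under Assumption~\ref{ass:innovations}, gives $W_T\wto W$ in $D[0,1]$. The uniform bound above then yields $W_T^*\wto W$ as well.

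Since each coordinate converges marginally and the coordinates are independent for every $T$, the joint law is the product of the marginal laws. Product measures converge weakly whenever their factors do; one can verify this through the convergence-determining class of products of bounded Lipschitz functions, or through tightness plus finite-dimensional distributions. Hence
\[
\left(\frac{e_1}{\sigma_e},\ \frac{e_T}{\sigma_e},\ W_T^*\right)\wto(\varepsilon_1,\varepsilon_\infty,W)
\]
in $\mathbb{R}^2\times D[0,1]$, with the three limit components independent. Replacing $W_T^*$ by $W_T$ costs only an $\op(1)$ term in the Skorokhod metric, which is bounded by the uniform metric. Slutsky's lemma in the product metric space then gives the stated result.

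The main technical points are two. First, the product of laws on $D[0,1]$ must be handled correctly; this is cleanest because $D[0,1]$ with the Skorokhod $J_1$ topology is separable, so weak convergence of independent components implies joint convergence. Second, the truncation at $T-1$ in $W_T^*$ must not disturb the limit, which holds because the uniform gap bound above is $\op(1)$.

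A remark is also needed on the index $T$ in $e_T$. Its law does not depend on $T$, but it is a different random variable for each $T$. The result is therefore a statement about distributional convergence only, and $\varepsilon_\infty$ is defined as a variable on the limit probability space. An a.s. version, if needed later, can be obtained by Skorokhod representation.
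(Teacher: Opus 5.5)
Your proposal is correct and follows the paper's proof essentially step for step. You use the same trimmed process $W_T^*$, the same uniform bound $(|e_1|+|e_T|)/(\sigma_e\sqrt T)$, the exact independence of $(e_1,e_T,W_T^*)$ for each $T$, and Slutsky's lemma, and you additionally make explicit why convergence of independent marginals gives joint convergence on the separable space $\mathbb{R}^2\times D[0,1]$, a step the paper leaves implicit.
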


\begin{proof}
    Define the trimmed partial sum process 
    \[
    W^*_T(r) = \frac{1}{\sigma_e \sqrt{T}} \sum_{t=2}^{\min([Tr], T-1)} e_t.
    \]
    For each $T$, $W^*_T$ is independent of $e_1$ and $e_T$. By the FCLT, $W^*_T \wto W$. Moreover,
    \[
    \sup_{0 \leq r \leq 1} | W_T(r) - W^*_T(r) | \leq \frac{|e_1| + |e_T|}{\sigma_e \sqrt{T}} = o_{\IP}(1).
    \]
    Since $e_1/\sigma_e$ and $e_T/\sigma_e$ are independent and both have distribution $F$, the claim follows from Slutsky's theorem.
\end{proof}
Note that this implies that 
\begin{equation}
\frac{X_{[Tr]}}{\sqrt{T}} \wto \sigma_e J_c(r), \quad 0 \leq r \leq 1,\label{eq:A.1}
\end{equation}
where $X_t$ is defined in the following lemma. Moreover, the convergence is joint with the boundary innovations. That is,
\[
\left(\frac{e_1}{\sigma_e}, \frac{e_T}{\sigma_e} , \frac{X_{[Tr]}}{\sigma_e \sqrt{T}}\right) \wto (\varepsilon_1, \varepsilon_{\infty}, J_c(r)).
\]
\begin{lemma}\label{lem:A1}
    \rev{Suppose \eqref{eq:model} and Assumption~\ref{ass:innovations} hold. Define $X_t=\rho_TX_{t-1}+e_t$, $X_0=0$, $a_T=(1-\delta/\sqrt T)/(1+c/T)$, and $b_T=(\delta/\sqrt T+c/T)/(1+c/T)$. Then $a_T\to1$, $\sqrt T b_T\to\delta$, and}
    \begin{align}
        &\text{(a)} \quad u_t = a_Te_t + b_TX_t, \label{eq:A.2} \\
        &\text{(b)} \quad \ty_t = \tilde{u}_t = a_T\tilde{e}_t + b_T\tilde{X}_t \label{eq:A.3}
    \end{align}
    where $\widetilde{\cdot}$ represents either GLS or OLS detrending being applied.
\end{lemma}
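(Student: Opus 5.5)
The plan is to verify the limits of $a_T$ and $b_T$ directly, prove (a) by induction on $t$ after matching coefficients in the ARMA recursion, and obtain (b) from the fact that both detrending operators are linear maps that annihilate the deterministic component.

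\emph{Limits of $a_T$ and $b_T$.} Since $c$ and $\delta$ are fixed, $a_T=(1-\delta/\sqrt T)/(1+c/T)\to1$. Also $\sqrt T\,b_T=(\delta+c/\sqrt T)/(1+c/T)\to\delta$.

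\emph{Part (a).} I look for constants $a,b$ such that $u_t=ae_t+bX_t$ satisfies $u_t-\rho_Tu_{t-1}=e_t+\theta_Te_{t-1}$. Substituting $X_t-\rho_TX_{t-1}=e_t$ gives
\[
u_t-\rho_Tu_{t-1}=(a+b)e_t-a\rho_Te_{t-1}.
\]
Matching coefficients requires $a+b=1$ and $a\rho_T=-\theta_T$. Hence $a=-\theta_T/\rho_T=a_T$ and $b=1-a_T=(c/T+\delta/\sqrt T)/(1+c/T)=b_T$, which are exactly the constants in the statement.

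I then confirm the representation by induction on $t$.
\begin{itemize}
\item Base case. Using $u_0=e_0=X_0=0$, we have $u_1=v_1=e_1$, and $a_Te_1+b_TX_1=(a_T+b_T)e_1=e_1$.
\item Inductive step. If $u_{t-1}=a_Te_{t-1}+b_TX_{t-1}$, then
\[
u_t=\rho_T(a_Te_{t-1}+b_TX_{t-1})+e_t+\theta_Te_{t-1}=a_Te_t+b_TX_t .
\]
This uses $\rho_Ta_T+\theta_T=0$ and $e_t=a_Te_t+b_Te_t$ together with $b_TX_t=b_T\rho_TX_{t-1}+b_Te_t$.
\end{itemize}

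\emph{Part (b).} Write $y=Z\beta+u$, where $Z$ has rows $z_t'$.

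\begin{itemize}
\item OLS. Detrending is $\tilde y=M_Zy$ with $M_Z=I-Z(Z'Z)^{-1}Z'$. Since $M_ZZ=0$, we get $\tilde y=M_Zu=\tilde u$.
\item GLS. Quasi-differencing is a linear map $Q_{\abar}$ with $y^{\abar}=Q_{\abar}y$ and $Z^{\abar}=Q_{\abar}Z$. Therefore $\hat\psi(y)=(Z^{\abar\prime}Z^{\abar})^{-1}Z^{\abar\prime}Q_{\abar}(Z\beta+u)=\beta+\hat\psi(u)$. It follows that $\tilde y_t=y_t-\hat\psi(y)'z_t=u_t-\hat\psi(u)'z_t=\tilde u_t$.
\end{itemize}

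In both cases the map $u\mapsto\tilde u$ is linear. Applying it to the identity $u=a_Te+b_TX$ from part (a), where $a_T,b_T$ are scalars, gives $\tilde u=a_T\tilde e+b_T\tilde X$.

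The only step requiring care is the coefficient matching in (a), in particular handling the initial conditions so that the identity holds exactly for every $t\ge1$. Everything else is linear algebra.
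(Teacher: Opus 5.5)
Your proof is correct and takes essentially the paper's route: the paper attributes (a) to the Ng--Perron (1996) decomposition and (b) to linearity, and your work fills in what it leaves implicit. For (a) you match coefficients ($a_T\rho_T=-\theta_T$, $a_T+b_T=1$) and induct from $u_0=e_0=X_0=0$; for (b) you check that OLS and GLS detrending annihilate $Z\beta$ and act linearly on $u$. The only implicit requirement is $\rho_T=1+c/T\neq 0$, i.e.\ $T>|c|$, which is harmless because $a_T$ and $b_T$ need it to be defined anyway.
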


\begin{proof}
    (a) is a similar decomposition to \citet{NgPerron1996}, and (b) follows from linearity.
\end{proof}

\begin{lemma}\label{lem:A2}
    \rev{Suppose \eqref{eq:model} and Assumption~\ref{ass:innovations} hold. Let $\hat{\beta}_0(w)$ and $\hat{\beta}_1(w)$ be the OLS coefficients from regressing a series $w_t$ on $(1,t)$. Then}
    \begin{align}
        &\text{(a)} \quad \hat{\beta}_0(e) = O_{\IP}(T^{-1/2}), \quad \hat{\beta}_1(e) = O_{\IP}(T^{-3/2}), \label{eq:A.4} \\
        &\text{(b)} \quad \hat{\beta}_0(X) = O_{\IP}(T^{1/2}), \quad \hat{\beta}_1(X) = O_{\IP}(T^{-1/2}), \label{eq:A.5} \\
        &\text{(c)} \quad T^{-1/2}\tilde{X}_{\max\{1,[Tr]\}} \wto \sigma_e J_c^{\tau}(r), \quad 0 \leq r \leq 1. \label{eq:A.6}
    \end{align}
\end{lemma}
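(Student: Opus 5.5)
We prove Lemma~\ref{lem:A2} using the explicit form of the OLS coefficients from regressing a series on $(1,t)$. Write $\hat\beta(w)=(\sum z_tz_t')^{-1}\sum z_tw_t$ and introduce the scaling matrix $D_T=\operatorname{diag}(T^{1/2},T^{3/2})$. Then
\[
D_T\hat\beta(w)=\Bigl(D_T^{-1}\textstyle\sum z_tz_t'D_T^{-1}\Bigr)^{-1}D_T^{-1}\textstyle\sum z_tw_t ,
\]
and $D_T^{-1}\sum z_tz_t'D_T^{-1}\to Q=\begin{pmatrix}1&1/2\\1/2&1/3\end{pmatrix}$, which is nonsingular. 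Each part therefore reduces to finding the order of the two sample moments $\sum w_t$ and $\sum t w_t$.

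For (a), take $w_t=e_t$. Since the $e_t$ are i.i.d. with mean zero, $\E(\sum e_t)^2=T\sigma_e^2$ and $\E(\sum te_t)^2=\sigma_e^2\sum t^2=O(T^3)$. Chebyshev's inequality gives $\sum e_t=O_\IP(T^{1/2})$ and $\sum te_t=O_\IP(T^{3/2})$. Consequently $D_T^{-1}\sum z_te_t=O_\IP(1)$, so $D_T\hat\beta(e)=O_\IP(1)$. This is exactly (a): $\hat\beta_0(e)=O_\IP(T^{-1/2})$ and $\hat\beta_1(e)=O_\IP(T^{-3/2})$.

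For (b), take $w_t=X_t$ and use \eqref{eq:A.1}. Combined with the continuous mapping theorem, it gives
\[
T^{-3/2}\sum X_t=\int_0^1 T^{-1/2}X_{[Tr]}\,dr+o_\IP(1)\wto\sigma_e\int_0^1J_c ,\qquad T^{-5/2}\sum tX_t\wto\sigma_e\int_0^1 rJ_c(r)\,dr .
\]
Hence
\[
\operatorname{diag}(T^{1/2},T^{3/2})\hat\beta(X)/T^{1/2}\wto\sigma_e Q^{-1}\Bigl(\int J_c,\int rJ_c\Bigr)' .
\]
Reading off the components, $\hat\beta_0(X)=O_\IP(T^{1/2})$ and $\hat\beta_1(X)=O_\IP(T^{-1/2})$. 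The limit also identifies the normalized coefficients jointly with $T^{-1/2}X_{[T\cdot]}$. Inverting $Q$ explicitly ($Q^{-1}=\begin{pmatrix}4&-6\\-6&12\end{pmatrix}$) gives
\[
T^{-1/2}\hat\beta_0(X)\wto\sigma_e\Bigl(4\!\int J_c-6\!\int rJ_c\Bigr),\qquad T^{1/2}\hat\beta_1(X)\wto\sigma_e\Bigl(12\!\int rJ_c-6\!\int J_c\Bigr)=\sigma_e\hat b .
\]
Since $\hat b=12\int(s-\tfrac12)J_c$, the intercept limit can be rewritten as $\int J_c-\tfrac12\hat b=\hat a$, matching \eqref{eq:jcr_tau}.

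For (c), write
\[
T^{-1/2}\tilde X_{[Tr]}=T^{-1/2}X_{[Tr]}-T^{-1/2}\hat\beta_0(X)-T^{1/2}\hat\beta_1(X)\,\frac{[Tr]}{T} .
\]
By the joint convergence from (b) and the continuous mapping theorem in $D[0,1]$, this converges to $\sigma_e(J_c(r)-\hat a-\hat b r)=\sigma_eJ_c^\tau(r)$. The facts that $[Tr]/T\to r$ uniformly and that $\max\{1,[Tr]\}$ differs from $[Tr]$ only near $r=0$ do not matter: there the extra term is $T^{-1/2}X_1=O_\IP(T^{-1/2})$.

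The main obstacle is only bookkeeping: making the convergence of the Riemann-sum functionals joint with the process, rather than separate. This follows because all the pieces are continuous functionals of the single process $T^{-1/2}X_{[T\cdot]}$ (jointly also with $e_1,e_T$ by Lemma~\ref{A.0}). The algebra identifying the limits as $\hat a,\hat b$ is routine.
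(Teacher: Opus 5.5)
Your argument is correct and is essentially the paper's. Both normalize the OLS normal equations by $T^{3/2}$ and $T^{5/2}$, pass to the Riemann-sum limits of $T^{-3/2}\sum_t X_t$ and $T^{-5/2}\sum_t tX_t$, and solve the $2\times 2$ system to identify $\hat a$ and $\hat b$. The differences are minor: in (b) you read off the orders from this limit rather than from the paper's separate crude bound $\max_t|t-\bar t|\sum_t|X_t-\bar X|=\Op(T^{5/2})$, and in (a) you use the scaled Gram matrix where the paper uses the centered slope formula.

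One slip to fix: in your display for (b) you should divide $\operatorname{diag}(T^{1/2},T^{3/2})\hat\beta(X)$ by $T$, not by $T^{1/2}$. It is $\operatorname{diag}(T^{-1/2},T^{1/2})\hat\beta(X)$ that converges to $\sigma_e Q^{-1}(\int J_c,\int rJ_c)'$, and this is exactly what your subsequent component limits use.
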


\begin{proof}
    For (a), write $\bar{t}=(T+1)/2$ and $\bar{e}=T^{-1}\sum_{t=1}^Te_t.$ Since $\sum_{t=1}^T(t-\bar{t})=0$,
    \[
    \hat{\beta}_1(e) = \frac{\sum_{t=1}^T(t-\bar{t})e_t}{\sum_{t=1}^T (t-\bar{t})^2}.
    \]
    The denominator equals $T(T^2-1)/12$ and is hence of order $T^3$. The numerator has variance
    \[
    \sigma^2_e\sum_{t=1}^T(t-\bar{t})^2 = O(T^3),
    \]
    so it is $O_{\IP}(T^{3/2})$. Hence $\hat{\beta}_1(e)=O_{\IP}(T^{-3/2})$. Since $\bar{e}=O_{\IP}(T^{-1/2})$,
    \[
    \hat{\beta}_0(e) = \bar{e}-\hat{\beta}_1(e)\bar{t} = O_{\IP}(T^{-1/2}),
    \]
    which proves \eqref{eq:A.4}. It follows immediately that
    \begin{equation}
    \sup_{1\leq t\leq T} |\hat{\beta}_0(e) + t\hat{\beta}_1(e)| \leq |\hat{\beta}_0(e)| + T|\hat{\beta}_1(e)| = O_{\IP}(T^{-1/2}) = o_{\IP}(1).\label{eq:A.7}
\end{equation}
    
    For (b), write $\bar X=T^{-1} \sum_{t=1}^TX_t$. \eqref{eq:A.1} implies $\max_{t\leq T}|X_t| = O_{\IP}(\sqrt{T}).$ Therefore, $\bar{X} = O_{\IP}(\sqrt{T}),$ and
    \[
    \left| \sum_{t=1}^T(t-\bar{t})(X_t-\bar{X})\right| \leq \max_t |t-\bar{t}|\sum_{t=1}^T|X_t-\bar{X}| = O_{\IP}(T^{5/2}).
    \]
    Division by the $O(T^3)$ denominator yields $\hat{\beta}_1(X) = O_{\IP}(T^{-1/2})$, and then $\hat{\beta}_0(X) = \bar{X}-\hat{\beta}_1(X)\bar{t} = O_{\IP}(T^{1/2}),$ proving \eqref{eq:A.5}.

    For (c), divide the two OLS normal equations for $X_t$ by $T^{3/2}$ and $T^{5/2},$ respectively. From \eqref{eq:A.1} and Riemann-sum convergence, we have
    \[
    T^{-3/2}\sum_{t=1}^TX_t \wto \sigma_e\int_0^1 J_c(r) dr, \quad T^{-5/2}\sum_{t=1}^T tX_t \wto \sigma_e\int_0^1 rJ_c(r) dr.
    \]
    Since $T^{-2}\sum_{t=1}^T t \to 1/2$ and $T^{-3}\sum_{t=1}^T t^2 \to 1/3,$ the rescaled normal equations converge to
    \[
    A+\frac{1}{2}B = \sigma_e \int_0^1 J_c(r) dr, \quad \frac{1}{2}A + \frac{1}{3}B = \sigma_e \int_0^1 rJ_c(r) dr,
    \]
    where $A$ and $B$ are the limits of $T^{-1/2}\hat{\beta}_0(X)$ and $T^{1/2}\hat{\beta}_1(X),$ respectively. Solving,
    \[
    B = 12\sigma_e \int_0^1 (r-\frac{1}{2})J_c(r)dr, \quad A=\sigma_e\int_0^1 J_c(r)dr - \frac{1}{2}B.
    \]
    These are exactly the coefficients of the continuous least squares projection used to define $J_c^{\tau}(r)$ in \eqref{eq:jcr_tau}. Hence,
    \[
    \frac{\tilde{X}_{\max\{1,[Tr]\}}}{\sqrt{T}} = \frac{X_{\max\{1,[Tr]\}}}{\sqrt{T}} - \frac{\hat{\beta}_0(X)}{\sqrt{T}} - (\sqrt{T}\hat{\beta}_1(X))\frac{\max\{1,[Tr]\}}{T} \wto \sigma_e J_c^{\tau}(r),
    \]
    which proves \eqref{eq:A.6}. This is the deterministic trend analogue of the sample-moment projection underlying Lemma~3.2 of \citet{NgPerron1996}.
\end{proof}

\begin{revision}
\begin{proof}[Proof of Proposition~\ref{prop:sm_ols}]
By Lemma~\ref{lem:A1}, OLS linearity gives
\[
 \ty_t=a_T\tilde e_t+b_T\tilde X_t.
\]
Lemma~\ref{lem:A2} implies
\[
 \frac1T\sum_{t=1}^{T-1}\tilde e_t^2=\sigma_e^2+o_{\IP}(1),\qquad
 \frac{\tilde X_{\max\{1,[Tr]\}}}{\sqrt T}\Rightarrow\sigma_eJ_c^\tau(r),
\]
To control the cross term, note that $\E X_{t-1}^2\le C T$ for $t\le T$. The martingale-difference property gives
\[
 \operatorname{Var}\!\left(\sum_{t=1}^T e_tX_{t-1}\right)
 =\sigma_e^2\sum_{t=1}^T\E X_{t-1}^2=O(T^2).
\]
Thus $\sum_{t=1}^T e_tX_t=\rho_T\sum_{t=1}^T e_tX_{t-1}+\sum_{t=1}^T e_t^2=O_{\IP}(T)$. Since $\sum_{t=1}^T e_t=O_{\IP}(T^{1/2})$ and $\sum_{t=1}^T te_t=O_{\IP}(T^{3/2})$, OLS orthogonality and Lemma~\ref{lem:A2} yield
\[
\begin{aligned}
 \sum_{t=1}^T\tilde e_t\tilde X_t
 &=\sum_{t=1}^T e_tX_t-\hat\beta_0(X)\sum_{t=1}^T e_t
   -\hat\beta_1(X)\sum_{t=1}^T te_t\\
 &=O_{\IP}(T).
\end{aligned}
\]
Removing the terminal product $\tilde e_T\tilde X_T=O_{\IP}(T^{1/2})$ therefore gives
\[
 \frac1T\sum_{t=1}^{T-1}\tilde e_t\tilde X_t=O_{\IP}(1).
\]
Therefore, using $a_T\to1$, $b_T=O(T^{-1/2})$, and $Tb_T^2\to\delta^2$,
\begin{align*}
 \frac1T\sum_{t=1}^{T-1}\ty_t^2
 &=a_T^2\frac1T\sum_{t=1}^{T-1}\tilde e_t^2
   +2a_Tb_T\frac1T\sum_{t=1}^{T-1}\tilde e_t\tilde X_t\\
 &\quad{}+Tb_T^2\frac1T\sum_{t=1}^{T-1}\left(\frac{\tilde X_t}{\sqrt T}\right)^2\\
 &\Rightarrow \sigma_e^2\left\{1+\delta^2\int_0^1[J_c^\tau(r)]^2\,dr\right\}.
\end{align*}
This is \eqref{eq:sm1_ols}. At the endpoint, Lemmas~\ref{A.0} and \ref{lem:A2} give $\tilde e_T=e_T+o_{\IP}(1)$ and $T^{-1/2}\tilde X_T\Rightarrow\sigma_eJ_c^\tau(1)$ jointly, hence
\[
 \ty_T=a_T\tilde e_T+b_T\tilde X_T
 \Rightarrow\sigma_e\{\varepsilon_\infty+\delta J_c^\tau(1)\},
\]
which is \eqref{eq:sm2_ols}.
\end{proof}
\end{revision}

\noindent The OLS argument above uses the fact that the line fitted to the noise component is uniformly negligible. Under GLS detrending, this is no longer true. Rather, the first quasi-differenced regressor has an order one first observation, while summation by parts in the trend score leaves a terminal boundary term. The next two lemmas isolate these effects before the sample moments are assembled.

\begin{lemma}\label{lem:A.3}
    \rev{Suppose \eqref{eq:model} and Assumption~\ref{ass:innovations} hold, with fixed $\cbar<0$. Let $Z^{\abar}=D_{\abar}Z$, where $\abar=1+\cbar/T$ and $D_{\abar}$ is the GLS quasi-difference operator matrix.} Write
    \[
    (Z^{\abar})'Z^{\abar} = \begin{pmatrix}
        m_{11} & m_{12} \\
        m_{12} & m_{22}
    \end{pmatrix}, \quad \Delta_T = m_{11}m_{22} - m^2_{12}.
    \]
    Then
    \begin{equation}
    m_{11} \to 1, \quad m_{12} \to 1-\cbar + \frac{\cbar^2}{2}, \quad \frac{m_{22}}{T} \to 1 - \cbar + \frac{\cbar^2}{3}, \quad \frac{\Delta_T}{T} \to 1 - \cbar + \frac{\cbar^2}{3}.\label{eq:A.15}
\end{equation}
    Furthermore, let $N_0(w)$ and $N_1(w)$ denote the two elements of $(Z^{\abar})'D_{\abar}w$. Then, jointly with the weak convergence described above,
    \begin{align}
        &N_0(e) = e_1 + o_{\IP}(1), \quad N_1(e) = (1-\cbar)e_T + o_{\IP}(1), \label{eq:A.16}\\
        &N_0(X) = e_1 + o_{\IP}(1), \quad \frac{N_1(X)}{\sqrt{T}} \wto \sigma_e\left[(1-\cbar)J_c(1)+\cbar^2\int_0^1rJ_c(r)dr \right]. \label{eq:A.17}
    \end{align}
\end{lemma}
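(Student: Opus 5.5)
The plan is to compute everything directly from the explicit form of the quasi-differenced regressors, since $Z^{\abar}$ is deterministic. By \eqref{eq:quasi_z}, $z^{\abar}_1=(1,1)'$ and, for $t\ge2$,
\[
z^{\abar}_t=\Bigl(-\tfrac{\cbar}{T},\; g_t\Bigr)', \qquad g_t:=1-\tfrac{\cbar(t-1)}{T}.
\]

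\textbf{Step 1: the moment matrix.} Evaluate each entry by Riemann sums.
\[
m_{11}=1+(T-1)\cbar^2/T^2\to1,
\]
\[
m_{12}=1-\tfrac{\cbar}{T}\sum_{t\ge2}g_t\to 1-\cbar+\tfrac{\cbar^2}{2},
\]
\[
m_{22}=1+\sum_{t\ge2}g_t^2=T\int_0^1(1-\cbar r)^2dr+O(1),
\]
and the last expression gives $m_{22}/T\to1-\cbar+\cbar^2/3$. Since $m_{11}\to1$ and $m_{12}^2/T\to0$, the limit of $\Delta_T/T$ coincides with that of $m_{22}/T$.

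\textbf{Step 2: the scores for $w=e$.} For $t\ge2$ write $w_t-\abar w_{t-1}=\Delta w_t-\tfrac{\cbar}{T}w_{t-1}$.

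For $N_0(e)$, the sum telescopes:
\[
N_0(e)=e_1-\tfrac{\cbar}{T}\Bigl(e_T-e_1-\tfrac{\cbar}{T}\textstyle\sum e_{t-1}\Bigr)=e_1+O_{\IP}(T^{-1}).
\]

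For $N_1(e)=e_1+\sum_{t\ge2}g_t\Delta e_t-\tfrac{\cbar}{T}\sum_{t\ge2} g_te_{t-1}$, the last sum is $O_{\IP}(T^{-1/2})$ because its variance is $O(T)\cdot T^{-2}$. Apply summation by parts to the middle sum:
\[
\sum_{t=2}^T g_t\Delta e_t=g_Te_T-g_2e_1+\tfrac{\cbar}{T}\sum_{t=2}^{T-1}e_t .
\]
Here $g_T\to1-\cbar$, $g_2\to1$, and the final sum is $O_{\IP}(T^{-1/2})$. Hence $N_1(e)=(1-\cbar)e_T+o_{\IP}(1)$. This is exactly the terminal boundary term flagged before the lemma.

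\textbf{Step 3: the scores for $w=X$.} Use $X_t-\abar X_{t-1}=e_t+\tfrac{c-\cbar}{T}X_{t-1}$ and $X_1=e_1$.

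For $N_0(X)$,
\[
N_0(X)=e_1-\tfrac{\cbar}{T}\Bigl(\textstyle\sum e_t+\tfrac{c-\cbar}{T}\sum X_{t-1}\Bigr)=e_1+O_{\IP}(T^{-1/2}),
\]
using \eqref{eq:A.1} for $\sum X_{t-1}=O_{\IP}(T^{3/2})$.

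For $N_1(X)$,
\[
T^{-1/2}N_1(X)=T^{-1/2}\sum g_te_t+(c-\cbar)T^{-3/2}\sum g_tX_{t-1}+o_{\IP}(1).
\]
To avoid invoking convergence to stochastic integrals, rewrite the first sum by summation by parts as $g_TS_T-\sum(g_{t+1}-g_t)S_t$, with $S_t=\sum_{s\le t}e_s$. This is a continuous functional of $W_T$, so by Lemma~\ref{A.0}, \eqref{eq:A.1} and the continuous mapping theorem, jointly with $(e_1,e_T)$,
\[
T^{-1/2}N_1(X)\wto \sigma_e\Bigl[\int_0^1(1-\cbar r)\,dW+(c-\cbar)\int_0^1(1-\cbar r)J_c(r)\,dr\Bigr].
\]
Joint convergence of all four scores with the boundary innovations follows from the same lemma.

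\textbf{Step 4: identifying the limit (main obstacle).} This is the step where the most care is needed: showing that the limit in Step 3 equals the form stated in \eqref{eq:A.17}. Use $dJ_c=cJ_c\,dr+dW$, which gives:
\[
W(1)=J_c(1)-c\int_0^1 J_c,
\]
and, by integration by parts,
\[
\int_0^1 r\,dW=J_c(1)-\int_0^1 J_c-c\int_0^1 rJ_c .
\]
Substituting, the $\int J_c$ terms cancel:
\[
-c+\cbar+(c-\cbar)=0.
\]
The $\int rJ_c$ coefficients combine to
\[
c\cbar-\cbar(c-\cbar)=\cbar^2.
\]
This leaves $(1-\cbar)J_c(1)+\cbar^2\int_0^1 rJ_c(r)\,dr$, as claimed.
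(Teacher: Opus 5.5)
Your proposal is correct and follows essentially the same route as the paper: the explicit quasi-differenced regressors with Riemann sums give the Gram matrix, telescoping and summation by parts give the innovation scores (isolating the terminal term $(1-\cbar)e_T$), and the relation $dW = dJ_c - cJ_c\,dr$ together with integration by parts cancels the $\int_0^1 J_c$ terms. The only difference is cosmetic: you rewrite the weighted innovation sum in $N_1(X)$ by Abel summation as a continuous functional of $W_T$, whereas the paper appeals directly to a weighted-sum limit, and your version makes the joint convergence with $(e_1,e_T)$ through Lemma~\ref{A.0} slightly more transparent.
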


\begin{proof}
    The two quasi-differenced deterministic regressors are 
    \[
    g_1 = 1, \quad g_t = -\frac{\cbar}{T} \quad t\geq 2,
    \]
    \[
    h_1=1, \quad h_t = 1 -\frac{\cbar(t-1)}{T} \quad t\geq 2.
    \]
    Consequently,
    \[
    m_{11} = 1 + \frac{\cbar^2(T-1)}{T^2} \to 1,
    \]
    \[
    m_{12} = 1 - \frac{\cbar(T-1)}{T} + \frac{\cbar^2(T-1)}{2T} \to 1 - \cbar + \frac{\cbar^2}{2},
    \]
    and,
    \[
    \frac{m_{22}}{T} = \frac{1}{T}\left[1+\sum_{j=1}^{T-1}\left(1-\frac{\cbar j}{T}\right)^2 \right] \to \int_0^1 (1-\cbar r)^2 dr = 1 - \cbar + \frac{\cbar^2}{3}.
    \]
    Since $m_{12} = O(1)$, the determinant limit in \eqref{eq:A.15} follows. 

    \revTwo{For the noise component $e$,}
    \[
    (D_{\abar}e)_1 = e_1, \quad (D_{\abar}e)_t = (e_t - e_{t-1}) - \frac{\cbar}{T}e_{t-1}, \quad t\geq 2.
    \]
    The first score is therefore
    \begin{equation}
    \begin{aligned}
    N_0(e) &= e_1 - \frac{\cbar}{T}\sum_{t=2}^T\left[(e_t-e_{t-1})-\frac{\cbar}{T}e_{t-1} \right] \\
    &= e_1 - \frac{\cbar}{T}(e_T-e_1)+\frac{\cbar^2}{T^2}\sum_{t=2}^Te_{t-1} = e_1 +o_{\IP}(1).
    \end{aligned}\label{eq:A.18}
\end{equation}
    As for $N_1(e),$ discrete summation by parts yields
    \[
    \sum_{t=2}^Th_t(e_t-e_{t-1}) = h_Te_T - h_2e_1 + \frac{\cbar}{T}\sum_{t=2}^{T-1}e_t.
    \]
    Hence,
    \begin{equation}
    N_1(e) = h_Te_T + (1-h_2)e_1 + \frac{\cbar}{T}\sum_{t=2}^{T-1}e_t - \frac{\cbar}{T}\sum_{t=2}^T h_t e_{t-1}.\label{eq:A.19}
\end{equation}
    The two sums in \eqref{eq:A.19} are $O_{\IP}(\sqrt{T})$, while $1-h_2 = \cbar/T$ and $h_T \to 1-\cbar$. Thus $N_1(e) = (1-\cbar)e_T + o_{\IP}(1)$. Equation \eqref{eq:A.19} explicitly shows how the boundary terms become involved. 

    \revTwo{For the near-integrated component $X$,}
    \[
    (D_{\abar}X)_1 = e_1, \quad (D_{\abar}X)_t = e_t + \frac{c-\cbar}{T}X_{t-1},\quad t \geq 2.
    \]
    Hence,
    \begin{equation}
    N_0(X) = e_1 - \frac{\cbar}{T}\sum_{t=2}^T e_t - \frac{\cbar(c-\cbar)}{T^2}\sum_{t=2}^T X_{t-1} = e_1 + o_{\IP}(1).\label{eq:A.20}
\end{equation}
    Furthermore, using $\sum_{t=1}^Te_t=O_{\IP}(\sqrt{T})$ and $\sum_{t=1}^TX_t = O_{\IP}(T^{3/2})$, we have
    \begin{equation}
    \frac{N_1(X)}{\sqrt{T}} = \frac{e_1}{\sqrt{T}} + \frac{1}{\sqrt{T}}\sum_{t=2}^Th_te_t + \frac{c-\cbar}{T^{3/2}}\sum_{t=2}^Th_tX_{t-1}.\label{eq:A.21}
\end{equation}
    The weighted noise limit stated above and \eqref{eq:A.1} yield jointly
    \[
    \frac{1}{\sqrt{T}}\sum_{t=2}^Th_te_t \wto \sigma_e\int_0^1 (1-\cbar r)dW(r),
    \]
    \[
    \frac{1}{T^{3/2}}\sum_{t=2}^Th_tX_{t-1} \wto \sigma_e \int_0^1 (1-\cbar r)J_c(r) dr.
    \]
    Using $dW(r) = dJ_c(r) - cJ_c(r)dr,$ the limit of \eqref{eq:A.21}, divided by $\sigma_e,$ is
    \[
    \int_0^1 (1-\cbar r) dJ_c(r) - \cbar \int_0^1 (1-\cbar r)J_c(r)dr.
    \]
    Integration by parts yields
    \[
    \int_0^1 (1-\cbar r) dJ_c(r) = (1-\cbar)J_c(1) + \cbar \int_0^1 J_c(r) dr,
    \]
    whereas
    \[
    -\cbar \int_0^1(1-\cbar r)J_c(r)dr = -\cbar \int_0^1J_c(r) dr + \cbar^2 \int_0^1 rJ_c(r) dr.
    \]
    The $\int J_c$ terms cancel, proving \eqref{eq:A.17}.
\end{proof}

\begin{lemma}\label{lem:A.4}
\rev{Suppose \eqref{eq:model} and Assumption~\ref{ass:innovations} hold, with fixed $\cbar<0$. Let $\hat{\psi}_0(w)$ and $\hat{\psi}_1(w)$ be the GLS coefficients obtained from \eqref{eq:gls_minimizer}, applied to a generic series $w_t$. Then, jointly with Lemma~\ref{A.0} and \eqref{eq:A.1},}
\begin{equation}
\hat{\psi}_0(e) \wto \sigma_e\varepsilon_1, \quad T\hat{\psi}_1(e) \wto \sigma_eq_{\cbar}\label{eq:A.22}
\end{equation}
and hence
\begin{equation}
\hat{\psi}_0(e) + [Tr]\hat{\psi}_1(e) \wto \sigma_e(\varepsilon_1 + q_{\cbar}r).\label{eq:A.23}
\end{equation}
\revTwo{For the near-integrated component $X$,}
\begin{equation}
\hat{\psi}_0(X) \wto \sigma_e\varepsilon_1\label{eq:A.24}
\end{equation}
\begin{equation}
\sqrt{T}\hat{\psi}_1(X) \wto \sigma_e\left[\lambda J_c(1) + 3 (1-\lambda)\int_0^1 sJ_c(s)ds \right],\label{eq:A.25}
\end{equation}
and therefore,
\begin{equation}
\frac{\tilde{X}_{\max\{1,[Tr]\}}}{\sqrt{T}} \wto \sigma_e V_{c,\cbar}(r).\label{eq:A.26}
\end{equation}
    
\end{lemma}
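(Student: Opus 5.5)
Since the GLS estimator in \eqref{eq:gls_minimizer} is an ordinary least squares regression of $D_{\abar}w$ on $Z^{\abar}$, we have $\hat\psi(w)=[(Z^{\abar})'Z^{\abar}]^{-1}(Z^{\abar})'D_{\abar}w$. The plan is to write this inverse explicitly and feed in the limits of Lemma~\ref{lem:A.3}. The explicit form is
\[
\hat\psi_0(w)=\frac{m_{22}N_0(w)-m_{12}N_1(w)}{\Delta_T},\qquad
\hat\psi_1(w)=\frac{m_{11}N_1(w)-m_{12}N_0(w)}{\Delta_T}.
\]
Every claimed limit then follows from \eqref{eq:A.15}--\eqref{eq:A.17} and the continuous mapping theorem. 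The convergence is joint, because all scores are continuous functionals of the same vector $(e_1/\sigma_e,e_T/\sigma_e,W_T)$ controlled by Lemma~\ref{A.0}. Throughout, write $D_{\cbar}=1-\cbar+\cbar^2/3$, so that $\lambda=(1-\cbar)/D_{\cbar}$.

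\emph{Noise component.} By \eqref{eq:A.15}, $m_{22}/\Delta_T\to1$ and $m_{12}/\Delta_T=O(T^{-1})$. By \eqref{eq:A.16}, $N_1(e)=O_{\IP}(1)$. Hence $\hat\psi_0(e)=N_0(e)+o_{\IP}(1)=e_1+o_{\IP}(1)\wto\sigma_e\eps_1$.

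For the slope, multiply by $T$ and use $\Delta_T/T\to D_{\cbar}$, which gives
\[
T\hat\psi_1(e)\wto\frac{(1-\cbar)\sigma_e\eps_\infty-(1-\cbar+\cbar^2/2)\sigma_e\eps_1}{D_{\cbar}}.
\]
The first coefficient is $\lambda$. For the second, the identity $(3-\lambda)/2=(3D_{\cbar}-1+\cbar)/(2D_{\cbar})=(1-\cbar+\cbar^2/2)/D_{\cbar}$ shows it equals $(3-\lambda)/2$. So the limit is $\sigma_e q_{\cbar}$, which proves \eqref{eq:A.22}.

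Equation \eqref{eq:A.23} follows by writing $[Tr]\hat\psi_1(e)=([Tr]/T)\,T\hat\psi_1(e)$. The factor $[Tr]/T\to r$ uniformly, and it multiplies a scalar that is $O_{\IP}(1)$.

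\emph{Near-integrated component.} By \eqref{eq:A.17}, $N_1(X)=O_{\IP}(\sqrt T)$, so $m_{12}N_1(X)/\Delta_T=O_{\IP}(T^{-1/2})$. Together with $m_{22}/\Delta_T\to1$ and $N_0(X)=e_1+o_{\IP}(1)$, this gives \eqref{eq:A.24}.

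For the slope, the term $\sqrt T\,m_{12}N_0(X)/\Delta_T$ is $O_{\IP}(T^{-1/2})$ and is negligible. The remaining term is
\[
\sqrt T\,\frac{m_{11}N_1(X)}{\Delta_T}=\frac{m_{11}\,N_1(X)/\sqrt T}{\Delta_T/T}\wto\frac{\sigma_e}{D_{\cbar}}\Big[(1-\cbar)J_c(1)+\cbar^2\!\int_0^1 rJ_c(r)\,dr\Big].
\]
Since $1-\lambda=(\cbar^2/3)/D_{\cbar}$, we have $\cbar^2/D_{\cbar}=3(1-\lambda)$, and the limit becomes \eqref{eq:A.25}.

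\emph{Detrended process.} Finally, write
\[
\frac{\tilde X_{\max\{1,[Tr]\}}}{\sqrt T}=\frac{X_{\max\{1,[Tr]\}}}{\sqrt T}-\frac{\hat\psi_0(X)}{\sqrt T}-\frac{\max\{1,[Tr]\}}{T}\,\sqrt T\hat\psi_1(X).
\]
The middle term is $O_{\IP}(T^{-1/2})$ by \eqref{eq:A.24}. The first term converges to $\sigma_eJ_c(r)$ by \eqref{eq:A.1}, and the last term converges to $r$ times the limit in \eqref{eq:A.25}. All three converge jointly in $D[0,1]$, so continuous mapping yields $\sigma_eV_{c,\cbar}(r)$, which is \eqref{eq:A.26}.

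The only delicate point, and the one I expect to be the main obstacle, is the joint convergence. The limits of $N_0$, $N_1$ and the functional $X_{[Tr]}/\sqrt T$ must hold simultaneously, which needs the joint version of Lemma~\ref{A.0} and Lemma~\ref{lem:A.3}. Once that is in place, the remaining work is the two algebraic identifications of the coefficients with $q_{\cbar}$ and $\lambda$ carried out above.
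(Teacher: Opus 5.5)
Your proposal is correct and follows the paper's proof essentially line by line. Both arguments explicitly invert the $2\times2$ GLS normal matrix, substitute the limits from Lemma~\ref{lem:A.3}, use the same identifications $(1-\cbar+\cbar^2/2)/(1-\cbar+\cbar^2/3)=(3-\lambda)/2$ and $\cbar^2/(1-\cbar+\cbar^2/3)=3(1-\lambda)$, and decompose $\tilde X_{\max\{1,[Tr]\}}/\sqrt T$ in the same way, with joint convergence inherited from the joint statement of Lemma~\ref{lem:A.3} exactly as the paper does.
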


\begin{proof}
    The inverse of the GLS normal matrix yields
    \begin{equation}
    \hat{\psi}_0(w) = \frac{m_{22}N_0(w) - m_{12}N_1(w)}{\Delta_T}, \quad \hat{\psi}_1(w) = \frac{m_{11}N_1(w) - m_{12}N_0(w)}{\Delta_T}\label{eq:A.27}
\end{equation}
    From Lemma \ref{lem:A.3},
    \[
    \hat{\psi}_0(e) = \frac{(m_{22}/T)N_0(e)}{\Delta_T/T} - \frac{m_{12}N_1(e)}{\Delta_T} = e_1 + o_{\IP}(1),
    \]
    since $(m_{22}/T)/(\Delta_T/T) \to 1, N_0(e) = e_1 + o_{\IP}(1),$ and the second term is $O_{\IP}(T^{-1}).$ Similarly,
    \[
    T\hat{\psi}_1(e) = \frac{m_{11}N_1(e)-m_{12}N_0(e)}{\Delta_T/T} \wto \sigma_e\frac{(1-\cbar)\varepsilon_\infty - (1-\cbar +\cbar^2/2)\varepsilon_1}{1-\cbar+\cbar^2/3}.
    \]
    After division by $\sigma_e$ and the use of \eqref{eq:limit_constants},
    \[
    \frac{1-\cbar}{1-\cbar+\cbar^2/3} = \lambda, \quad \frac{1-\cbar+\cbar^2/2}{1-\cbar+\cbar^2/3} = \frac{3-\lambda}{2},
    \]
    so \eqref{eq:A.22} follows. \eqref{eq:A.23} then follows since $[Tr]/T \to r$ uniformly.

    \revTwo{For the near integrated component's, $X$'s, intercept}, \eqref{eq:A.27} and \eqref{eq:A.15}--\eqref{eq:A.17} give
    \[
    \hat{\psi}_0(X) = \frac{(m_{22}/T)N_0(X)}{\Delta_T/T} - \frac{m_{12}N_1(X)}{\Delta_T} = e_1 + o_{\IP}(1),
    \]
    since $(m_{22}/T)/(\Delta_T/T)\to 1, N_0(X) = e_1 + o_{\IP}(1), $ and the second term is $O_{\IP}(T^{-1/2}).$ For the slope,
    \[
    \begin{aligned}
    \sqrt{T}\hat{\psi}_1(X)
    &= \frac{m_{11}N_1(X)/\sqrt{T} - m_{12}N_0(X)/\sqrt{T}}{\Delta_T/T}\\
    &\wto \frac{\sigma_e}{1-\cbar+\cbar^2/3}\left[(1-\cbar)J_c(1) + \cbar^2 \int_0^1 rJ_c(r)dr \right].
    \end{aligned}
    \]
    Using $\lambda$ and $\cbar^2/(1-\cbar+\cbar^2/3)$ yields \eqref{eq:A.25}. Finally, 
    \[
    \frac{\tilde{X}_{\max\{1,[Tr]\}}}{\sqrt{T}} = \frac{X_{\max\{1,[Tr]\}}}{\sqrt{T}} - \frac{\hat{\psi}_0(X)}{\sqrt{T}} - (\sqrt{T}\hat{\psi}_1(X))\frac{\max\{1,[Tr]\}}{T}.
    \]
    The intercept term vanishes after division by $\sqrt{T}$, and substitution of \eqref{eq:A.1} and \eqref{eq:A.25} yields exactly $V_{c,\cbar}$ as defined in \eqref{gls_func}, proving \eqref{eq:A.26}. Joint convergence follows because every quantity above is obtained from the same jointly convergent boundary/path collection and Lemma \ref{lem:A.3} by continuous mapping and Slutsky's theorem.
\end{proof}

\begin{revision}
\begin{proof}[Proof of Proposition~\ref{prop:sm_gls}]
Let $l_t=\hat\psi_0(e)+t\hat\psi_1(e)$. Lemmas~\ref{lem:A1} and \ref{lem:A.4} yield the exact decomposition
\begin{equation}\label{eq:A_gls_decomposition}
 \ty_t=a_Te_t+\zeta_{T,t},\qquad
 \zeta_{T,t}=-a_Tl_t+b_T\tilde X_t.
\end{equation}
with, jointly in $D[0,1]$,
\begin{equation}\label{eq:A_gls_path}
 \zeta_{T,\max\{1,[Tr]\}}\Rightarrow
 \sigma_e\{\delta V_{c,\cbar}(r)-\varepsilon_1-q_{\cbar}r\}.
\end{equation}
Hence
\[
 \frac1T\sum_{t=1}^{T-1}\zeta_{T,t}^2
 \Rightarrow\sigma_e^2\int_0^1
 [\delta V_{c,\cbar}(r)-\varepsilon_1-q_{\cbar}r]^2\,dr.
\]
Moreover, Lemma~\ref{lem:A.4} gives $\hat\psi_0(e)=O_{\IP}(1)$, $\hat\psi_1(e)=O_{\IP}(T^{-1})$, $\hat\psi_0(X)=O_{\IP}(1)$ and $\hat\psi_1(X)=O_{\IP}(T^{-1/2})$. The innovation sums satisfy $T^{-1}\sum_{t=1}^{T-1}e_t=O_{\IP}(T^{-1/2})$ and $T^{-1}\sum_{t=1}^{T-1}te_t=O_{\IP}(T^{1/2})$. Together with the raw cross-moment bound proved for Proposition~\ref{prop:sm_ols}, these rates give
\[
 \frac1T\sum_{t=1}^{T-1}e_tl_t=O_{\IP}(T^{-1/2}),\qquad
 \frac1T\sum_{t=1}^{T-1}e_t\tilde X_t=O_{\IP}(1).
\]
Here the second bound follows by subtracting the GLS fitted line from $X_t$. Consequently, since $b_T=O(T^{-1/2})$,
\[
 \frac1T\sum_{t=1}^{T-1}e_t\zeta_{T,t}
 =-\frac{a_T}{T}\sum_{t=1}^{T-1}e_tl_t
   +\frac{b_T}{T}\sum_{t=1}^{T-1}e_t\tilde X_t
 =o_{\IP}(1).
\]
Since $T^{-1}\sum_{t=1}^{T-1} e_t^2\stackrel{\IP}{\to}\sigma_e^2$ and $a_T\to1$, expansion of $T^{-1}\sum_{t=1}^{T-1}\ty_t^2$ gives \eqref{eq:sm1_gls}.

At $t=T$,
\[
 \ty_T\Rightarrow\sigma_e\{\varepsilon_\infty+
 \delta V_{c,\cbar}(1)-\varepsilon_1-q_{\cbar}\}.
\]
Using $q_{\cbar}=\lambda\varepsilon_\infty-(3-\lambda)\varepsilon_1/2$ gives \eqref{eq:sm2_gls}.
\end{proof}
\end{revision}

\section{Appendix B}\label{sec:appendix_b}

\begin{revision}
Throughout this appendix, \eqref{eq:model} and Assumption~\ref{ass:innovations} hold together with the Gaussian oracle setup of Section~\ref{sec:env}. For a scalar $a$, let $D_a$ denote the lower-bidiagonal quasi-difference matrix satisfying
\[
 (D_aw)_1=w_1,\qquad (D_aw)_t=w_t-aw_{t-1},\quad t\ge2.
\]
Then $v=D_{-\theta_T}e$ and $\Sigma_T=\sigma_e^2D_{-\theta_T}D_{-\theta_T}'$.
\end{revision}

\begin{lemma}\label{lem:B.1}
    \rev{Under these conditions, define the exactly whitened stochastic component $x=D^{-1}_{-\theta_T}u$. Then} 
    \begin{equation}
    x_t = \rho_T x_{t-1}+e_t, \quad x_0 = 0.\label{eq:B.1}
\end{equation}
    Moreover, after profiling out the unknown intercept and trend, the Gaussian likelihood ratio statistic for $c=0$ against $\cbar$ can be written
    \begin{equation}
    L^{*,\tau}_T(\cbar) = A_T(\cbar) + Q_T(0) - Q_T(\cbar),\label{eq:B.2}
\end{equation}
    where
    \begin{equation}
    A_T(\cbar) = \frac{\cbar^2}{T^2\sigma^2_e}\sum_{t=1}^T x^2_{t-1} - \frac{2\cbar}{T\sigma^2_e}\sum_{t=1}^Tx_{t-1}\Delta x_t,\label{eq:B.3}
\end{equation}
    \rev{For a candidate $b\in\{0,\cbar\}$, let $\mathcal Z_T(b)=D_{1+b/T}D^{-1}_{-\theta_T}Z$ and define
    \[
      Q_T(b)=\sigma_e^{-2}(D_{1+b/T}x)'P_{\mathcal Z_T(b)}(D_{1+b/T}x),
      \qquad P_A=A(A'A)^{-1}A'.
    \]
    Thus $Q_T(b)$ is the regression sum of squares from projecting the whitened quasi-differenced stochastic component on the correspondingly transformed intercept and trend.} Finally,
    \begin{equation}
    A_T(\cbar) \wto \cbar^2 \int_0^1 J_c(r)^2dr - \cbar(J_c(1)^2-1).\label{eq:B.4}
\end{equation}
    
\end{lemma}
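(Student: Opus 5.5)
The plan is to prove everything by exact finite-sample algebra and to invoke \eqref{eq:A.1} only for the final limit. The starting observation is that every quasi-difference matrix is a polynomial in the same nilpotent lower shift matrix $L$, namely $D_a=I-aL$. Hence all the $D_a$ commute with one another and with their inverses.

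\emph{Whitening.} By construction, $D_{\rho_T}u=v$ with $u_0=0$, since the first row reads $u_1=v_1$. We also have $v=D_{-\theta_T}e$. Therefore
\[
D_{\rho_T}x=D_{\rho_T}D^{-1}_{-\theta_T}u=D^{-1}_{-\theta_T}D_{\rho_T}u=D^{-1}_{-\theta_T}v=e,
\]
which is exactly \eqref{eq:B.1} with $x_0=0$. In particular $x$ coincides with the process $X_t$ of Lemma~\ref{lem:A1}, so \eqref{eq:A.1} applies to it.

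\emph{Profiling.} Write $b\in\{0,\cbar\}$ and $D_b:=D_{1+b/T}$. Since $\Sigma_T^{-1}=\sigma_e^{-2}D_{-\theta_T}^{-1\prime}D_{-\theta_T}^{-1}$ and $y=Z\beta+u$, commutation gives
\[
L_T(1+b/T,\psi)=\sigma_e^{-2}\bigl\|D_bx-\mathcal Z_T(b)(\psi-\beta)\bigr\|^2 .
\]
Minimizing over $\psi$, which is free because the problem is invariant to $\beta$, gives the residual sum of squares: the total $\sigma_e^{-2}\|D_bx\|^2$ minus the regression sum of squares $Q_T(b)$. Taking the difference between $b=\cbar$ and $b=0$ in \eqref{eq:Lstar} then yields \eqref{eq:B.2} with
\[
A_T(\cbar)=\sigma_e^{-2}\bigl(\|D_{\cbar}x\|^2-\|D_0x\|^2\bigr).
\]
For $t\ge2$ we have $(D_{\cbar}x)_t=\Delta x_t-(\cbar/T)x_{t-1}$. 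The first entries of both vectors equal $x_1$, consistent with $x_0=0$. Expanding the square therefore gives \eqref{eq:B.3}.

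\emph{Limit.} From \eqref{eq:A.1}, $x_{[Tr]}/\sqrt T\Rightarrow\sigma_eJ_c(r)$, so continuous mapping gives
\[
T^{-2}\sum x_{t-1}^2\Rightarrow\sigma_e^2\int_0^1J_c^2 .
\]
For the cross term, use the identity
\[
\sum x_{t-1}\Delta x_t=\tfrac12\Bigl(x_T^2-\sum(\Delta x_t)^2\Bigr).
\]
Since $\Delta x_t=e_t+(c/T)x_{t-1}$, we get
\[
T^{-1}\sum(\Delta x_t)^2=T^{-1}\sum e_t^2+O_{\IP}(T^{-1/2})\stackrel{\IP}{\to}\sigma_e^2 .
\]
Here the cross product is controlled by the martingale bound already used in the proof of Proposition~\ref{prop:sm_ols}. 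Hence
\[
T^{-1}\sum x_{t-1}\Delta x_t\Rightarrow\tfrac{\sigma_e^2}{2}\bigl(J_c(1)^2-1\bigr),
\]
jointly with the first sum. Substituting both limits into \eqref{eq:B.3} gives \eqref{eq:B.4}.

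\emph{Main obstacle.} The analytic steps are routine; the delicate point is the profiling step. One must check that the first-observation convention of \eqref{eq:quasi_w}--\eqref{eq:quasi_z} is exactly $D_b$, so that $D_b$ commutes with $D^{-1}_{-\theta_T}$ including the first row. One must also check that the transformed regressor matrix is $\mathcal Z_T(b)$ with the factors in that order. I would verify this directly from the $L$-polynomial representation rather than entrywise.
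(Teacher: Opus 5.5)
Your proposal is correct and takes essentially the same route as the paper. You whiten by commuting the quasi-difference matrices ($D_a=I-aL$), profile out the intercept and trend by least squares in whitened coordinates so each criterion becomes total minus regression sum of squares, and obtain \eqref{eq:B.4} from \eqref{eq:A.1}, the identity $2x_{t-1}\Delta x_t=x_t^2-x_{t-1}^2-(\Delta x_t)^2$, and the law of large numbers.
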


\begin{proof}
    The stochastic recursion is $D_{\rho_T}u = D_{-\theta_T}e$. Multiplication on the left and commutativity yield $D_{\rho_T}x=e$, proving \eqref{eq:B.1}.

    For a candidate $\abar$, the Gaussian criterion is proportional to
    \[
    \|D^{-1}_{-\theta_T}D_{\abar}(y-Z\gamma)\|^2.
    \]
    Thus exact whitening converts the stochastic component into $x$, while the regressors become $D_{\abar}D^{-1}_{-\theta_T}Z.$ Profiling over $\gamma$ by OLS gives \eqref{eq:B.2}.

    From \eqref{eq:B.1},
    \[
    (D_{\abar}x)_t = x_t - (1+\frac{\cbar}{T})x_{t-1} = \Delta x_t - \frac{\cbar}{T}x_{t-1}.
    \]
    Subtracting the corresponding criterion under $\cbar = 0$ yields \eqref{eq:B.3}. 

    \eqref{eq:A.1} implies
    \begin{equation}
    \frac{1}{T^2\sigma^2_e}\sum_{t=1}^T x^2_{t-1} \wto \int_0^1 J_c(r)^2dr.\label{eq:B.5}
\end{equation}
    Also, $2x_{t-1} \Delta x_{t} = x^2_t - x^2_{t-1} - (\Delta x_t)^2$. Hence
    \begin{equation}
    \frac{2}{T\sigma^2_e}\sum_{t=1}^T x_{t-1}\Delta x_t = \frac{x^2_T}{T\sigma^2_e} - \frac{1}{T\sigma^2_e}\sum_{t=1}^T (\Delta x_t)^2.\label{eq:B.6}
\end{equation}
    The first term converges to $J_c(1)^2$. Since $\Delta x_t = e_t + (c/T)x_{t-1}$, the law of large numbers and the usual local-to-unity moment bounds yield 
    \[
    \frac{1}{T\sigma^2_e}\sum_{t=1}^T (\Delta x_t)^2 \stackrel{\IP}{\to} 1.
    \]
    \eqref{eq:B.3}, \eqref{eq:B.5}, and \eqref{eq:B.6} establish \eqref{eq:B.4}.
\end{proof}

\begin{lemma}\label{lem:B.2}
    \rev{Under \eqref{eq:model}, Assumption~\ref{ass:innovations}, the Gaussian setup of Section~\ref{sec:env}, and fixed $\cbar<0$, let $g_t$ and $h_t$ denote the intercept and trend columns of $D_{\abar}D^{-1}_{-\theta_T}Z$, respectively. Then}
    \begin{equation}
    g_t = (-\theta_T)^{t-1} - \frac{\cbar}{T}\frac{1-(-\theta_T)^{t-1}}{1+\theta_T},\label{eq:B.7}
\end{equation}
    \begin{equation}
    h_t = \sum_{j=1}^t (-\theta_T)^{t-j}\left[1-\frac{\cbar(j-1)}{T} \right].\label{eq:B.8}
\end{equation}
    Their quadratic products satisfy
    \begin{equation}
    T^{-1/2}\sum_{t=1}^T g^2_t \to \frac{1}{2\delta},\label{eq:B.9}
\end{equation}
    \begin{equation}
    T^{-2}\sum_{t=1}^T h^2_t \to \frac{1}{\delta^2}\left(1-\cbar + \frac{\cbar^2}{3} \right),\label{eq:B.10}
\end{equation}
    and
    \begin{equation}
    T^{-5/4}\sum_{t=1}^T g_th_t \to 0.\label{eq:B.11}
\end{equation}
    Moreover, there exists a random sequence $\zeta_T = O_{\IP}(1)$, common to the null and the point alternative, such that
    \begin{equation}
    \frac{1}{T^{1/4}\sigma_e}\sum_{t=1}^T g_t (D_{\abar}x)_t = \zeta_T + o_{\IP}(1),\label{eq:B.12}
\end{equation}
    whereas
    \begin{equation}
    \frac{1}{T\sigma_e}\sum_{t=1}^T h_t (D_{\abar}x)_t \wto \frac{1}{\delta}\left[(1-\cbar)J_c(1) +\cbar^2 \int_0^1 rJ_c(r)dr \right].\label{eq:B.13}
\end{equation}
\rev{The conclusions hold jointly for the two candidates $b\in\{0,\cbar\}$, with $\cbar$ replaced by $b$ and $D_{\abar}$ by $D_{1+b/T}$ throughout; the intercept-score approximation uses the same $\zeta_T$ for both candidates.}
\end{lemma}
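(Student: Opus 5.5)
The plan is to exploit the fact that all the matrices involved are lower-triangular Toeplitz. Write $\phi_T=-\theta_T=1-\delta/\sqrt T$. Each quasi-difference matrix $D_a$ equals $I-aL$, where $L$ is the lag matrix, so $D_{\abar}$ and $D^{-1}_{\phi_T}$ commute. Hence
\[
D_{\abar}D^{-1}_{\phi_T}Z=D^{-1}_{\phi_T}(D_{\abar}Z).
\]
The columns of $D_{\abar}Z$ were computed in the proof of Lemma~\ref{lem:A.3}: $(1,-\cbar/T,\dots)$ and $(1,\,1-\cbar(j-1)/T,\dots)$. The inverse $D^{-1}_{\phi_T}$ acts by $w\mapsto\sum_{j\le t}\phi_T^{t-j}w_j$. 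Applying this map and summing the geometric series, using $1-\phi_T=1+\theta_T$, gives \eqref{eq:B.7} and \eqref{eq:B.8} directly.

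For the quadratic products I would use the exponential-memory approximation $\phi_T^{t}\approx e^{-\delta t/\sqrt T}$.
\begin{itemize}
\item For \eqref{eq:B.9}, split $g_t$ into $\phi_T^{t-1}$ and a remainder $-\cbar(1-\phi_T^{t-1})/(\delta\sqrt T)$. The first part gives $\sum\phi_T^{2(t-1)}=(1-\phi_T^2)^{-1}(1+o(1))=\sqrt T/(2\delta)+o(\sqrt T)$. The remainder contributes $O(1)$ to the sum of squares, and the cross term is $O(T^{1/2}\cdot T^{-1/2}\cdot\sqrt T/\delta)$, which is of lower order; both vanish after scaling by $T^{-1/2}$.
\item For \eqref{eq:B.10}, the key step is the uniform approximation
\[
h_t=\frac{\sqrt T}{\delta}\Bigl(1-\frac{\cbar t}{T}\Bigr)\bigl(1-\phi_T^{t}\bigr)+O(1).
\]
This holds because a geometric filter with memory $\sqrt T$ applied to the slowly varying weight $f_j=1-\cbar(j-1)/T$ reproduces $f_t/(1-\phi_T)$ up to a boundary factor and an $O(1)$ error (summation by parts on $f$). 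The factor $1-\phi_T^t$ differs from $1$ only for $t\lesssim\sqrt T$, a negligible fraction of the range, so a Riemann sum yields $\delta^{-2}\int_0^1(1-\cbar r)^2dr$.
\item For \eqref{eq:B.11}, use $0\le h_t\le C\min(t,\sqrt T)$. The $\phi_T^{t-1}$ part of $g_t$ contributes $\sum_t\phi_T^{t}t=O(T/\delta^2)$. The $O(T^{-1/2})$ part contributes $T^{-1/2}\sum_t h_t=O(T)$. Both are $o(T^{5/4})$.
\end{itemize}

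For the scores I would use $(D_{\abar}x)_t=e_t+\frac{c-\cbar}{T}x_{t-1}$, which follows from \eqref{eq:B.1}. I would define $\zeta_T=T^{-1/4}\sigma_e^{-1}\sum_t\phi_T^{t-1}e_t$. Its variance is $T^{-1/2}\sum_t\phi_T^{2(t-1)}\to 1/(2\delta)$, so $\zeta_T=O_{\IP}(1)$, and it does not involve the candidate $b$, so the same $\zeta_T$ serves both candidates. The remaining pieces of \eqref{eq:B.12} are negligible:
\begin{itemize}
\item Using $\E x_{t-1}^2\le Ct$, the drift piece satisfies $T^{-5/4}\sum_t\phi_T^{t}|x_{t-1}|=O_{\IP}(T^{-5/4}\cdot T^{3/4})=o_{\IP}(1)$.
\item The $O(T^{-1/2})$ part of $g_t$ contributes $T^{-3/4}\bigl[O_{\IP}(\sqrt T)+T^{-1}O_{\IP}(T^{3/2})\bigr]=o_{\IP}(1)$.
\end{itemize}

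For \eqref{eq:B.13}, I would substitute the approximation of $h_t$. The boundary piece $\frac{\sqrt T}{\delta}\phi_T^{t}(1-\cbar t/T)$ contributes $T^{-1/2}O_{\IP}(T^{1/4})\to0$ against $e_t$, and a similar negligible amount against the drift. The main piece is
\[
\frac1\delta\Bigl[\frac1{\sqrt T}\sum_t\Bigl(1-\frac{\cbar t}{T}\Bigr)e_t+\frac{c-\cbar}{T^{3/2}}\sum_t\Bigl(1-\frac{\cbar t}{T}\Bigr)x_{t-1}\Bigr],
\]
which is exactly the expression analysed in the proof of \eqref{eq:A.17}. The same integration-by-parts argument, using $dW=dJ_c-cJ_c\,dr$, therefore yields $\delta^{-1}\bigl[(1-\cbar)J_c(1)+\cbar^2\int_0^1 rJ_c\bigr]$. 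Everything is a continuous functional of $(W_T,\,T^{-1/2}x_{[T\cdot]})$ plus $\zeta_T$, so joint convergence over $b\in\{0,\cbar\}$ follows as in Lemma~\ref{lem:A.4}.

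I expect the main obstacle to be the uniform $O(1)$ approximation of $h_t$, including its boundary layer of width $\sqrt T$. This approximation drives \eqref{eq:B.10}, \eqref{eq:B.11} and \eqref{eq:B.13}, and the errors must be controlled after multiplication by both the $e_t$ and the $x_{t-1}$ terms. I would establish it first by an explicit summation-by-parts formula for $\sum_{j\le t}\phi_T^{t-j}f_j$.
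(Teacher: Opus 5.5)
Your proposal is correct and follows essentially the paper's argument: the same closed forms come from applying the geometric filter $D^{-1}_{-\theta_T}$ to the quasi-differenced regressors, the same exponential-memory approximations handle the Gram entries, and the same candidate-free $\zeta_T=T^{-1/4}\sigma_e^{-1}\sum_t(-\theta_T)^{t-1}e_t$ is used; your uniform approximation of $h_t$ is the bound $|h_t/\sqrt T-(1-b(t-1)/T)/\delta|\le C\{(-\theta_T)^t+T^{-1/2}\}$ that the paper invokes in the proof of Theorem~\ref{thm:power_envelope}, and with it the trend score reduces, as in the paper, to the integration-by-parts computation of Lemma~\ref{lem:A.3}. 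One harmless slip: in \eqref{eq:B.9} the cross term is $2\sum_t(-\theta_T)^{t-1}\cdot O(T^{-1/2})=O(1)$, not $O(T^{1/2}\cdot T^{-1/2}\cdot\sqrt T/\delta)=O(\sqrt T)$; the bound as you wrote it would be of the same order as the main term.
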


\begin{proof}
    Quasi-differencing the intercept gives 1 in the first observation and $-\cbar/T$ after. Since $(D^{-1}_{-\theta_T}q)_t = \sum_{j=1}^t (-\theta_T)^{t-j}q_j,$ \eqref{eq:B.7} follows immediately. Quasi-differencing the trend gives $(D_{\abar}t)_j = 1 - \frac{\cbar(j-1)}{T}$, yielding \eqref{eq:B.8}. Since $-\theta_T = 1 - \delta/\sqrt{T}$ and $1+\theta_T = \delta/\sqrt{T},$ \eqref{eq:B.7} can be written
    \begin{equation}
    g_t = (-\theta_T)^{t-1} - \frac{\cbar}{\delta\sqrt{T}}[1-(-\theta_T)^{t-1}].\label{eq:B.14}
\end{equation}
    The second term is uniformly $O(T^{-1/2})$, while $(-\theta_T)^t = (1-\delta/\sqrt{T})^t$ is appreciable only over the first $O(\sqrt{T})$ observations. Hence
    \[
    T^{-1/2} \sum_{t=1}^T g^2_t = T^{-1/2}\sum_{t=1}^T (-\theta_T)^{2(t-1)} + o(1) \to \frac{1}{2\delta},
    \]
    proving \eqref{eq:B.9}.

    For the trend column, the geometric-sum representation in \eqref{eq:B.8} gives, for every fixed $r>0$,
    \begin{equation}
    \frac{h_{[Tr]}}{\sqrt{T}} \to \frac{1-\cbar r}{\delta}.\label{eq:B.15}
\end{equation}
    Moreover, $|h_t| \leq C \min(t, \sqrt{T})$ uniformly in $t$. We prove \eqref{eq:B.10} since
    \[
    T^{-2} \sum_{t=1}^T  h^2_t \to \frac{1}{\delta^2} \int_0^1 (1-\cbar r)^2 dr,
    \]
    by Riemann summation. The bounds in \eqref{eq:B.14}--\eqref{eq:B.15} similarly imply $\sum_{t=1}^T g_t h_t = O(T)$, and hence \eqref{eq:B.11}.

    For the scores, note from \eqref{eq:B.1} that 
    \begin{equation}
    (D_{\abar}x)_t = e_t + \frac{c-\cbar}{T}x_{t-1}\label{eq:B.16}
\end{equation}
    Define
    \begin{equation}
    \zeta_T = \frac{1}{T^{1/4}\sigma_e}\sum_{t=1}^T (-\theta_T)^{t-1}e_t.\label{eq:B.17}
\end{equation}
    Its variance converges to $(2\delta)^{-1}$, so $\zeta_T=O_{\IP}(1)$.
    By \eqref{eq:B.14}, replacing $g_t$ by $(-\theta_T)^{t-1}$
    in the normalized innovation score has an error with variance
    $O(T^{-1/2})$. Also, $\sum_{t=1}^T|g_t|=O(\sqrt T)$ and
    $\max_{t\leq T}|x_t|=O_{\IP}(\sqrt T)$, so the path contribution in
    \eqref{eq:B.16}, divided by $T^{1/4}\sigma_e$, is $O_{\IP}(T^{-1/4})$.
    These bounds hold for both candidates $0$ and $\cbar$.
    Thus \eqref{eq:B.12} holds simultaneously with the same $\zeta_T$,
    which does not depend on the candidate.

    For the second score, \eqref{eq:B.15}, \eqref{eq:A.1}, and \eqref{eq:B.16} yield
    \begin{equation}
    \frac{1}{T\sigma_e}\sum_{t=1}^T h_t (D_{\abar}x)_t \wto \frac{1}{\delta}\left[\int_0^1 (1-\cbar r)dW(r) + (c-\cbar)\int_0^1 (1-\cbar r)J_c(r) dr \right].\label{eq:B.18}
\end{equation}
    Using $dW(r) = dJ_c(r) - cJ_c(r) dr, $ the term in brackets becomes $\int_0^1(1-\cbar r)dJ_c(r) - \cbar \int_0^1 (1-\cbar r)J_c(r)dr.$ Integration by parts yields $\int_0^1 (1-\cbar r)dJ_c(r) = (1-\cbar)J_c(1) + \cbar \int_0^1 J_c(r) dr$, and the two terms involving $\int J_c$ cancel. The remaining expression is
    \[
    (1-\cbar) J_c(1) + \cbar^2 \int_0^1 rJ_c(r) dr,
    \]
    which proves \eqref{eq:B.13}.
\end{proof}

\begin{proof}[Proof of Theorem \ref{thm:power_envelope}]
We normalize the intercept column by $T^{-1/4}$ and the trend column by $T^{-1}$ of $D_{\abar}D_{-\theta_T}^{-1}Z.$ By \eqref{eq:B.9}--\eqref{eq:B.11}, the corresponding Gram matrix satisfies
\[
\begin{pmatrix}
    T^{-1/2}\sum_{t=1}^Tg^2_t & T^{-5/4}\sum_{t=1}^T g_t h_t \\
    T^{-5/4}\sum_{t=1}^T g_t h_t & T^{-2}\sum_{t=1}^T h^2_t
\end{pmatrix} \to \begin{pmatrix}
    (2\delta)^{-1} & 0 \\
    0 & \delta^{-2}(1-\cbar +\cbar^2/3)
\end{pmatrix}
\]
Apply these Gram-matrix limits simultaneously to the candidates
$b\in\{0,\cbar\}$, writing $g_t(b),h_t(b)$ for the columns in
\eqref{eq:B.7}--\eqref{eq:B.8} with $\cbar$ replaced by $b$.
Their normalized intercept scores equal the same
$\zeta_T+o_{\IP}(1)$, by \eqref{eq:B.12}.
For the trend score, geometric summation in \eqref{eq:B.8} gives,
for all sufficiently large $T$,
\[
 \left|\frac{h_t(b)}{\sqrt T}
       -\frac{1-b(t-1)/T}{\delta}\right|
 \leq C\{(-\theta_T)^t+T^{-1/2}\},
 \qquad 1\leq t\leq T.
\]
The resulting innovation-score error has variance $O(T^{-1/2})$.
Its path part is $O_{\IP}(T^{-1/2})$, using
$\max_{t\leq T}|x_t|=O_{\IP}(\sqrt T)$ and
$\sum_{t=1}^T(-\theta_T)^t=O(\sqrt T)$.
Discrete summation by parts, with $x_0=0$, therefore yields
\[
 \frac{1}{T\sigma_e}\sum_{t=1}^T h_t(b)(D_{1+b/T}x)_t
 =\frac{1}{\delta\sigma_e\sqrt T}
 \left[\left\{1-\frac{b(T-1)}T\right\}x_T
       +\frac{b^2}{T^2}\sum_{t=1}^{T-1}t x_t\right]+o_{\IP}(1),
\]
simultaneously for the two candidates.
Both normalized scores are bounded in probability. Inverting the
Gram matrices consequently gives the finite-sample approximations
\begin{equation}
 Q_T(\cbar)=2\delta\zeta_T^2+
 \frac{\left[
 \left(1-\cbar+\frac{\cbar}{T}\right)\frac{x_T}{\sigma_e\sqrt T}
 +\frac{\cbar^2}{\sigma_e T^{5/2}}\sum_{t=1}^{T-1}t x_t
 \right]^2}{1-\cbar+\cbar^2/3}
 +o_{\IP}(1).\label{eq:B.19}
\end{equation}
For the null candidate,
\begin{equation}
 Q_T(0)=2\delta\zeta_T^2+\frac{x_T^2}{T\sigma_e^2}
 +o_{\IP}(1).\label{eq:B.20}
\end{equation}
After subtracting $2\delta\zeta_T^2$, both approximations depend on
the same normalized path. The functional limit theorem and the
moment bounds used in \eqref{eq:B.5}--\eqref{eq:B.6} give their
joint weak convergence with $A_T(\cbar)$.
The common boundary contribution cancels before taking limits:
\begin{equation}
Q_T(0) - Q_T(\cbar) \wto J_c(1)^2 - \frac{\left[(1-\cbar)J_c(1) + \cbar^2 \int_0^1 rJ_c(r)dr \right]^2}{1-\cbar + \cbar^2/3}.\label{eq:B.21}
\end{equation}

Combining Lemma~\ref{lem:B.1} with \eqref{eq:B.21} yields
\begin{equation}
L^{*,\tau}_T(\cbar) \wto \cbar + \cbar^2 \int_0^1 J_c(r)^2 dr + (1-\cbar)J_c(1)^2 -\frac{\left[(1-\cbar)J_c(1) + \cbar^2 \int_0^1 rJ_c(r)dr \right]^2}{1-\cbar + \cbar^2/3}.\label{eq:B.22}
\end{equation}
Recall $\lambda = (1-\cbar)/(1-\cbar+\cbar^2/3)$ and $V_{c,\cbar}(r) = J_c(r) - r[\lambda J_c(1) + 3(1-\lambda) \int_0^1 sJ_c(s)ds].$ Since $3(1-\lambda)=\cbar^2/(1-\cbar+\cbar^2/3)$, the coefficient multiplying $r$ in $V_{c,\cbar}(r)$ is 
\[
\frac{(1-\cbar)J_c(1) + \cbar^2 \int_0^1 sJ_c(s) ds}{1-\cbar + \cbar^2/3}
\]
Expanding $V_{c,\cbar}(r)$ gives
\begin{equation}
\begin{aligned}
\cbar^2 \int_0^1 V_{c,\cbar}(r)^2dr + (1-\cbar)V_{c,\cbar}(1)^2&= \cbar^2 \int_0^1 J_c(r)^2dr + (1-\cbar)J_c(1)^2\\
&\quad{}-\frac{\left[(1-\cbar)J_c(1) + \cbar^2 \int_0^1 rJ_c(r)dr \right]^2}{1-\cbar + \cbar^2/3}.
\end{aligned}\label{eq:B.23}
\end{equation}
Thus $L^{*,\tau}_T(\cbar) \wto \cbar + P^{\tau}_{c,\cbar}$ where $P^{\tau}_{c,\cbar} = \cbar^2 \int_0^1 V_{c,\cbar}(r)^2dr + (1-\cbar)V_{c,\cbar}(1)^2.$

Notice that $\delta$ has disappeared. In particular, the limiting distribution of $L_T^{*,\tau}(\bar c)$ does not depend on $\delta$, since neither $V_{c,\bar c}$ nor the coefficients defining $P^\tau_{c,\bar c}$ involve $\delta$.  Let $k^{\tau}_{\alpha}(\cbar)$ denote the lower $\alpha$-quantile of $P^{\tau}_{0,\cbar}$. The asymptotic power of the point optimal invariant test indexed by $\cbar$ against a true local parameter $c$ is therefore
\begin{equation}
\pi^{\tau}(c,\cbar;\alpha)=\Pr(P^{\tau}_{c,\cbar} < k_{\alpha}^{\tau}(\cbar)).\label{eq:B.24}
\end{equation}
The pointwise envelope of \citet{ERS1996} is obtained by setting $\cbar = c$. Hence
\begin{equation}
\Pi^{\tau}_{NINW}(c,\delta;\alpha) = \Pr\left(c^2\int_0^1 V_{c,c}(r)^2dr + (1-c)V_{c,c}(1)^2 < k^{\tau}_{\alpha}(c) \right),\label{eq:B.25}
\end{equation}
which is exactly the standard envelope of \citet{ERS1996}. This completes the proof.

\end{proof}

\section{Appendix C}\label{sec:appendix_c}
\begingroup
\setlength{\jot}{3pt}
\setlength{\abovedisplayskip}{5pt plus 2pt minus 2pt}
\setlength{\belowdisplayskip}{5pt plus 2pt minus 2pt}
\setlength{\abovedisplayshortskip}{3pt plus 2pt}
\setlength{\belowdisplayshortskip}{4pt plus 2pt minus 2pt}
\setlength{\emergencystretch}{1em}

We prove Proposition~\ref{prop:feasible_s2ar} by applying an invertible transformation to
the lagged differences. Lemma~\ref{C:glsinputs} gives the required GLS bounds, and
Lemmas~\ref{C:innovation}--\ref{C:samplelemma} establish the coefficient
and cross-product approximations for the transformed regression.
The normal equations then give the residual-variance limit and,
through \eqref{C:coeflimit}, the limit
$\delta^{-1}\tanh(\kappa\delta/2)$ for $(1-\hat b(1))/\sqrt T$.

Throughout, the assumptions and notation of Proposition~\ref{prop:feasible_s2ar} apply.
In particular, $k$ is deterministic and satisfies
$k/\sqrt T\to\kappa\in(0,\infty)$.
Set $n=T-k-1$. Regression vectors and regressor matrices are stacked
over $t=k+2,\ldots,T$, with date $t$ corresponding to row $t-k-1$.
Dependence of the sample matrices and vectors on $T$ and $k$ is suppressed.
Write $\hat\psi_j=\hat\psi_j(u)$, $j=0,1$, for the GLS coefficients
from \eqref{eq:gls_minimizer} applied to $u_t$.
By \eqref{eq:lrv} and \eqref{eq:s^2_ar}, it suffices to establish
\begin{equation}\label{C:targets}
 \hat\sigma_{ek}^2\xrightarrow{\IP}\sigma_e^2,\qquad
 \frac{1-\hat b(1)}{\sqrt T}
 \xrightarrow{\IP}\frac1\delta\tanh(\kappa\delta/2).
\end{equation}
Normalize $\sigma_e^2=1$ until the final paragraph.
Vector norms are Euclidean and matrix norms are induced operator norms.
For a $d_1\times d_2$ matrix $D$,
$\|D\|_{\mathrm F}^2=\sum_{p=1}^{d_1}\sum_{q=1}^{d_2}D_{pq}^2$. The constant $C$ is finite,
independent of $T,k$, and may change between bounds. For any dimension
$m$, $\mathbf1_m$ is the vector of ones and $I_m$ is the identity matrix.
Inverses are used on events whose probabilities tend to one; the
associated quantities may be defined arbitrarily elsewhere.

Recall that model~\eqref{eq:model} is
\[
 u_t=\rho_Tu_{t-1}+e_t+\theta_Te_{t-1},\qquad
 \rho_T=1+c/T,\qquad \theta_T=-1+\delta/\sqrt T.
\]
Lemma~\ref{lem:A1} also gives $u_t=e_t+(\rho_T+\theta_T)X_{t-1}$, where
$X_t=\rho_TX_{t-1}+e_t$ and $X_0=0$. Define, for $1\leq t\leq T$,
\begin{equation}\label{C:leveldecomp}
 \xi_t=(\rho_T+\theta_T)X_{t-1}-\hat\psi_0-t\hat\psi_1,
 \qquad \ty_t=e_t+\xi_t.
\end{equation}
The corresponding $n$-vector for the lagged level is
$\xi=(\xi_{k+1},\ldots,\xi_{T-1})'$.

\begin{lemma}\label{C:glsinputs}
Suppose model~\eqref{eq:model} and Assumption~\ref{ass:innovations} hold, with fixed $\cbar<0$
and $\sigma_e^2=1$. Let $k=k_T$ be a deterministic integer sequence
such that $k/\sqrt T\to\kappa\in(0,\infty)$. Then the GLS
coefficients satisfy $\hat\psi_0=\Op(1)$ and
$\hat\psi_1=\Op(T^{-1})$. Moreover, $\max_{k+1\leq t\leq T-1}|\xi_t|=\Op(1)$ and
\begin{equation}\label{C:energy}
 x_T=\frac{\xi'\xi}{n}\wto
 x=\int_0^1\{\delta V_{c,\cbar}(r)-\varepsilon_1-q_{\cbar}r\}^2\,dr,
 \qquad \Pr(x>0)=1,
\end{equation}
where the limiting objects are defined in Section~\ref{sec:oracle}.
\end{lemma}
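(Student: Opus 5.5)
The strategy is to reduce everything to the GLS results of Appendix~A, applied to $u_t$ through the decomposition of Lemma~\ref{lem:A1}. By linearity of GLS, $\hat\psi_j=\hat\psi_j(u)=a_T\hat\psi_j(e)+b_T\hat\psi_j(X)$, where Lemma~\ref{lem:A1} writes $u_t=a_Te_t+b_TX_t$.

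\emph{Step 1: rates for the GLS coefficients.} Lemma~\ref{lem:A.4} gives
\[
\hat\psi_0(e)\wto\sigma_e\varepsilon_1,\qquad T\hat\psi_1(e)\wto\sigma_eq_{\cbar},\qquad \hat\psi_0(X)=\Op(1),\qquad \sqrt T\hat\psi_1(X)=\Op(1).
\]
Since $a_T\to1$ and $b_T=O(T^{-1/2})$, we get $\hat\psi_0=\Op(1)$ and $\hat\psi_1=\Op(T^{-1})+O(T^{-1/2})\Op(T^{-1/2})=\Op(T^{-1})$.

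\emph{Step 2: relating $\xi_t$ to the path $\zeta_{T,t}$.} Compare $\xi_t$ with $\zeta_{T,t}$ from \eqref{eq:A_gls_decomposition}. The two representations $u_t=e_t+(\rho_T+\theta_T)X_{t-1}$ and $u_t=a_Te_t+b_TX_t$ are algebraically identical. Hence
\[
\xi_t=\ty_t-e_t=\zeta_{T,t}+(a_T-1)e_t,\qquad a_T-1=O(T^{-1/2}).
\]
The uniform bound then follows from three facts:
\begin{itemize}
\item[(i)] $(\rho_T+\theta_T)=O(T^{-1/2})$ and $\max_t|X_t|=\Op(\sqrt T)$ by \eqref{eq:A.1}, so the $X$ part is $\Op(1)$ uniformly;
\item[(ii)] $|\hat\psi_0|+T|\hat\psi_1|=\Op(1)$ by Step~1;
\item[(iii)] $\max_t|e_t|=o_{\IP}(T^{1/4})$ under the fourth-moment condition, so $(a_T-1)\max_t|e_t|=o_{\IP}(1)$.
\end{itemize}
This gives $\max_{k+1\le t\le T-1}|\xi_t|=\Op(1)$.

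\emph{Step 3: the energy limit.} By \eqref{eq:A_gls_path} and continuous mapping,
\[
T^{-1}\sum_{t=1}^{T-1}\zeta_{T,t}^2\wto x .
\]
Three adjustments carry this over to $x_T=\xi'\xi/n$. First, dropping the first $k=O(\sqrt T)$ terms changes the sum by at most $k\max_t\zeta_{T,t}^2/T=\Op(T^{-1/2})$, using the uniform bound from Step~2. Second, replacing $T$ by $n=T-k-1$ costs a factor $T/n\to1$. Third, the correction $(a_T-1)e_t$ contributes $O(T^{-1})\sum e_t^2/T\to0$ to the squared terms, and the cross terms are handled by Cauchy--Schwarz. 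This yields \eqref{C:energy}, jointly with the other limits by the joint convergence in Lemmas~\ref{A.0} and \ref{lem:A.4}.

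\emph{Step 4: positivity of $x$ (the main obstacle).} We need $\Pr(x>0)=1$. The event $x=0$ forces the continuous function $\delta V_{c,\cbar}(r)-\varepsilon_1-q_{\cbar}r$ to vanish identically on $[0,1]$. That would make $V_{c,\cbar}$ affine in $r$, and therefore make $J_c(r)$ affine on $[0,1]$ with $J_c(0)=0$, since $V_{c,\cbar}$ differs from $J_c$ only by a random multiple of $r$. But an affine $J_c$ has finite quadratic variation zero, while $J_c$ has quadratic variation $r$ almost surely, because $J_c$ is an Ornstein--Uhlenbeck process driven by $W$. So this event has probability zero, regardless of the values of the independent variables $\varepsilon_1,\varepsilon_\infty$. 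I would state this directly via $J_c(r)-J_c(0)-\int_0^r cJ_c=W(r)$: if $J_c$ were affine, $W$ would be $C^1$, which occurs with probability zero.
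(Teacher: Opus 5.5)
Your proposal is correct and follows essentially the same route as the paper. Both arguments use the same steps: linearity of GLS with Lemma~\ref{lem:A.4} for the coefficient rates; the identity $\xi_t=\zeta_{T,t}+(a_T-1)e_t=\zeta_{T,t}-b_Te_t$ together with the fourth-moment bound $\max_t|e_t|=o_{\IP}(T^{1/4})$; continuous mapping with deletion of the first $k=o(T)$ dates; and a pathwise quadratic-variation argument (an affine function has zero quadratic variation, while $\delta J_c$ does not) giving $\Pr(x>0)=1$. The only minor variation is that you bound $\max_t|\xi_t|$ directly from its definition in \eqref{C:leveldecomp}, whereas the paper reads it off the weak convergence of the $\zeta_{T,t}$ path; both are valid.
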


\begin{proof}
Lemma~\ref{lem:A1} gives $u_t=a_Te_t+b_TX_t$, with $a_T=O(1)$,
$b_T=O(T^{-1/2})$ and $a_T+b_T=1$. Linearity and Lemma~\ref{lem:A.4} imply
\[
 \hat\psi_0=a_T\hat\psi_0(e)+b_T\hat\psi_0(X)=\Op(1),\qquad
 \hat\psi_1=a_T\hat\psi_1(e)+b_T\hat\psi_1(X)=\Op(T^{-1}).
\]
Recall from \eqref{eq:A_gls_decomposition}, in the proof of
Proposition~\ref{prop:sm_gls}, that $\ty_t=a_Te_t+\zeta_{T,t}$, where
$\zeta_{T,t}=b_TX_t-\hat\psi_0-t\hat\psi_1$, and
\eqref{eq:A_gls_path} gives weak convergence of
$\zeta_{T,\max\{1, [Tr]\}}$ to
$\delta V_{c,\cbar}(r)-\varepsilon_1-q_{\cbar}r$.
Equation~\eqref{C:leveldecomp} gives $\xi_t=\zeta_{T,t}-b_Te_t$.
The fourth-moment assumption yields
\[
 \max_{k+1\leq t\leq T-1}|\xi_t-\zeta_{T,t}|
 \leq |b_T|\max_{1\leq t\leq T}|e_t|
 =\Op(T^{-1/2}T^{1/4})=\op(1).
\]
Continuous mapping and Riemann summation establish the maximum bound
and \eqref{C:energy}; deleting $k+1=o(T)$ dates changes the normalized
sum of squares by $\op(1)$. The limiting integrand before squaring is
$\delta J_c$ minus a random affine function. Its quadratic variation
along dyadic partitions is $\delta^2>0$ almost surely, so it cannot
vanish identically. Hence $\Pr(x>0)=1$.
\end{proof}

We next express the autoregression in Proposition~\ref{prop:feasible_s2ar} in terms of
transformed lagged differences. The fitted regression in \eqref{eq:ADF} is
\[
 \Delta\ty_t=\hat b_0\ty_{t-1}
       +\sum_{j=1}^k\hat b_j\Delta\ty_{t-j}+\hat e_{t,k},
 \qquad k+2\leq t\leq T,
\]
where $\hat e_{t,k}$ is the fitted residual and
$\hat b(1)=\sum_{j=1}^k\hat b_j$.
Following the transformation of
\citet[p.~1545 and eqns.~(A.1)--(A.3), p.~1546]{NgPerron2001},
define, for $k+2\leq t\leq T$ and $1\leq j\leq k$,
\begin{equation}\label{C:transform}
 \begin{aligned}
 z_{t,j}&=\sum_{m=0}^{k-j}(-\theta_T)^m\Delta\ty_{t-j-m},\\
 R_j&=\sum_{m=0}^{k-j}(-\theta_T)^m
     =\frac{1-(-\theta_T)^{k-j+1}}{1+\theta_T}.
 \end{aligned}
\end{equation}
The same formula defines $R_0$ at $j=0$. Let $Z\in\mathbb R^{n\times k}$
have row $(z_{t,1},\ldots,z_{t,k})$ at date $t$, and define
$\mu=(\mu_1,\ldots,\mu_k)'$, $\mu_j=(-\theta_T)^{k-j+1}$,
and $\gamma=\mathbf1_k-\mu$.
The geometric weights in \eqref{C:transform} cancel the intermediate
innovations, leaving $e_{t-j}-\mu_je_{t-k-1}$ together with the local
autoregressive and detrending terms displayed in \eqref{C:lagidentity}.
The common terms $e_{t-k-1}$ yield a rank-one decomposition of the auxiliary
cross-product matrix, with the remaining matrix converging to $I_k$
in operator norm; see \eqref{C:rankone}.

The transformation is triangular with unit diagonal, so it preserves
the regressor space and residual sum of squares. If
$\hat\beta=(\hat\beta_1,\ldots,\hat\beta_k)'\in\mathbb R^k$ denotes the coefficients
on $Z$ in the regression also containing the lagged level, then
\begin{equation}\label{C:coeffmap}
 \begin{aligned}
 \hat b_m&=\sum_{j=1}^m(-\theta_T)^{m-j}\hat\beta_j,
       \qquad 1\leq m\leq k,\\
 (1+\theta_T)\hat b(1)&=\gamma'\hat\beta.
 \end{aligned}
\end{equation}
Since $\ty_t=u_t-\hat\psi_0-t\hat\psi_1$, model~\eqref{eq:model} gives, for $t\geq2$,
\[
 \Delta\ty_t=\frac cT u_{t-1}+e_t+\theta_Te_{t-1}-\hat\psi_1.
\]
Substitution into \eqref{C:transform} gives
\begin{equation}\label{C:lagidentity}
 z_{t,j}=e_{t-j}-\mu_je_{t-k-1}+p^c_{t,j}-\hat\psi_1R_j,
\end{equation}
where
\begin{equation}\label{C:perturbations}
 p^c_{t,j}=\frac cT\sum_{m=0}^{k-j}(-\theta_T)^mu_{t-j-m-1}.
\end{equation}
Using the case $j=1$ and $R_0=1-\theta_TR_1$, the response satisfies
\begin{equation}\label{C:response}
 \Delta\ty_t=(c/T)\ty_{t-1}+\theta_Tz_{t,1}+r_t,
\end{equation}
with
\begin{equation}\label{C:responseparts}
 \begin{aligned}
 r_t&=e_t-(-\theta_T)^{k+1}e_{t-k-1}-\theta_Tp^c_{t,1}+r^d_t,\\
 r^d_t&=\frac cT\{\hat\psi_0+(t-1)\hat\psi_1\}-\hat\psi_1R_0.
 \end{aligned}
\end{equation}
The superscripts $c,d$ distinguish the local autoregressive and
detrending terms.
All level observations in these identities have dates at least one.

We first study the regression with regressors
$e_{t-j}-\mu_je_{t-k-1}$ and response
$e_t-(-\theta_T)^{k+1}e_{t-k-1}$. Let $G$ be the $n\times k$ matrix
\[
 G=\begin{pmatrix}
 e_{k+1}&e_k&\cdots&e_2\\
 e_{k+2}&e_{k+1}&\cdots&e_3\\
 \vdots&\vdots&\ddots&\vdots\\
 e_{T-1}&e_{T-2}&\cdots&e_{T-k}
 \end{pmatrix},
\]
whose row at date $t$ is $(e_{t-1},\ldots,e_{t-k})$. The $n$-vectors
$\eta=(e_1,\ldots,e_n)'$ and $f=(e_{k+2},\ldots,e_T)'$
have corresponding entries $e_{t-k-1}$ and $e_t$.
Define the $n\times k$ auxiliary regressor matrix $Z_0$ and the
$n$-vector $r_0$ by
\begin{equation}\label{C:innovationarrays}
 Z_0=G-\eta\mu',\qquad r_0=f-(-\theta_T)^{k+1}\eta.
\end{equation}
For the regression of $r_0$ on $Z_0$, define the cross-product
matrix $Q_0\in\mathbb R^{k\times k}$ and coefficient vector
$\alpha_0\in\mathbb R^k$ by
\[
 Q_0=\frac{Z_0'Z_0}{n},\qquad
 \alpha_0=Q_0^{-1}\frac{Z_0'r_0}{n}.
\]
Write $\mathbf e_1=(1,0,\ldots,0)'\in\mathbb R^k$. The next lemma
gives the coefficient and fitted-value bounds for
this regression. The common terms involving $e_{t-k-1}$ in the response
and regressors determine the leading term of $\gamma'\alpha_0$.

\begin{lemma}\label{C:innovation}
Suppose Assumption~\ref{ass:innovations} holds with $\sigma_e^2=1$, and let
$\theta_T=-1+\delta/\sqrt T$. Let $k=k_T$ be a deterministic integer
sequence such that $k/\sqrt T\to\kappa\in(0,\infty)$.
Then the matrix $Q_0$ is positive definite with probability tending to
one, and $\|Q_0^{-1}\|=\Op(1)$. Also,
\begin{equation}\label{C:auxbounds}
 \|\alpha_0\|=\Op(T^{-1/4}),\qquad
 \mathbf e_1'\alpha_0=\Op(T^{-3/8}),\qquad
 \mathbf e_1'Q_0^{-1}\mathbf e_1=1+\op(1),
\end{equation}
referred to respectively as the \emph{norm bound}, \emph{leading-entry
bound}, and \emph{diagonal limit}; and
\begin{equation}\label{C:innovationlimits}
 \gamma'\alpha_0=(-\theta_T)^{k+1}
       \frac{\gamma'\mu}{1+\|\mu\|^2}+\op(1),\qquad
 \frac{r_0'Z_0\alpha_0}{n}=(-\theta_T)^{2k+2}
       \frac{\|\mu\|^2}{1+\|\mu\|^2}+\op(1).
\end{equation}
referred to as the \emph{coefficient-sum limit} and the
\emph{fitted-value limit}.
\end{lemma}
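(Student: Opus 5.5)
The plan is to treat $Q_0$ as a well-conditioned matrix plus one large rank-one spike along $\mu$, and to invert it exactly by Sherman--Morrison. Write
\[
g=\frac{G'\eta}{n},\qquad s=\frac{\eta'\eta}{n},\qquad h=\frac{G'f}{n},\qquad \ell=\frac{\eta'f}{n}.
\]
Then
\[
Q_0=\frac{G'G}{n}-g\mu'-\mu g'+s\mu\mu'
=M+s\,ww',\qquad M=\frac{G'G}{n}-\frac{gg'}{s},\qquad w=\mu-\frac{g}{s}.
\]
Completing the square this way matters. Since $\|\mu\|^2=\sum_{j}(-\theta_T)^{2(k-j+1)}\sim\sqrt T(1-e^{-2\kappa\delta})/(2\delta)$ while $\|g\|=\Op(\sqrt{k/n})=\Op(T^{-1/4})$, the cross terms $g\mu'$ are $\Op(1)$ in operator norm and cannot be discarded. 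After completing the square, they are absorbed into the spike direction $w$.

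The first step is $\|M-I_k\|=\op(1)$, with an explicit rate. $G'G/n-I_k$ is, up to $O_{\IP}(k/n)$ edge terms, a Toeplitz matrix in the sample autocovariance deviations $\hat\gamma(h)-1\{h=0\}$ for $|h|<k$. Its operator norm is bounded by $\sup_\omega|\sum_{|h|<k}(\hat\gamma(h)-1\{h=0\})e^{ih\omega}|$, a smoothed-periodogram deviation with bandwidth $k$. I would bound this by $\Op(\sqrt{k\log k/n})=\Op(T^{-1/4}\sqrt{\log T})$, using the fourth-moment condition, a chaining or grid argument over $\omega$, and the fact that $\|gg'\|/s=\Op(T^{-1/2})$. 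This yields positive definiteness of $M$ with probability tending to one. Since $Q_0\succeq M$, it also gives $\|Q_0^{-1}\|\le\|M^{-1}\|=\Op(1)$. Sherman--Morrison then gives
\[
Q_0^{-1}=M^{-1}-\frac{sM^{-1}ww'M^{-1}}{1+s\,w'M^{-1}w},\qquad s\,w'M^{-1}w=\|\mu\|^2(1+\op(1))\asymp\sqrt T .
\]

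The diagonal limit follows directly. $\mathbf e_1'M^{-1}\mathbf e_1\to1$, and the correction is at most $s(\mathbf e_1'M^{-1}w)^2/(sw'M^{-1}w)$. Here $\mathbf e_1'M^{-1}w=\mu_1+\op(1)=O_{\IP}(1)$, because $\mu_1=(-\theta_T)^k\to e^{-\kappa\delta}$, $\|g\|\to0$, and $\|M^{-1}-I\|\,\|\mu\|=\Op(\sqrt{\log T})$. The last product is too crude by itself, so I would instead use that $\mathbf e_1'(M^{-1}-I)\mu$ involves a single row of $M^{-1}-I$. Either way the correction is $\Op(T^{-1/2}\log T)$.

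Next expand the score:
\[
\frac{Z_0'r_0}{n}=h-(-\theta_T)^{k+1}g-\ell\mu+(-\theta_T)^{k+1}s\mu .
\]
Write the last two terms as $((-\theta_T)^{k+1}s-\ell)\bigl(w+g/s\bigr)$. The spike part satisfies
\[
Q_0^{-1}w=\frac{M^{-1}w}{1+sw'M^{-1}w},\qquad \|Q_0^{-1}w\|=\Op(T^{-1/4}).
\]
The remaining terms $h$ and $g$ have norm $\Op(T^{-1/4})$ and are hit by $\|Q_0^{-1}\|=\Op(1)$. This gives the norm bound.

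For the coefficient-sum limit I would compute $\gamma'Q_0^{-1}w$ from the display above. Replacing $M$ by $I$ and $w$ by $\mu$ costs relative $\op(1)$, which yields the leading term $(-\theta_T)^{k+1}\gamma'\mu/(1+\|\mu\|^2)$, an $\Op(1)$ quantity.

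The main obstacle is that $\|\gamma\|\asymp T^{1/4}$, so the crude bound $\|\gamma\|\,\|Q_0^{-1}\|\,\|h\|=\Op(1)$ does not show that $\gamma'Q_0^{-1}h$ is negligible. I would split the term as
\[
\gamma'Q_0^{-1}h=\gamma'h+\gamma'(M^{-1}-I)h-(\text{rank-one correction}).
\]
The first piece is a martingale sum: $\gamma'h=n^{-1}\sum_t\bigl(\sum_j\gamma_je_{t-j}\bigr)e_t$, with variance $\|\gamma\|^2/n=O(T^{-1/2})$. The second is $\Op(T^{1/4}\cdot T^{-1/4}\sqrt{\log T}\cdot T^{-1/4})=\op(1)$. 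The third factors through $\gamma'M^{-1}w/(sw'M^{-1}w)=\Op(1)$ and $w'M^{-1}h$, and $w'M^{-1}h$ is handled the same way as $\gamma'h$. The term $\gamma'Q_0^{-1}g$ is treated identically.

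The leading-entry bound comes from the same decomposition with $\gamma$ replaced by $\mathbf e_1$. Here $\mathbf e_1'h=O_{\IP}(n^{-1/2})$, and the spike contributes $\mu_1$ times $O_{\IP}(T^{-1/2})$. The remaining pieces, $\mathbf e_1'(M^{-1}-I)h$ and $\mathbf e_1'M^{-1}g$, are controlled by the row rate of $M^{-1}-I$, which is where the $T^{-3/8}$ exponent arises.

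Finally, the fitted-value limit uses the identity
\[
\frac{r_0'Z_0\alpha_0}{n}=\Bigl(\frac{Z_0'r_0}{n}\Bigr)'Q_0^{-1}\Bigl(\frac{Z_0'r_0}{n}\Bigr).
\]
Its dominant part is
\[
(-\theta_T)^{2k+2}s^2\,w'Q_0^{-1}w=(-\theta_T)^{2k+2}\frac{s\,(s\,w'M^{-1}w)}{1+s\,w'M^{-1}w}.
\]
With $s\to1$ and $sw'M^{-1}w=\|\mu\|^2(1+\op(1))$, this equals $(-\theta_T)^{2k+2}\|\mu\|^2/(1+\|\mu\|^2)+\op(1)$. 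The cross terms between the spike and $h$, $g$ are $\op(1)$ by the same martingale and operator-norm estimates used for the coefficient sum.
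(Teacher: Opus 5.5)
Your proposal is correct and is essentially the paper's argument. Your $M=G'G/n-gg'/s$ and $w=\mu-g/s$ (with $s=\eta'\eta/n$) are exactly the paper's $A=G'M_{\eta}G/n$ and $\mu_{\eta}$, and the non-spike part of your score, $h-(\eta'f/n)g/s$, is the paper's $G'M_{\eta}f/n$. The Sherman--Morrison inversion, together with the $\|\nu\|^2/n$ variance bounds for $\nu'G'f/n$ and $\nu'G'\eta/n$, is how the paper obtains the coefficient-sum and fitted-value limits.

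The one overstatement is the rate $\Op(\sqrt{k\log k/n})$ for $\|G'G/n-I_k\|$. A fourth-moment grid argument does not deliver it: Markov at the fourth moment over $O(k)$ grid points gives only the paper's $\Op((k^3/n^2)^{1/4})=\Op(T^{-1/8})$. Nothing in your argument needs more than that. With $\|M^{-1}-I_k\|=\Op(T^{-1/8})$, every error term you list is still $\op(1)$. Moreover, $\|M^{-1}-I_k\|\,\|h\|=\Op(T^{-3/8})$ is precisely where the paper's leading-entry exponent comes from.
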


\begin{proof}
We first bound the sample cross products of the building-block
matrices $G$, $\eta$, and $f$; the bounds
\eqref{C:gram}--\eqref{C:directionvariance} established here are used
throughout the argument.
For a deterministic $m\times m$ matrix $D$ and a vector
$\epsilon=(\epsilon_1,\ldots,\epsilon_m)'$ of iid unit-variance, mean-zero
innovations, independence and the fourth-moment assumption imply
\begin{equation}\label{C:quadratic}
 \E\{(\epsilon'D\epsilon-\operatorname{tr} D)^2\}\leq C\|D\|_{\mathrm F}^2.
\end{equation}
For nonsymmetric $D$, apply this inequality to $(D+D')/2$.
For the following comparison only, extend the innovations to an iid
sequence $e_t^*$ on the integers, with $e_t^*=e_t$ for $t\geq1$.
The auxiliary variables at $t\leq0$ do not alter the model's
initialization $e_0=0$. For
$\hat c_j=n^{-1}\sum_{t=1}^n e_t^*e_{t-j}^*$, $0\leq j<k$,
define the $k\times k$ Toeplitz matrix
$B=[\hat c_{|j-l|}]_{j,l=1}^k$.
Each entry of $G'G/n-B$ contains at most $2k$ endpoint products,
with cancelling diagonal means. Equation~\eqref{C:quadratic} gives
$\E\|G'G/n-B\|_{\mathrm F}^2\leq Ck^3/n^2$.
To bound $B-\hat c_0I_k$, define, for $\lambda\in[0,2\pi]$,
\[
 p(\lambda)=2\sum_{j=1}^{k-1}\hat c_j\cos(j\lambda)
 =\frac2n\Re\sum_{t=1}^n e_t^*L_t(\lambda),\qquad
 L_t(\lambda)=\sum_{j=1}^{k-1}e^{\mathrm i j\lambda}e_{t-j}^*.
\]
For fixed $\lambda$, $e_t^*L_t(\lambda)$ is a martingale difference,
$e_t^*$ is independent of $L_t(\lambda)$, and
$\E|L_t(\lambda)|^4\leq Ck^2$. The fourth-moment inequality of \citet{Burkholder1973}, applied to the real and imaginary parts, yields
\[
 \E|p(\lambda)|^4
 \leq\frac C{n^4}\E\left(\sum_{t=1}^n
 |e_t^*L_t(\lambda)|^2\right)^2\leq Ck^2/n^2.
\]
Bernstein's derivative inequality, $\sup|p'(\cdot)|\leq(k-1)\sup|p(\cdot)|$,
bounds $\sup|p|$ by twice its maximum on a grid of $O(k)$ points.
The union and Markov inequalities imply
$\Pr\{\sup|p|>\varepsilon\}\leq Ck^3/(n^2\varepsilon^4)$.
The Toeplitz inequality
$\|B-\hat c_0I_k\|\leq\sup_{\lambda\in[0,2\pi]}|p(\lambda)|$,
together with $\hat c_0-1=\Op(n^{-1/2})$, therefore gives
\begin{equation}\label{C:gram}
 \|G'G/n-I_k\|
 =\Op\!\left(n^{-1/2}+k^{3/2}/n+(k^3/n^2)^{1/4}\right)
 =\Op(T^{-1/8}).
\end{equation}
Independence and \eqref{C:quadratic} also give
\begin{equation}\label{C:cross}
\begin{aligned}
 \|G'\eta/n\|+\|G'f/n\|&=\Op(T^{-1/4}),&
 \eta'\eta/n&=1+\Op(T^{-1/2}),\\
 |\mathbf e_1'G'\eta/n|+|\mathbf e_1'G'f/n|&=\Op(T^{-1/2}),&
 \eta'f/n&=\Op(T^{-1/2}).
\end{aligned}
\end{equation}
For $\nu\in\{\mu,\gamma\}$, application of \eqref{C:quadratic} to
the weighted cross products gives
\begin{equation}\label{C:directionvariance}
 \operatorname{Var}(\nu'G'\eta/n)+\operatorname{Var}(\nu'G'f/n)
 \leq C\|\nu\|^2/n=O(T^{-1/2}).
\end{equation}
Both cross products have mean zero; geometric summation gives $\|\mu\|^2\asymp k$ and
$\|\gamma\|^2=O(k)$.

We next derive the bounds in~\eqref{C:auxbounds}
and the coefficient-sum limit in~\eqref{C:innovationlimits}
via a rank-one decomposition of $Q_0$.
Since $Z_0=G-\eta\mu'$, projection of $G$ onto the orthogonal complement
of $\eta$ gives a rank-one decomposition of $Q_0$. Define
\[
 M_{\eta}=I_n-\frac{\eta\eta'}{\eta'\eta},\qquad
 A=\frac{G'M_{\eta}G}{n},\qquad
 \mu_{\eta}=\mu-\frac{G'\eta}{\eta'\eta},\qquad
 s=\frac{G'M_{\eta}f}{n},
\]
where $M_{\eta}\in\mathbb R^{n\times n}$, $A\in\mathbb R^{k\times k}$,
and $\mu_{\eta},s\in\mathbb R^k$. Since
$Z_0=M_{\eta}G-\eta\mu_{\eta}'$ and $\eta'M_{\eta}G=0$, we have
\begin{equation}\label{C:rankone}
 Q_0=A+(\eta'\eta/n)\mu_{\eta}\mu_{\eta}',\qquad
 Z_0'r_0/n=s+\{(-\theta_T)^{k+1}(\eta'\eta/n)-\eta'f/n\}\mu_{\eta}.
\end{equation}
Equations~\eqref{C:gram}--\eqref{C:cross} imply
\[
 \|A-I_k\|=\Op(T^{-1/8}),\qquad
 \|s\|+\|\mu_{\eta}-\mu\|=\Op(T^{-1/4}).
\]
Consequently, $A$ and $Q_0$ are positive definite with probability
tending to one, their inverse norms are $\Op(1)$, and
$\|A^{-1}-I_k\|=\Op(T^{-1/8})$. The rank-one inverse formula yields
\begin{equation}\label{C:inverse}
 Q_0^{-1}=A^{-1}
 -\frac{(\eta'\eta/n)A^{-1}\mu_{\eta}\mu_{\eta}'A^{-1}}
 {1+(\eta'\eta/n)\mu_{\eta}'A^{-1}\mu_{\eta}},
\end{equation}
and therefore
\begin{equation}\label{C:alpha}
 \alpha_0=A^{-1}s+A^{-1}\mu_{\eta}
 \frac{(-\theta_T)^{k+1}(\eta'\eta/n)-\eta'f/n-(\eta'\eta/n)\mu_{\eta}'A^{-1}s}
 {1+(\eta'\eta/n)\mu_{\eta}'A^{-1}\mu_{\eta}}.
\end{equation}
For $\nu\in\{\mu,\gamma\}$, \eqref{C:directionvariance} implies
\[
 \nu's=\frac{\nu'G'f}{n}
 -\frac{(\nu'G'\eta/n)(\eta'f/n)}{\eta'\eta/n}=\Op(T^{-1/4}).
\]
Since $\|\nu\|=O(T^{1/4})$,
\[
\begin{aligned}
 |\nu'(A^{-1}-I_k)s|&\leq\|\nu\|\|A^{-1}-I_k\|\|s\|
 =\Op(T^{-1/8}),\\
 \nu'A^{-1}s&=\nu's+\nu'(A^{-1}-I_k)s=\op(1).
\end{aligned}
\]
Moreover, $(\mu_{\eta}-\mu)'A^{-1}s=\Op(T^{-1/2})$. Also,
\[
\begin{aligned}
 |\mu_{\eta}'A^{-1}\mu_{\eta}-\|\mu\|^2|
 &\leq\|A^{-1}-I_k\|\|\mu_{\eta}\|^2
 +\|\mu_{\eta}-\mu\|(\|\mu_{\eta}\|+\|\mu\|)=\op(k),\\
 |\nu'A^{-1}\mu_{\eta}-\nu'\mu|
 &\leq\|\nu\|\|A^{-1}-I_k\|\|\mu_{\eta}\|
 +\|\nu\|\|\mu_{\eta}-\mu\|=\op(k).
\end{aligned}
\]
The fraction in \eqref{C:alpha} has numerator
$(-\theta_T)^{k+1}+\op(1)$ and denominator
$(1+\|\mu\|^2)\{1+\op(1)\}$, of order $k$. Hence
\begin{equation}\label{C:direction}
 \nu'\alpha_0=(-\theta_T)^{k+1}
 \frac{\nu'\mu}{1+\|\mu\|^2}+\op(1),\quad \nu\in\{\mu,\gamma\},
 \qquad \|\alpha_0\|=\Op(T^{-1/4}).
\end{equation}
In particular, the case $\nu=\gamma$ is the coefficient-sum limit
in~\eqref{C:innovationlimits}, and
$\|\alpha_0\|=\Op(T^{-1/4})$ is the norm bound in~\eqref{C:auxbounds}.
For the first coordinate, \eqref{C:cross} gives
\[
\begin{aligned}
 \mathbf e_1'A^{-1}s&=\mathbf e_1's+\mathbf e_1'(A^{-1}-I_k)s
 =\Op(T^{-1/2})+\Op(T^{-3/8}),\\
 \mathbf e_1'A^{-1}\mu_{\eta}&=\mathbf e_1'\mu_{\eta}+\mathbf e_1'(A^{-1}-I_k)\mu_{\eta}
 =\Op(1)+\Op(T^{1/8}).
\end{aligned}
\]
Equation~\eqref{C:alpha} implies $\mathbf e_1'\alpha_0=\Op(T^{-3/8})$,
which is the leading-entry bound.
The correction to the first diagonal entry in \eqref{C:inverse} is
$\Op(T^{1/4}/k)=\op(1)$, completing the diagonal limit and thereby
all of~\eqref{C:auxbounds}.

It remains to prove the fitted-value limit
in~\eqref{C:innovationlimits}.
Equation~\eqref{C:rankone} implies
\[
 \|Z_0'r_0/n-(-\theta_T)^{k+1}\mu\|=\Op(T^{-1/4}).
\]
Multiplication by $\alpha_0$, followed by \eqref{C:direction} with
$\nu=\mu$, proves the fitted-value limit; the coefficient-sum limit
was established in~\eqref{C:direction} with $\nu=\gamma$.
\end{proof}

We next compare the regression of $r=(r_{k+2},\ldots,r_T)'$ on $Z$
with the regression of $r_0$ on $Z_0$. Recall the entries $p^c_{t,j}$
and $r^d_t$ in \eqref{C:perturbations} and \eqref{C:responseparts}.
Let $P^c\in\mathbb R^{n\times k}$ have row
$(p^c_{t,1},\ldots,p^c_{t,k})$ at date $t$, and set
$R=(R_1,\ldots,R_k)'$, $P^d=-\hat\psi_1\mathbf1_nR'$ and
$P=P^c+P^d$. For the response, define the $n$-vectors
\[
 r^c=-\theta_T(p^c_{k+2,1},\ldots,p^c_{T,1})',\qquad
 r^d=(r^d_{k+2},\ldots,r^d_T)'.
\]
Equations~\eqref{C:lagidentity} and \eqref{C:responseparts} give
\[
 Z=Z_0+P,\qquad r=r_0+r^c+r^d.
\]
For the lagged-level vector
$\ell=(\ty_{k+1},\ldots,\ty_{T-1})'\in\mathbb R^n$, define
\[
 Q=\frac{Z'Z}{n}\in\mathbb R^{k\times k},\qquad
 \alpha=Q^{-1}\frac{Z'r}{n}\in\mathbb R^k,\qquad
 w=\frac{Z'\ell}{n}\in\mathbb R^k.
\]
Thus $\alpha$ contains the coefficients from regressing $r$ on $Z$
alone. Parts (i) and (ii) of the next lemma compare this regression
with that of $r_0$ on $Z_0$. In particular, they imply
$\gamma'(\alpha-\alpha_0)=\op(1)$. Consequently, the first relation
in \eqref{C:innovationlimits} also holds with $\alpha$ in place of
$\alpha_0$. Part (iii) provides the moments needed to include
$\ell$ as an additional regressor.

\begin{lemma}\label{C:samplelemma}
Suppose model~\eqref{eq:model} and Assumption~\ref{ass:innovations} hold,
with fixed $\cbar<0$ and $\sigma_e^2=1$. Let $k=k_T$ be a deterministic
integer sequence such that $k/\sqrt T\to\kappa\in(0,\infty)$.
Then the following statements hold.
\begin{enumerate}
\renewcommand{\labelenumi}{(\roman{enumi})}
\item The matrices $Q,Q_0$ and responses $r,r_0$ satisfy
\begin{equation}\label{C:sample}
\begin{aligned}
 \|Q-Q_0\|&=\Op(T^{-1/2}),&
 \|(Z'r-Z_0'r_0)/n\|&=\Op(T^{-3/4}),\\
 \|r-r_0\|&=\Op(1).
\end{aligned}
\end{equation}
\item The matrix $Q$ is positive definite with probability tending to one,
and
\begin{equation}\label{C:transferbounds}
\begin{aligned}
 \|Q^{-1}\|&=\Op(1),&
 \|Q^{-1}-Q_0^{-1}\|&=\Op(T^{-1/2}),\\
 \|\alpha-\alpha_0\|&=\Op(T^{-3/4}).
\end{aligned}
\end{equation}
\item For $x_T=\xi'\xi/n$ from Lemma~\ref{C:glsinputs},
\begin{equation}\label{C:level}
\begin{aligned}
 \ell'\ell/n&=1+x_T+\op(1),&
 \|w-\mathbf e_1\|&=\Op(T^{-1/4}),\\
 \ell'r/n&=\Op(T^{-1/2}).
\end{aligned}
\end{equation}
\end{enumerate}
\end{lemma}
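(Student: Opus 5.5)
The plan is to treat $Z=Z_0+P$ and $r=r_0+r^c+r^d$ as small perturbations of the auxiliary innovation regression in Lemma~\ref{C:innovation}. The perturbation matrices should be split into pieces that are either nearly rank one, and so controllable through the cross-product bounds \eqref{C:cross}--\eqref{C:directionvariance}, or uniformly small. Throughout I use Lemma~\ref{C:glsinputs} ($\hat\psi_0=\Op(1)$, $\hat\psi_1=\Op(T^{-1})$, $\max|\xi_t|=\Op(1)$) and $0\le R_j\le k+1=O(\sqrt T)$, so that $\|R\|=O(T^{3/4})$.

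\emph{Part (i).} First write $u_{s}=e_{s}+(\rho_T+\theta_T)X_{s-1}$ inside \eqref{C:perturbations}, giving $P^c=P^{c,e}+P^{c,X}$.
\begin{itemize}
\item For the level part, $(\rho_T+\theta_T)=O(T^{-1/2})$ and $\max_s|X_s|=\Op(\sqrt T)$. Replacing $X_{t-j-m-2}$ by $X_{t-1}$ costs only $O(k)$ increments of size $\Op(1)$. Hence $P^{c,X}=aR'+E$, with $a_t=(c/T)(\rho_T+\theta_T)X_{t-1}=\Op(T^{-1})$ uniformly, and $E$ has Frobenius norm of smaller order.
\item The innovation part $P^{c,e}$ has entries $(c/T)\times$(a geometric sum of $\le k$ innovations), each with variance $O(k/T^2)$. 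This gives $\E\|P^{c,e}\|_{\mathrm F}^2=O(nk^2/T^2)=O(1)$.
\item $P^d=-\hat\psi_1\mathbf1_nR'$ is exactly rank one, with $|\hat\psi_1|\,\|R\|=\Op(T^{-1/4})$.
\end{itemize}
Then bound $Q-Q_0=(Z_0'P+P'Z_0+P'P)/n$ term by term.
\begin{itemize}
\item For the rank-one pieces, one needs $\|Z_0'\mathbf1_n\|/n$ and $\|Z_0'a\|/n$. Since $Z_0=G-\eta\mu'$, these follow from partial-sum bounds: $\|G'\mathbf1_n/n\|=\Op(\sqrt{k/n})=\Op(T^{-1/4})$, and $\eta'\mathbf1_n/n=\Op(n^{-1/2})$ multiplied by $\|\mu\|=O(T^{1/4})$. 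The same martingale argument applies to $a$, which is predictable. Each product is then $\Op(T^{-1/4})\cdot\Op(T^{-1/4})=\Op(T^{-1/2})$.
\item For $P^{c,e}$, use $\|Z_0\|/\sqrt n=\Op(T^{1/4})$ from \eqref{C:rankone}, together with a sharper directional bound obtained from \eqref{C:quadratic}, since $P^{c,e}$ is a linear filter of the same innovations.
\end{itemize}
The response bounds follow in the same way. The claim $\|r-r_0\|=\Op(1)$ holds because $r^c$ and $r^d$ have entries $\Op(T^{-1/2})$. The bound $\|(Z'r-Z_0'r_0)/n\|=\Op(T^{-3/4})$ uses that $r_0$ is essentially $f$, a martingale difference vector relative to the predictable perturbations, so the cross products gain an extra $n^{-1/2}$. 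I expect this sharp $T^{-1/2}$ and $T^{-3/4}$ bookkeeping, especially the interaction of $P^{c,e}$ with the spike direction $\mu$ in $Z_0$, to be the main obstacle. If the crude operator-norm route loses a power of $T$, I would project onto $\mu_\eta$ and its orthogonal complement, as in \eqref{C:rankone}, and bound each block separately.

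\emph{Part (ii).} Since $\|Q_0^{-1}\|=\Op(1)$ (Lemma~\ref{C:innovation}) and $\|Q-Q_0\|=\op(1)$, $Q$ is positive definite with probability tending to one and $\|Q^{-1}\|=\Op(1)$. The identity $Q^{-1}-Q_0^{-1}=Q^{-1}(Q_0-Q)Q_0^{-1}$ then gives $\|Q^{-1}-Q_0^{-1}\|=\Op(T^{-1/2})$. For the coefficients I will not expand $Q^{-1}-Q_0^{-1}$ against $Z_0'r_0/n$, which has norm of order $T^{1/4}$ and would lose too much. Instead I use the exact identity
\[
\alpha-\alpha_0=Q^{-1}\Bigl[\frac{Z'r-Z_0'r_0}{n}-(Q-Q_0)\alpha_0\Bigr].
\]
Its two terms are $\Op(T^{-3/4})$ by part (i) and $\Op(T^{-1/2})\cdot\Op(T^{-1/4})$ by the norm bound in \eqref{C:auxbounds}, which gives $\|\alpha-\alpha_0\|=\Op(T^{-3/4})$.

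\emph{Part (iii).} Write $\ell_t=e_{t-1}+\xi_{t-1}$ using \eqref{C:leveldecomp}. Then
\[
\ell'\ell/n=\eta_1+x_T+2n^{-1}\textstyle\sum_t e_{t-1}\xi_{t-1},\qquad \eta_1:=n^{-1}\textstyle\sum_t e_{t-1}^2=1+\op(1).
\]
The cross term is $\op(1)$ by the argument in the proof of Proposition~\ref{prop:sm_gls}, since $\xi$ is a small multiple of the near-integrated path plus the GLS line.

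For $w$, write $Z'\ell/n=Z_0'(e_{\cdot-1})/n+Z_0'\xi/n+P'\ell/n$.
\begin{itemize}
\item The first term equals $G'e_{\cdot-1}/n-\mu\,\eta'e_{\cdot-1}/n$. Its first column gives $\mathbf e_1$ up to $\Op(T^{-1/4})$ by \eqref{C:gram}--\eqref{C:cross}.
\item For $G'\xi/n$, the uniform bound $\max|\xi_t|=\Op(1)$ alone does not suffice. I would write $\xi$ as a smooth level plus $b_T$ times innovations, and use summation by parts so that each entry is $\Op(n^{-1/2})$; the $\mu\,\eta'\xi/n$ piece is handled the same way, with the factor $\|\mu\|=O(T^{1/4})$.
\item $P'\ell/n$ is $\Op(T^{-1/4})$ by part (i).
\end{itemize}
Finally, $\ell'r/n=\ell'r_0/n+\ell'(r^c+r^d)/n$. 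The leading part is $\ell'f/n$, a martingale sum with predictable $\ell$, hence $\Op(n^{-1/2})$. The $(-\theta_T)^{k+1}\eta$ piece of $r_0$ contributes $\Op(T^{-1/2})$ because $\ell_t$ and $e_{t-k-1}$ are only weakly correlated. The perturbation contributions are $\Op(T^{-1/2})$ by Cauchy--Schwarz, using $\|r-r_0\|=\Op(1)$ and $\|\ell\|=\Op(\sqrt n)$.
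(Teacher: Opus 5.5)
Your architecture matches the paper's. You use the same perturbation $Z=Z_0+P$, $r=r_0+r^c+r^d$. Part (ii) is the paper's argument verbatim: the resolvent identity and $\alpha-\alpha_0=Q^{-1}\{(Z'r-Z_0'r_0)/n-(Q-Q_0)\alpha_0\}$. In (iii) you use the same Cauchy--Schwarz bounds on $P'\ell/n$ and $\ell'(r-r_0)/n$.

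The gap is in (i), where the substance of the lemma lies. Because $Z_0=G-\eta\mu'$ carries a spike with $\|\mu\|\asymp T^{1/4}$, the claimed rates require:
\begin{itemize}
\item $\|G'P^c/n\|=\Op(T^{-1/2})$ together with $\|(P^c)'\eta/n\|=\Op(T^{-3/4})$;
\item on the response side, $\|G'r^c/n\|+\|(P^c)'f/n\|=\Op(T^{-3/4})$ and $\eta'r^c/n=\Op(T^{-1})$.
\end{itemize}
You defer exactly these estimates (``a sharper directional bound''), and the mechanisms you do commit to do not deliver them.

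First, $r_0$ is not essentially $f$. We have $r_0=f-(-\theta_T)^{k+1}\eta$ with $(-\theta_T)^{k+1}\to e^{-\kappa\delta}\neq0$. Since $p^c_{t,j}$ contains $u_{t-k-1}$, hence $e_{t-k-1}$, $P^c$ is not predictable with respect to $\eta$. Indeed the products $e_{t-k-1}p^c_{t,j}$ have mean of order $T^{-1}$, so no martingale gain is available in the spike direction. Your description of $a_t\propto X_{t-1}$ as predictable is similarly inaccurate, since it contains $e_{t-1},\dots,e_{t-k-1}$, although a second-moment bound rescues that particular piece.

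Second, the remainder $E$ in $P^{c,X}=aR'+E$ is not negligible. Its entries are of order $T^{-3/2}k^{3/2}=T^{-3/4}$, so $\|E\|_{\mathrm F}$ is of order one, exactly as for $P^{c,e}$. The crude bound then gives only
$\|Z_0'E\|/n\le(\|Z_0\|/\sqrt n)(\|E\|_{\mathrm F}/\sqrt n)=\Op(T^{-1/4})$.
So splitting $P^c$ buys nothing: $E$ needs the same directional control as $P^{c,e}$. Your fallback of projecting onto $\mu_\eta$ is linear algebra and does not supply the missing probabilistic estimate.

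The missing idea, which the paper applies to all of $P^c$ at once, is an entrywise quadratic-form bound.
\begin{itemize}
\item Writing $u_t=\sum_{b\le t}\phi_{t-b}e_b$ gives $p^c_{t,j}=\sum_b a_{tb}e_b$, with $\sum_b a_{tb}^2=\E(p^c_{t,j})^2\le C/T$ (from $\sup_t\E u_t^2\le C$) and $\max_b|a_{tb}|\le C/T$.
\item For every $0\le l\le k+1$, write $n^{-1}\sum_t e_{t-l}p^c_{t,j}=E'DE$. Then $\|D\|_{\mathrm F}^2\le C/(nT)$, because distinct rows have distinct dates $t-l$, and $|\operatorname{tr}D|\le C/T$.
\item \eqref{C:quadratic} then gives second moment at most $C/T^2$, uniformly in $(l,j)$, including the nonzero-mean case $l=k+1$.
\end{itemize}
Summing over the $k^2$ entries of $G'P^c/n$, or the $k$ entries of $(P^c)'\eta/n$, $(P^c)'f/n$ and $G'r^c/n$, yields precisely the rates above. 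No separate treatment of the near-integrated part is needed.

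Your rank-one treatment of $P^d$, and of the affine $r^d$ through $G'\mathbf 1_n/n$, $G'\tau/n$ with $\tau=(k+2,\ldots,T)'$ and the scalar sums, is sound. It is not a predictability argument, however, since $\hat\psi_0,\hat\psi_1$ depend on all innovations.

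Smaller points in (iii):
\begin{itemize}
\item The same remark applies to $\ell$, which is not predictable. So $\ell'f/n=\Op(T^{-1/2})$ requires splitting $\ell=G\mathbf e_1+\xi$ and bounding the GLS-line part of $\xi$ through $\hat\psi$ times deterministic-weight sums.
\item \eqref{C:gram} yields only $\Op(T^{-1/8})$ for $(G'G/n-I_k)\mathbf e_1$. The rate $\|w-\mathbf e_1\|=\Op(T^{-1/4})$ needs the direct column bound $\E\|(G'G/n-I_k)\mathbf e_1\|^2\le Ck/n$.
\item $\|P'\ell/n\|=\Op(T^{-1/4})$ follows from $\|P\|_{\mathrm F}=\Op(T^{1/4})$ and $\|\ell\|=\Op(\sqrt T)$, not from the statement of (i).
\end{itemize}
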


\begin{proof}
For (i), we first bound the products of $P^c$ and $r^c$ with $G$,
$\eta$ and $f$. Recall from \eqref{C:perturbations} that
$p^c_{t,j}=(c/T)\sum_{m=0}^{k-j}(-\theta_T)^m u_{t-j-m-1}$.
The identity $u_t=e_t+(\rho_T+\theta_T)X_{t-1}$ implies
$u_t=\sum_{b=1}^t\phi_{t-b}e_b$, where
\[
 \phi_l=
 \begin{cases}
  0, & l<0,\\
  1, & l=0,\\
  (\rho_T+\theta_T)\rho_T^{l-1}, & l\geq1.
 \end{cases}
\]
For fixed $j\in\{1,\ldots,k\}$, substitution gives
\[
 p^c_{t,j}=\sum_{b=1}^T a_{tb}e_b,\qquad
 a_{tb}=\frac cT\sum_{m=0}^{k-j}(-\theta_T)^m
             \phi_{t-j-m-1-b}.
\]
Here $a_{tb}$ is the coefficient of $e_b$ in $p^c_{t,j}$, with $j$
held fixed. All ensuing bounds are uniform in $j$. Since
$\sup_{1\leq t\leq T}\E u_t^2\leq C$ and
$|\rho_T+\theta_T|=O(T^{-1/2})$,
\[
\begin{aligned}
 \sum_{b=1}^T a_{tb}^2=\E(p^c_{t,j})^2&\leq C/T,\\
 \max_{1\leq b\leq T}|a_{tb}|
 &\leq(C/T)\{1+k|\rho_T+\theta_T|\}\leq C/T.
\end{aligned}
\]
For fixed $0\leq l\leq k+1$, define $D\in\mathbb R^{T\times T}$ by
\[
 D_{pq}=\frac1n\sum_{t=k+2}^T\mathbf1\{p=t-l\}\,a_{tq},
 \qquad 1\leq p,q\leq T.
\]
With $E=(e_1,\ldots,e_T)'$, the product
$n^{-1}\sum_{t=k+2}^T e_{t-l}p^c_{t,j}$ equals $E'DE$.
Distinct sample rows have distinct dates $t-l$, so
$\|D\|_{\mathrm F}^2\leq C/(nT)$ and
$|\operatorname{tr}D|\leq C/T$. The quadratic-form bound
\eqref{C:quadratic} therefore gives
\[
 \E\left|\frac1n\sum_{t=k+2}^T e_{t-l}p^c_{t,j}\right|^2
 \leq C/T^2,\qquad 0\leq l\leq k+1,\quad1\leq j\leq k.
\]
Recall that $G$ has entries $e_{t-j}$, while $\eta$ and $f$ have
entries $e_{t-k-1}$ and $e_t$, respectively. Summing the preceding
bound over the $k^2$ matrix entries or $k$ vector entries, and using
$r^c=-\theta_TP^c\mathbf e_1$, yields
\begin{equation}\label{C:localbounds}
\begin{aligned}
 \|G'P^c/n\|&=\Op(T^{-1/2}),\\
 \|(P^c)'\eta/n\|+\|(P^c)'f/n\|+\|G'r^c/n\|
 &=\Op(T^{-3/4}),\\
 \eta'r^c/n&=\Op(T^{-1}).
\end{aligned}
\end{equation}
The bound $\E(p^c_{t,j})^2\leq C/T$ also gives
$\|P^c\|_{\mathrm F}=\Op(T^{1/4})$ and $\|r^c\|=\Op(1)$.

We next obtain the corresponding bounds for
$P^d=-\hat\psi_1\mathbf1_nR'$ and $r^d$.
Lemma~\ref{C:glsinputs} and $\sum_{j=1}^k R_j^2=O(T^{3/2})$ imply
$\|P^d\|_{\mathrm F}=\Op(T^{1/4})$.
Recall from \eqref{C:responseparts} that
\[
 r^d_t=\frac cT\{\hat\psi_0+(t-1)\hat\psi_1\}
       -\hat\psi_1R_0.
\]
This expression is affine in $t$, with constant coefficient
$\Op(T^{-1/2})$ and slope $\Op(T^{-2})$; hence $\|r^d\|=\Op(1)$.
For $\tau=(k+2,\ldots,T)'$, independence of the innovations gives
\begin{equation}\label{C:lineproducts}
\begin{aligned}
 \|G'\mathbf1_n/n\|&=\Op(T^{-1/4}),&
 \|G'\tau/n\|&=\Op(T^{3/4}),\\
 |\eta'\mathbf1_n/n|+|f'\mathbf1_n/n|&=\Op(T^{-1/2}),&
 |\eta'\tau/n|+|f'\tau/n|&=\Op(T^{1/2}).
\end{aligned}
\end{equation}
Multiplication by $\hat\psi_1$, $R$, and the coefficients in $r^d_t$
establishes the counterparts of \eqref{C:localbounds} with $P^d,r^d$
in place of $P^c,r^c$. This step does not require the fitted
coefficients to be independent of the innovations.

Combining these bounds using $P=P^c+P^d$ and $r-r_0=r^c+r^d$ gives
$\|P\|_{\mathrm F}=\Op(T^{1/4})$, $\|r-r_0\|=\Op(1)$, and
\begin{equation}\label{C:perturbbounds}
\begin{aligned}
 \|G'P/n\|&=\Op(T^{-1/2}),\\
 \|P'\eta/n\|+\|P'f/n\|+\|G'(r-r_0)/n\|
 &=\Op(T^{-3/4}),\\
 \eta'(r-r_0)/n&=\Op(T^{-1}).
\end{aligned}
\end{equation}
This proves the third bound in \eqref{C:sample}. To obtain the first
two bounds there, use $Z=Z_0+P$ to write
\[
\begin{aligned}
 Z'Z-Z_0'Z_0&=Z_0'P+P'Z_0+P'P,\\
 Z'r-Z_0'r_0&=Z_0'(r-r_0)+P'r_0+P'(r-r_0).
\end{aligned}
\]
Since $Z_0=G-\eta\mu'$ and $r_0=f-(-\theta_T)^{k+1}\eta$,
\eqref{C:perturbbounds} and $\|\mu\|=O(T^{1/4})$ imply
\[
 \|Z_0'P/n\|=\Op(T^{-1/2}),\qquad
 \|Z_0'(r-r_0)/n\|+\|P'r_0/n\|=\Op(T^{-3/4}).
\]
The other terms satisfy
\[
\begin{aligned}
 \|P'P/n\|&\leq\|P\|_{\mathrm F}^2/n=\Op(T^{-1/2}),\\
 \|P'(r-r_0)/n\|&\leq\|P\|_{\mathrm F}\|r-r_0\|/n
 =\Op(T^{-3/4}).
\end{aligned}
\]
Substitution proves the bounds on $Q-Q_0$ and
$(Z'r-Z_0'r_0)/n$, the first and second bounds in
\eqref{C:sample}, respectively, completing (i).

For (ii), we first establish positive definiteness and the bounds
on $Q^{-1}$ and $Q^{-1}-Q_0^{-1}$.
Lemma~\ref{C:innovation} gives positive definiteness of $Q_0$
with probability tending to one and $\|Q_0^{-1}\|=\Op(1)$.
Together with $\|Q-Q_0\|=\op(1)$ from (i), these imply positive
definiteness of $Q$ with probability tending to one and
$\|Q^{-1}\|=\Op(1)$, the first bound in \eqref{C:transferbounds}.
The identity
\[
 Q^{-1}-Q_0^{-1}=Q^{-1}(Q_0-Q)Q_0^{-1}
\]
then gives $\|Q^{-1}-Q_0^{-1}\|=\Op(T^{-1/2})$, the second bound
in \eqref{C:transferbounds}.
For the coefficient difference, recall that
$Q\alpha=Z'r/n$ and $Q_0\alpha_0=Z_0'r_0/n$. Thus
\begin{equation}\label{C:alphadiff}
 \alpha-\alpha_0
 =Q^{-1}\{(Z'r-Z_0'r_0)/n-(Q-Q_0)\alpha_0\}
 =\Op(T^{-3/4}),
\end{equation}
where the final bound uses (i) and
$\|\alpha_0\|=\Op(T^{-1/4})$ from Lemma~\ref{C:innovation}.
This proves the third bound in \eqref{C:transferbounds}, completing
(ii). Since $\|\gamma\|=O(T^{1/4})$, it also gives
$\gamma'(\alpha-\alpha_0)=\Op(T^{-1/2})$.

For (iii), recall from \eqref{C:leveldecomp} that
$\ell=G\mathbf e_1+\xi$, where
$\xi_t=(\rho_T+\theta_T)X_{t-1}-\hat\psi_0-t\hat\psi_1$ and
$\xi=(\xi_{k+1},\ldots,\xi_{T-1})'$.
We first bound $G'\xi/n$, $\eta'\xi/n$ and $f'\xi/n$; these
products enter each of the three conclusions in (iii).
The coefficient of $e_b$ in $(\rho_T+\theta_T)X_{t-2}$ is
\[
 (\rho_T+\theta_T)\rho_T^{t-2-b}
 \mathbf1\{1\leq b\leq t-2\},
\]
and the sum of its squares is bounded uniformly in $t$.
Apply the construction of $D$ in (i) with these coefficients to
$n^{-1}\sum_{t=k+2}^T e_{t-j}(\rho_T+\theta_T)X_{t-2}$.
The resulting matrix has squared Frobenius norm $O(T^{-1})$.
Its trace is zero for $j=1$ and
$(\rho_T+\theta_T)\rho_T^{j-2}=O(T^{-1/2})$ for $j\geq2$.
For the products with $\eta$ and $f$, replace $e_{t-j}$ by
$e_{t-k-1}$ and $e_t$, respectively; the traces become
$(\rho_T+\theta_T)\rho_T^{k-1}$ and zero.
The remaining terms in $\xi_{t-1}$ are
$-\hat\psi_0-(t-1)\hat\psi_1$. Their products with $G$, $\eta$ and
$f$ are bounded by Lemma~\ref{C:glsinputs} and
\eqref{C:lineproducts}. Together with \eqref{C:quadratic}, this gives
\begin{equation}\label{C:Kproducts}
\begin{aligned}
 \mathbf e_1'G'\xi/n&=\Op(T^{-1/2}),&
 \|G'\xi/n\|&=\Op(T^{-1/4}),\\
 |\eta'\xi/n|+|f'\xi/n|&=\Op(T^{-1/2}).
\end{aligned}
\end{equation}

To obtain the expansion of $\ell'\ell/n$, use $\xi'\xi/n=x_T$
and $\|G\mathbf e_1\|^2/n=1+\Op(T^{-1/2})$ in
\[
 \ell'\ell/n
 =\|G\mathbf e_1\|^2/n+2\mathbf e_1'G'\xi/n+x_T
 =1+x_T+\op(1).
\]
This is the first relation in \eqref{C:level}, and it also implies
$\|\ell\|=\Op(\sqrt T)$.

For the bound on $w-\mathbf e_1$, independence gives
\[
 \E\|(G'G/n-I_k)\mathbf e_1\|^2\leq Ck/n,\qquad
 \mathbf e_1'G'\eta/n=\Op(T^{-1/2}).
\]
Substituting $Z=G-\eta\mu'+P$ and $\ell=G\mathbf e_1+\xi$ into
$w=Z'\ell/n$ yields
\[
 w=\frac{G'G}{n}\mathbf e_1+\frac{G'\xi}{n}
 -\mu\left(\frac{\eta'G\mathbf e_1}{n}
              +\frac{\eta'\xi}{n}\right)
 +\frac{P'\ell}{n}.
\]
Equation~\eqref{C:Kproducts}, $\|\mu\|=O(T^{1/4})$, and
\[
 \|P'\ell/n\|\leq\|P\|_{\mathrm F}\|\ell\|/n=\Op(T^{-1/4})
\]
therefore give $\|w-\mathbf e_1\|=\Op(T^{-1/4})$, the second
relation in \eqref{C:level}.

Finally, to bound $\ell'r/n$, first substitute
$\ell=G\mathbf e_1+\xi$ and
$r_0=f-(-\theta_T)^{k+1}\eta$ into $\ell'r_0/n$.
Equations~\eqref{C:cross} and \eqref{C:Kproducts} give
$\ell'r_0/n=\Op(T^{-1/2})$. Part (i) then implies
\[
 |\ell'(r-r_0)/n|
 \leq\|\ell\|\|r-r_0\|/n=\Op(T^{-1/2}).
\]
Hence $\ell'r/n=\Op(T^{-1/2})$, the third relation in
\eqref{C:level}, completing (iii).
\end{proof}

\begin{proof}[Proof of Proposition~\ref{prop:feasible_s2ar}]
We first establish the limit of $(1-\hat b(1))/\sqrt T$ in
\eqref{C:targets}. By \eqref{C:coeffmap}, it suffices to obtain
the limit of $\gamma'\hat\beta$.
Lemma~\ref{C:samplelemma} concerns the regression of $r$ on $Z$
alone; we now include the lagged level $\ell$.
Recall $Q=Z'Z/n$, $\alpha=Q^{-1}Z'r/n$ and $w=Z'\ell/n$.
For the response vector
$d=(\Delta\ty_{k+2},\ldots,\Delta\ty_T)'\in\mathbb R^n$,
\eqref{C:response} gives $d=(c/T)\ell+\theta_TZ\mathbf e_1+r$.
Thus the coefficients on $Z$ in the regression of $r$ on
$(\ell,Z)$ are $\hat\beta-\theta_T\mathbf e_1$, and the normal
equations are
\[
 \begin{pmatrix}\ell'\ell/n&w'\\w&Q\end{pmatrix}
 \begin{pmatrix}\hat a\\\hat\beta-\theta_T\mathbf e_1\end{pmatrix}
 =\begin{pmatrix}\ell'r/n\\Z'r/n\end{pmatrix},
 \qquad \hat a=\hat b_0-c/T.
\]
Solving this system gives
\begin{equation}\label{C:fwl}
 \hat\beta=\theta_T\mathbf e_1+\alpha+
 Q^{-1}w\frac{w'\alpha-\ell'r/n}{\ell'\ell/n-w'Q^{-1}w}.
\end{equation}

We show that the final term in \eqref{C:fwl}, after premultiplication
by $\gamma'$, is $\op(1)$. Lemma~\ref{C:innovation} and
Lemma~\ref{C:samplelemma}(ii)--(iii) give
\[
\begin{aligned}
 w'\alpha
 &=\mathbf e_1'\alpha_0+(w-\mathbf e_1)'\alpha_0
   +w'(\alpha-\alpha_0)=\Op(T^{-3/8}),\\
 \ell'r/n&=\Op(T^{-1/2}),\\
 w'Q^{-1}w
 &=\mathbf e_1'Q_0^{-1}\mathbf e_1+\op(1)=1+\op(1).
\end{aligned}
\]
Hence the numerator $w'\alpha-\ell'r/n$ is $\Op(T^{-3/8})$.
By Lemma~\ref{C:samplelemma}(iii), the denominator is
$x_T+\op(1)$. Lemma~\ref{C:glsinputs} gives
$x_T\Rightarrow x$ with $\Pr(x>0)=1$, so the denominator has
reciprocal $\Op(1)$. Together with the positive definiteness of $Q$
from Lemma~\ref{C:samplelemma}(ii), this also implies that
$(\ell,Z)$ has full column rank with probability tending to one.
Since
\[
 |\gamma'Q^{-1}w|
 \leq\|\gamma\|\|Q^{-1}\|\|w\|=\Op(T^{1/4}),
\]
the contribution of the final term in \eqref{C:fwl} to
$\gamma'\hat\beta$ is $\Op(T^{-1/8})=\op(1)$.
Combining the first relation in \eqref{C:innovationlimits} with
the coefficient bound in
Lemma~\ref{C:samplelemma}(ii) therefore gives
\begin{equation}\label{C:sumdirection}
 \gamma'\hat\beta=\theta_T\gamma'\mathbf e_1+
 (-\theta_T)^{k+1}\frac{\gamma'\mu}{1+\|\mu\|^2}+\op(1).
\end{equation}
For $\mu_j=(-\theta_T)^{k-j+1}$ and $\gamma=\mathbf1_k-\mu$,
geometric summation gives
\[
 (-\theta_T)^k\to e^{-\kappa\delta},\qquad
 \frac{\gamma'\mu}{1+\|\mu\|^2}\to\tanh(\kappa\delta/2).
\]
Since $\gamma'\mathbf e_1=1-(-\theta_T)^k$, substitution in
\eqref{C:sumdirection} yields
$\gamma'\hat\beta\xrightarrow{\IP}-\tanh(\kappa\delta/2)$.
Equation~\eqref{C:coeffmap} and $\sqrt T(1+\theta_T)=\delta$
then imply
\begin{equation}\label{C:coeflimit}
 \frac{1-\hat b(1)}{\sqrt T}
 =\frac1{\sqrt T}-\frac{\gamma'\hat\beta}{\delta}
 \xrightarrow{\IP}\frac1\delta\tanh(\kappa\delta/2).
\end{equation}

We next establish the residual-variance limit in \eqref{C:targets}.
Let $\operatorname{SSR}$ denote the residual sum of squares in the
autoregression with unit innovation variance. By the invertible
lag transformation and \eqref{C:response}, it equals the residual
sum of squares from regressing $r$ on $(\ell,Z)$. Successive
projection gives
\begin{equation}\label{C:ssr}
 \operatorname{SSR}/n=r'r/n-r'Z\alpha/n
 -\frac{(\ell'r/n-w'\alpha)^2}{\ell'\ell/n-w'Q^{-1}w}.
\end{equation}
We evaluate the three terms on the right-hand side in turn.
From \eqref{C:innovationarrays}, the law of large numbers and
\eqref{C:cross},
\[
 r_0'r_0/n=1+(-\theta_T)^{2k+2}+\op(1).
\]
Since $\|r-r_0\|=\Op(1)$ by Lemma~\ref{C:samplelemma}(i) and
$\|r_0\|=\Op(\sqrt T)$, this expansion also holds for $r'r/n$.

We next compare $r'Z\alpha/n$ with $r_0'Z_0\alpha_0/n$.
Equation~\eqref{C:rankone} gives
$\|Z_0'r_0/n\|=\Op(T^{1/4})$, while Lemma~\ref{C:innovation}
gives $\|\alpha_0\|=\Op(T^{-1/4})$.
Parts (i)--(ii) of Lemma~\ref{C:samplelemma} therefore imply
\[
 \left|\frac{r'Z\alpha-r_0'Z_0\alpha_0}{n}\right|
 \leq\left\|\frac{Z'r-Z_0'r_0}{n}\right\|\|\alpha\|
 +\left\|\frac{Z_0'r_0}{n}\right\|
  \|\alpha-\alpha_0\|=\op(1).
\]
The final quotient in \eqref{C:ssr} is $\Op(T^{-3/4})$ by the
numerator and denominator bounds established for \eqref{C:fwl}.
Substituting the second relation in \eqref{C:innovationlimits}
into \eqref{C:ssr} therefore gives
\[
 \operatorname{SSR}/n
 =1+\frac{(-\theta_T)^{2k+2}}{1+\|\mu\|^2}+\op(1)
 \xrightarrow{\IP}1,
\]
because $\|\mu\|^2\asymp k\to\infty$.
Restoring the innovation variance multiplies residual sums of
squares by $\sigma_e^2$ and leaves regression coefficients
unchanged. Thus
$\hat\sigma_{ek}^2
=\sigma_e^2(n/T)(\operatorname{SSR}/n)
\xrightarrow{\IP}\sigma_e^2$.
This establishes both limits in \eqref{C:targets}. Together with
\eqref{eq:s^2_ar} and \eqref{eq:lrv}, it proves
\[
 \frac{s_{AR}^2}{\omega_T^2}
 =\frac{\hat\sigma_{ek}^2}
 {\sigma_e^2\delta^2\{(1-\hat b(1))/\sqrt T\}^2}
 \xrightarrow{\IP}\coth^2(\kappa\delta/2).
\]
\end{proof}

\endgroup

\clearpage
\bibliographystyle{apacite}
\bibliography{references}
\end{document}